\numberwithin{equation}{section}
\newcommand{\Om}{\Omega}
\newcommand{\Op}{\mathrm{Op}\,}
\newcommand{\Lip}{\mathrm{Lip}}
\newtheorem{theorem}{Theorem}[section]
\newtheorem{proposition}[theorem]{Proposition}
\newtheorem{lemma}[theorem]{Lemma}
\newtheorem{corollary}[theorem]{Corollary}
\newtheorem{remark}[theorem]{Remark}
\newtheorem{remarks}[theorem]{Remark}
\newtheorem{definition}[theorem]{Definition}
\newcommand{\be}{\begin{equation}}
\newcommand{\ee}{\end{equation}}
\newcommand{\om}{\omega}
\newcommand{\e}{\varepsilon}
\newcommand{\R}{\mathbb R}
\newcommand{\C}{\mathbb C}
\newcommand{\Z}{\mathbb Z}
\newcommand{\Zd}{\mathbb Z^d}
\newcommand{\Zp}{\mathbb Z^p}
\newcommand{\N}{\mathbb N}
\newcommand{\T}{\mathbb T}
\newcommand{\s }{\sigma }
\newcommand{\ii }{{\rm i} }
\newcommand{\g }{\gamma}
\newcommand{\vphi}{\varphi }
\newcommand{\cW}{\mathcal{W}}
\newcommand{\cU}{\mathcal{U}}
\newcommand{\cO}{\mathcal{O}}
\def\togli#1{\relax}
\def\muzero{\nu^{(0)}}
\def\ba{\begin{aligned}}
\def\ea{\end{aligned}}
\def\beginm{\begin{multline}}
\def\endm{\end{multline}}
\newcommand{\Lipg}{{\rm{Lip}(\g)}}
\newcommand{\csi}{\xi}
\newcommand{\frake}{{\frak e}}
\newcommand{\calg}{{\cal G}}
\newcommand{\calw}{{\cal W}}
\newcommand{\omphi}{\omega \cdot \partial_{\vphi}}
\newcommand{\tildeom}{\tilde{\omega}}
\newcommand{\partialt}{\partial_{t}}
\newcommand{\Gradbeta}{\langle \nabla \rangle^{\beta}}
\newcommand{\cinfty}{{\cal C}^{\infty}}
\newcommand{\Omegauno}{{\Omega}_{0, \gamma}}
\DeclareMathOperator{\diag}{diag}
\newcommand{\four}[3]{\widehat{#1}_{#2 #3}}
\newcommand{\pediceSunoSdue}[2]{ _{{\cal B}(H^{ #1}, H^{ #2})} }
\newcommand{\indSsigsig}[3]{ _{{\cal M}^{#1}_{ #2, #3}}}
\newcommand{\normaSsigsig}[3]{ \| {#1} \|^{\Lip}_{{\cal M}^{s}_{ #2, #3}}}
\newcommand{\normabeta}[3]{ \| {#1} \|^{\Lip}_{{\cal W}^{s, \beta}_{ #2, #3}}}
\newcommand{\normahs}[3]{\| {#1} \|^{HS}_{ #2, #3}}
\newcommand{\bSunoSdue}[2]{{\cal B}\left({\cal H}^{#1} ,  {\cal H}^{#2}\right)}
\newcommand{\bhsSunoSdue}[2]{{\cal B}^{HS}({\cal H}^{#1}, {\cal H}^{#2})}
\newcommand{\Lipsp}{{\mathcal{L}ip}}
\begin{document}

\title{{\bf Reducibility of non-resonant transport equation on $\T^d$ with unbounded perturbations}}

\date{}

\author{Dario Bambusi, Beatrice Langella, Riccardo Montalto \footnote{Supported in part by the Swiss National Science Foundation}}

\maketitle

\begin{abstract}
We prove reducibility of a transport equation on the $d$-dimensional torus $\T^d$ with a time
quasi-periodic unbounded perturbation. As far as we know this is the
first example of a reducibility result for an equation in more than one
dimensions with unbounded perturbations. Furthermore the unperturbed
problem has eigenvalues whose differences are dense on the real axis. 
\end{abstract}

\noindent




\section{Introduction}
In this paper we obtain reducibility for a transport equation on the $d$-dimensional torus $\T^d$, $\T := \R /(2 \pi \Z)$, $d \geq 1$ of the form 
\begin{equation}\label{main equation}
\partial_t u = \Big( \nu + \e V(\omega t, x) \Big) \cdot \nabla u + \e {\cal W}(\omega t)[u], 
\end{equation}
where the frequencies $\omega \in \R^{\frak n}$, and $\nu \in \R^d$
play the role of parameters, 
$\e > 0$ is a small parameter, $V \in {\cal C}^\infty(\T^{\frak n}
\times \T^d, \R^d)$ is a real function and ${\cal W}(\vphi)$, $\vphi \in \T^{\frak n}$ is
a pseudo-differential operator of order $1 - \frak e$, for some $\frak
e > 0$. More precisely our aim is to show that for $\e$ small enough
and for {\it most} values of $\widetilde \omega =
(\omega, \nu) \in \Omega := [1, 2]^{\frak n + d}$, there exists a
bounded and invertible transformation (acting on the scale of Sobolev
spaces) which transforms the PDE \eqref{main equation} into another one
whose vector field is a time independent diagonal operator.

This is the first example of a reducibility result for unbounded
perturbations of a Hamiltonian PDE in more than one space
dimension. Furthermore, the unperturbed problem has eigenvalues whose
differences are dense on the real axis, a case which is usually
considered as particular difficult to deal with. 

Following \cite{BBM14} (see also \cite{BertiMontalto, Bam18, Bam17, Mon17a,BBHM}), the proof consists of two steps:
first we use pseudo-differential calculus in order to transform the
original system to a system with a smoothing perturbation (smoothing
theorem) and then we apply a KAM scheme in order to actually obtain
reducibility. The smoothing theorem is obtained through a variant of the theory
developed in \cite{BGMR2} and the KAM theory is a variant of the one
developed in \cite{BBHM}. The main purpose of the present paper is to show
that it is possible to glue together such tools in order to deal
with a nontrivial higher dimensional problem. The main technical
difficulty consists in
showing that the frequencies $(\omega,\nu)$ can be used to tune the
small divisors and to fulfill some second Melnikov type nonresonance
conditions.

A further novelty is that, in the equation \eqref{main equation}, it is natural to
consider perturbations $\cW$ s.t. i$\cW$ is not a symmetric operator,
so we consider the case where i$\cW$ is only symmetric hyperbolic
(namely that ${\cal W} + {\cal W}^* $ is an operator of order $0$, see
Definition \ref{stru} below) and, in order to get information on the
behavior of the solutions, we study also the case where it has some
additional structures, namely reality and reversibility (see
Definition \ref{stru} below). In this case
we also get the stability, namely all the Sobolev norms of the
solutions of the equation \eqref{main equation} stay bounded for all
times. Note that by Corollary \ref{corollario caso rev real}, in the
non-reversible case, one can construct solutions whose Sobolev
norms go to infinity.
\vskip10pt

There is a wide literature on the dynamics of time periodic or
quasiperiodic Schr\"odinger type equations, starting from the
pioneering works \cite{Bel,C87} (see also \cite{DS96}). Concerning the
problem of reducibility, we just mention \cite{Kuk93,BG01,LY10}, in
which the classical methods developed in KAM theory (in particular
\cite{K87,Kuk97}) have been adapted
and extended in order to deal with the case where the unperturbed
equation has order $n$ and the perturbation is of order $\delta \leq n
- 1$. All these results are for equations in one space dimension. 
\togli{the first 
results have been obtained as a corollary of KAM
theorems (see \cite{K87,W90,Kuk93}). Also concerning unbounded
perturbations, the first reducibility results (see \cite{BG01}) have
been obtained by applying the theory that Kuksin developed in order
to deal with KdV equation (see \cite{Kuk97},
see also \cite{KapPoe}). These results apply to the
case where the unperturbed equation has order $n$ and the perturbation
is of order $\delta < n - 1$. The extension to the case $\delta=n-1$
was then obtained in \cite{LY10}.}

The breakthrough for further developments was obtained in
\cite{BBM14}, developing ideas introduced in \cite{IPT05}. The
strategy introduced in \cite{BBM14} is based on the usage of
pseudo-differential calculus, which allows to reduce the order of the
perturbation, before applying reducibility schemes based on KAM
theory. In particular their method allows to reduce the original
problem to a problem in which the perturbation is a smoothing operator
of arbitrary order.  These ideas have been applied in the field of KAM
theory for one dimensional PDEs by several authors (see
\cite{BBM-auto,BBM-mKdV,FP15,BertiMontalto,Mon17a,Bam18,Bam17}) and
the extension to some particular models in more than one dimension has
also been obtained \cite{BGMR1,Mon17a}.

The idea of using pseudo-differential calculus in order to conjugate
the original system to another one with a smoothing perturbation has
shown to be very useful, also in control theory, see
\cite{controlloAB,controlloBHF,controlloBHM} and in the problem of
estimating the growth of the Sobolev norms
\cite{BGMR2,Mon18,Montalto2}.

Actually, the methods developed in \cite{BGMR2} are the starting point
of the present paper.

The second kind of ideas on which we rely were developed in
\cite{BBHM} (and extended in \cite{Mon17a}) where the
authors developed a reducibility scheme for smoothing perturbation of
a system whose
frequencies fulfill very bad nonresonance conditions (see
eq. \eqref{bad} below). The idea is that the smoothing character of
the nonlinearity can be used to recover a smoothness loss due to the
small denominators. In \cite{BBHM}, the method was applied to the case
where the frequencies of the linear system grow at infinity like
$\omega_j = j^{1/2}$, $j \in \N$. Here we adapt the scheme to the case
where the differences between couples of frequencies are dense on the
real axis.

We recall that previous reducibility results in higher dimensional
systems have been obtained only in cases where the frequencies of the
unperturbed system have a very particular structure \cite{EK09,GP} so
that the more or less standard second order Melnikov conditions can be
imposed {\it blockwise}.

\noindent
The paper is organized as follows. In Section \ref{sta} we state
precisely our main theorem. In Section \ref{sezione regolarizzazione
  grande} we conjugate the vector field of the equation \eqref{main
  equation} to another one which is an arbitrarily smoothing
perturbation of a diagonal operator. The reduction to constant
coefficients of the highest order is implemented in Section
\ref{sezione riduzione ordine alto} (following \cite{trasporto paper}).

In Section \ref{sezione regolarizzazione} we reduce to constant
coefficients the lower order terms up to an arbitrarily smoothing
remainder (following \cite{BGMR2}). In
the present paper, such a procedure is implemented by assuming only
that the remainders arising at each step are symmetric hyperbolic. 

\noindent
In Section \ref{sezione riducibilita grande} we perform a
KAM-reducibility scheme for vector fields which are smoothing
perturbations of a diagonal one, by imposing second order Melnikov
conditions with loss of derivatives in space (see Theorem
\ref{thm:abstract linear reducibility}). Note that the {\it final
  eigenvalues} $\lambda_j^{(\infty)}$, appearing in the definition of
the set \eqref{Cantor finale} (on which you get the diagonalization)
have an asymptotic expansion of the form
\begin{equation}
  \label{lafina}
\lambda_j^{(\infty)}= \ii \nu^{(0)} \cdot j + z(j) + O (\e \langle j \rangle^{- 2m})
\end{equation}
for some $m > 0$ large enough, where $\nu^{(0)}$ is a constant vector,
$z$ is a Fourier multiplier of order $1 - \frak e$. The
  fact that $z$ is a pseudo-differential operator is used in the
  measure estimate of Section \ref{sezione stime misura}, in
  particular, in Lemma \ref{stima singolo risonante} to obtain the
  estimate $|z(j) - z(j')| \lesssim \e |j - j'|$ for any $j, j' \in
  \Z^d$. In \eqref{lafina} all the quantities at r.h.s. also depend on the
parameters $(\omega,\nu,\varepsilon)$.

We point out that the nonresonance condition we assume is
\begin{equation}\label{bad}
\begin{aligned}
|\ii \omega \cdot l + \lambda_j^{(\infty)} - \lambda_{j'}^{(\infty)}|
\geq \frac{2\gamma}{\langle l \rangle^\tau \langle j \rangle^\tau
  \langle j'\rangle^\tau}\ ,\quad 
\forall (l , j, j') \neq (0, j, j)\ ,
\end{aligned}
\end{equation}
correspondingly the set of the parameters in which we are able to
prove reducibility is the set of the $(\omega,\nu)$
s.t. \eqref{bad} holds.

\noindent
Finally, in the appendix \ref{sezione appendice}, we collect some
properties on flows of Pseudo-PDEs, Egorov type theorems and norms
that we shall use along our reduction procedure.

\noindent
{\bf Acknowledgments.} Dario Bambusi was supported by GNFM. Riccardo Montalto was supported by the Swiss National Science Foundation, grant {\it Hamiltonian systems of infinite dimension}, project number: 200020--165537. 

\noindent
Part of this
work was done while Riccardo Montalto was visiting Milano with the support of
Universit\`a degli Studi di Milano.

\section{Statement of the main result}\label{sta}

In order to state precisely the main results of the paper, we
introduce some notations.  \\ For any $s \in \R$ we consider the
Sobolev space ${\cal H}^s(\T^d)$ endowed by the norm
$$
\| u\|_{{\cal H}^s} := \Big( \sum_{\xi \in \Z^d} \langle \xi \rangle^{2 s} |\widehat u(\xi)|^2 \Big)^{\frac12} 
$$
where $\langle \xi\rangle := (1 + |\xi|^2)^{\frac12}$ and $\widehat
u(\xi)$ are the Fourier coefficients of $u$. 
Given two Banach spaces $X, Y$ we denote by ${\cal B}(X, Y)$ the space of bounded linear operators $X \to Y$ equipped by the standard operator norm. If $X = Y$, we simply write ${\cal B}(X)$ instead of ${\cal B}(X, X)$.\\
In the following, given $\alpha,\ \beta \in \R,$  we will write $\alpha \lesssim \beta$ if there exists $C>0$ (independent of all the relevant quantities) such that $\alpha \leq C \beta.$ Sometimes we will write $\alpha \lesssim_{s_1, \dots, s_n} \beta$ if $C $ depends on parameters $s_1,\cdots, s_n,$\\
\noindent
We will use the following classes of pseudo-differential operators:
\begin{definition} \label{classe simboli}
	Let $m \in \R$. We say that a ${\cal C}^\infty$ function $a : \T^d \times \R^d \to \C$ is a symbol of class $S^m$ if for any multiindex $\alpha,\ \beta \in \N^d $ there exists a constant $C_{\alpha,\beta} > 0$ such that 
	\begin{equation}
	|\partial_x^\alpha  \partial_\xi^\beta a( x, \xi)| \leq C_{\alpha, \beta} \langle \xi \rangle^{m - |\beta|}\,, \quad \forall (x, \xi) \in \T^d \times \R^d\,. 
	\end{equation}
\end{definition}
\noindent A symbol $a$ defines univocally a linear operator $A$ acting as
$$
A[u] (x) := \sum_{\xi \in \Z^d} a( x, \xi) \widehat u(\xi) e^{\ii x \cdot \xi}\,, \qquad \forall u \in {\cal C}^\infty(\T^d)\,,
$$
that we denote by $A=\Op \big(a\big).$

\begin{definition} \label{simboli non piu' allargati}
	An operator $A$ is called a pseudo-differential operator of order $m$, namely $A\in OPS^{m},$ if there exists $a \in S^{m}$ such that
	$$
	A = Op(a).
	$$
\end{definition}
\noindent The constants $C_{\alpha, \beta}$ of Definition \ref{classe simboli} form a family of seminorms for $S^{m}$ and for $OPS^{m}.$

In the following, we will consider pseudo-differential operators  depending in a smooth way on the angles $\vphi \in \T^{\frak n}$ and in a Lipschitz way on the frequencies $\widetilde \omega = (\omega, \nu) \in \Omega_0 \subseteq \Omega$. We will denote them by $\Lipsp \left(\Om_0; \cinfty \left(\T^{\frak n}; OPS^{m}\right)\right).$ 



We finally state some properties that we will assume to hold on our system \eqref{main equation}:
\begin{definition}[Structural hypotheses]\label{stru}\null
  \goodbreak
\begin{itemize}
		\item[(i)] We say that ${\cal R} \in {\cal B}( L^2(\T^d))$ is a {\it real operator} if it maps real valued functions into real valued functions, namely
		$$u \in L^2(\T^d; \R) \Rightarrow {\cal R}[u] \in L^2(\T^d; \R).$$
		Equivalently, we can say that ${\cal R}$ is a real operator if ${\cal R} = \overline{\cal R}$ where the operator $\overline{\cal R}$ is defined by $\overline{\cal R}[u] := \overline{{\cal R}[\overline u]}$, $u \in L^2(\T^d)$. 
	      \item[(ii)] 	Let $\vphi \mapsto {\cal R}(\vphi), {\cal Q}(\vphi)$ be smooth $\vphi$-dependent families of real operators $\T^{\frak n} \to {\cal B}\big(L^2(\T^d) \big)$; we say that ${\cal R}$ is reversible if 
		\begin{equation}\label{definizione reverisbilita}
		{\cal R} (\vphi) \circ S = - S \circ {\cal R}(- \vphi)\,, \quad \forall \vphi \in \T^{\frak n},
		\end{equation}
		where $S$ is the involution defined by
		\begin{equation}\label{involuzione reversibilita}
		S : L^2(\T^d) \to L^2(\T^d)\,, \quad u(x) \mapsto u(- x).
		\end{equation}
		On the other hand, we say that ${\cal Q}$ is reversibility preserving if 
		\begin{equation}
		{\cal Q} (\vphi) \circ S = S \circ {\cal Q}(- \vphi)\,, \quad \forall \vphi \in \T^{\frak n}.
		\end{equation}
		\item [(iii)]We say that ${\cal R} \in OPS^{1}$ is symmetric hyperbolic if ${\cal R} + {\cal R}^* \in OPS^{0}.$
	\end{itemize}
\end{definition}

We will also consider the case where $V$ is even, namely one has
$$
V(-\varphi,-x)=V(\varphi,x)\ .
$$

 Define the
constant
\begin{equation}\label{definizione s0}
s_0 := \Big[\frac{\frak n}{2 } \Big] + 1\,.
\end{equation}
This paper is devoted to the proof of the following result.

\begin{theorem} \label{main teo}
	Let $V \in {\cal C}^\infty(\T^{\frak n} \times \T^d, \R^d)$,
        ${\cal W} \in \cinfty\left(\T^{\frak n}; OPS^{1
          -\frake}\right)$ and assume that ${\cal W}$ is symmetric
        hyperbolic. Then for any $s \geq s_0$, $\sigma \geq 0$ there
        exists $\e^* >0$ such that $\forall
        \e <\e^*$ there exists a closed set ${\Om_\e \subseteq
          \Om}$ of asymptotically full Lebesgue measure,
        i.e. $\lim_{\e \to 0} |\Omega \setminus \Omega_\e| = 0$, such
        that the following holds: $\forall\ \widetilde \omega = (\om,
        \nu) \in \Om_\e$ there exists a linear bounded and invertible
        operator $ {\cal U}(\vphi) = {\cal U}(\vphi; \widetilde
        \omega) \in {\cal B}({\cal H}^\sigma)$, $\vphi \in \T^{\frak
          n}$ such that, if $u$
        solves \eqref{main equation}, then $v$ defined by $u=\cU(\omega t) v $ solves
\begin{equation}
          \label{Hfin}
	\partialt v = H_\infty v,
	        \end{equation}
	where
        \begin{equation}
H_\infty={\rm diag}(\lambda^{(\infty)}_{j}(\tildeom, \e) )
        \end{equation}
Furthermore, the eigenvalues $\{\lambda^{(\infty)}_{j}(\tildeom,
\e)\}_{j \in \Z^d}$ have the structure
\begin{equation}
  \label{stuct_final}
\lambda^{(\infty)}_{j}(\tildeom, \e)=\ii \nu^{(0)}\cdot
j+z(j)+\cO(\epsilon j^{-m})\ ,
\end{equation}
with $z(.)\in S^{1-\frak e}$ which is also dependent in a Lipschitz way on
$\tilde\omega$, and $\nu^{(0)}=\nu^{(0)}(\tilde \omega) $ which
fulfills
$$
\left|\nu^{(0)}-\nu\right|\leq C\varepsilon\ .
$$
Finally, if the following assumption holds
\begin{itemize}
\item[(Sym)] \label{sym} $V$ is even and ${\cal W}$ is real and reversible ,
\end{itemize} then $ \lambda^{(\infty)}_j \in \ii \R \quad
\forall\ j \in \Z^d.$
\end{theorem}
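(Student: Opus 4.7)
The plan is to follow the two-stage program sketched in the introduction: first a regularization stage that, by a combination of a straightening diffeomorphism and pseudo-differential normal forms, conjugates the vector field in \eqref{main equation} to a smoothing perturbation of a diagonal, time-independent Fourier multiplier; and then a KAM reducibility stage that absorbs the smoothing remainder by solving, at each step, a homological equation with loss of derivatives and cumulative second-Melnikov conditions of the form \eqref{bad}. The two parts hinge on having the right symbolic calculus on $\T^{\frak n}\times \T^d$ (as developed in \cite{BGMR2}) and the reducibility scheme of \cite{BBHM}; the main work is in gluing them together while controlling symmetries and small divisors.

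For the regularization, I would start by straightening the transport operator: look for a $\vphi$-dependent diffeomorphism $x\mapsto x+\e\beta(\vphi,x)$, whose lift to $L^2$ is a bounded, invertible change of coordinates, such that conjugation reduces $(\nu+\e V(\vphi,x))\cdot\nabla$ to $\nu^{(0)}\cdot\nabla$ modulo terms of order $\le 0$; here $\nu^{(0)}=\nu+\cO(\e)$ is obtained by averaging, and $\beta$ is constructed by solving $\omega\cdot\partial_\vphi \beta = \nu\cdot\nabla\beta + V - \langle V\rangle$ under a first Melnikov condition on $(\omega,\nu)$, following \cite{BBM14} and Section 4 of the paper. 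Next, iterate an Egorov-type step: at each order $m$ one seeks a generator $a_m\in S^{-m\frake}$ and conjugates by $e^{\Op(a_m)}$ to kill the non-Fourier-multiplier part of the current pseudo-differential symbol at that order, producing a remainder of lower order. Because the starting $\cW$ is symmetric hyperbolic, and the generators are chosen skew-adjoint at leading order, one checks inductively that symmetric hyperbolicity is preserved, which is what makes the scheme work without bounded-from-below spectrum. After finitely many steps, the system takes the form $\partial_t v = (\nu^{(0)}\cdot \nabla + Z + R(\omega t))v$, with $Z$ a Fourier multiplier of order $1-\frake$ (as in \eqref{lafina}) and $R\in OPS^{-M}$ for $M$ as large as needed.

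For the KAM phase I would run the standard iterative normal form for smoothing perturbations. At the $n$-th step, with a current diagonal operator $D_n$ of eigenvalues $\lambda^{(n)}_j$ and a remainder $R_n(\vphi)$, one solves a homological equation
\begin{equation*}
\omega\cdot\partial_\vphi \Phi_n + [D_n,\Phi_n] = R_n - \langle R_n\rangle_{\rm diag},
\end{equation*}
whose Fourier-in-$\vphi$, matrix-in-$(j,j')$ coefficients are obtained by dividing by $\ii\omega\cdot l + \lambda^{(n)}_j - \lambda^{(n)}_{j'}$. On the set where \eqref{bad} holds one loses $\langle l\rangle^\tau \langle j\rangle^\tau \langle j'\rangle^\tau$ derivatives, and one recovers them by spending some of the smoothing of $R_n$; the residual smoothing is enough to produce a quadratically small $R_{n+1}$ in a suitable decreasing scale of norms. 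Convergence in $\Lipsp(\Omega_\e;\cB(\cH^\sigma))$ of the composition $\cU=\cU_1\cU_2\cdots$ gives \eqref{Hfin}, and the telescopic bounds for the corrections $\lambda^{(n+1)}_j-\lambda^{(n)}_j$ yield the asymptotic expansion \eqref{stuct_final} together with $|\nu^{(0)}-\nu|\le C\e$.

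The main obstacle, as the paper emphasizes, is the measure estimate for the second-Melnikov set. The unperturbed linear frequencies $\nu\cdot j$ have differences dense in $\R$, so the classical blockwise strategies of \cite{EK09,GP} do not apply. Instead I would exploit the two parameter families $\omega$ and $\nu$: the key computation is that along a direction of variation of $\nu$ the map $(\omega,\nu)\mapsto \ii\omega\cdot l + \lambda^{(\infty)}_j-\lambda^{(\infty)}_{j'}$ has nondegenerate derivative of order at least $|j-j'|$, because $\nu^{(0)}$ is a Lipschitz perturbation of $\nu$ and, by the pseudo-differential nature of $z$, one has $|z(j)-z(j')|\lesssim \e|j-j'|$ (the bound singled out in the introduction). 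This transversality, combined with Fubini and summing the bad strips over $l,j,j'$ against the weight $\langle l\rangle^{-\tau}\langle j\rangle^{-\tau}\langle j'\rangle^{-\tau}$ for $\tau$ large enough, gives $|\Omega\setminus\Omega_\e|\to 0$ as $\e\to 0$. Finally, under assumption (Sym) one checks at every conjugation and at every KAM step that reality and reversibility are preserved (generators reversibility-preserving, diagonal part reversible); this forces all $\lambda^{(\infty)}_j\in\ii\R$ and, via \eqref{Hfin}, the boundedness of Sobolev norms of solutions of \eqref{main equation}.
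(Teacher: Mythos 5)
Your overall route --- straightening the transport term, a pseudo-differential normal form down to a smoothing remainder, a KAM scheme with second Melnikov conditions losing powers of $\langle j\rangle,\langle j'\rangle$, and a measure estimate based on transversality in the $\nu$-directions together with the bound $|z(j)-z(j')|\lesssim\e|j-j'|$ --- is the same as the paper's. However, there is a genuine gap at the very first step. You build the diffeomorphism from a single linearized homological equation $\omega\cdot\partial_\vphi\beta=\nu\cdot\nabla\beta+V-\langle V\rangle$ and claim that the conjugation reduces $(\nu+\e V)\cdot\nabla$ to $\nu^{(0)}\cdot\nabla$ ``modulo terms of order $\le 0$''. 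That is not what happens: conjugating a transport operator by a composition operator (including the extra term $-\Phi^{-1}\,\omega\cdot\partial_\vphi\Phi$) produces again a pure transport operator, so after your one step the variable-coefficient part is still an operator of order exactly $1$, merely with coefficients of size $O(\e^{2})$. Your subsequent Egorov-type steps cannot remove it, because commutation with the constant-coefficient operator $\nu^{(0)}\cdot\nabla$ does not lower the pseudo-differential order (for the same reason your generators must have order $1-m\frake$, i.e.\ the same order as the term they are meant to kill, not $-m\frake$); hence an order-one, non-smoothing term of size $\e^{2}$ would survive into the ``regularized'' normal form and the KAM stage, which requires a remainder of order $-M$, would not apply. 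The cure is to iterate the straightening quadratically (a Moser-type scheme), which is exactly what the paper imports from \cite{trasporto paper} through Proposition \ref{esempio1}: the first-order part is reduced \emph{exactly} to the constant $\muzero(\tildeom)$, and the diophantine conditions \eqref{cantor regolarizzazione} must be imposed on $\omega\cdot l+\muzero(\tildeom)\cdot j$, i.e.\ on the final $\e$-shifted constant, not on $(\omega,\nu)$ as in your first Melnikov condition (the drift of the constant is $O(\e)$, far larger than $\gamma\langle l,j\rangle^{-\tau}$ for large $|(l,j)|$); this shift is also what forces the change of parameters $\Psi(\omega,\nu)=(\omega,\muzero(\omega,\nu))$ in the measure estimate.

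Apart from this, your outline parallels the paper's proof: the lower-order reduction via homological equations $\omega\cdot\partial_\vphi G+[H_0,G]=W-\langle W\rangle$ with the average taken over both $\vphi$ and the $x$-translations, the inductive preservation of symmetric hyperbolicity (whose role is to guarantee that the flows $e^{\pm\e G_j}$ are bounded on ${\cal H}^\sigma$), the KAM iteration in which the spatial loss $\langle j\rangle^{\tau}\langle j'\rangle^{\tau}$ is absorbed by the gap between ${\cal H}^{\sigma-m}$ and ${\cal H}^{\sigma+m}$ and the loss in $l$ by the truncations $\Pi_{N_k}$, the telescoping bounds producing $\lambda_j^{(\infty)}=\ii\muzero\cdot j+z(j)+O(\e\langle j\rangle^{-2m})$, the directional Lipschitz lower bound for the resonant sets, and the propagation of reality and reversibility yielding $\lambda_j^{(\infty)}\in\ii\R$ under (Sym). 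So the proposal becomes correct once the highest-order reduction is performed by the iterative scheme of \cite{trasporto paper} rather than by a single averaging step.
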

From the theorem above we can deduce information concerning the
dynamics of the PDE  \eqref{main equation}. 
   \begin{corollary}
    \label{corollario caso rev real}
Under the same assumptions of Theorem \ref{main teo}, but not (Sym)
only one of the following two possibilities occurs
\begin{itemize}
\item[(1)] All the solutions of \eqref{main equation} are almost
  periodic and
  \begin{equation}
    \label{stisob}
u_0 \in {\cal H}^\sigma\ \Longrightarrow\ \| u(t, \cdot) \|_{{\cal H}^\sigma} \lesssim \| u_0\|_{{\cal H}^\sigma}
    \end{equation}
  uniformly w.r. to $t \in\R$.
\item[(2)] There exist $a,C>0$ and  some initial data $u_0$ s.t.
  \begin{equation}
    \label{stisob.2}
  \| u(t, \cdot) \|_{{\cal H}^\sigma} \geq Ce^{ a|t|} \| u_0\|_{{\cal H}^\sigma}
  \end{equation}
either for $t>0$ or for $t<0$ or for $t\in\R$.
\end{itemize}
  \end{corollary}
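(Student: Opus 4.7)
The plan is to read the dichotomy off directly from the reduced diagonal dynamics provided by Theorem \ref{main teo}. Fix $\tildeom \in \Omega_\e$ and use the bounded invertible conjugation $u = \cU(\omega t) v$ to pass from \eqref{main equation} to $\partialt v = H_\infty v$ with $H_\infty = \mathrm{diag}(\lambda_j^{(\infty)})$. The solution is then explicit, $v(t, x) = \sum_{j \in \Z^d} e^{\lambda_j^{(\infty)} t}\, \widehat{v}_0(j)\, e^{\ii j \cdot x}$, and
\[
\|v(t)\|_{\cH^\sigma}^2 \;=\; \sum_{j \in \Z^d} \langle j \rangle^{2\sigma}\, e^{2\,\mathrm{Re}(\lambda_j^{(\infty)})\, t}\, |\widehat{v}_0(j)|^2.
\]
Since $\cU(\vphi)^{\pm 1}$ is bounded on $\cH^\sigma$ uniformly in $\vphi \in \T^{\frak n}$, the norms $\|u(t)\|_{\cH^\sigma}$ and $\|v(t)\|_{\cH^\sigma}$ are equivalent with constants independent of $t$.

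The whole question then reduces to whether $\mathrm{Re}(\lambda_j^{(\infty)}) = 0$ for every $j \in \Z^d$. In the affirmative case each Fourier mode evolves as a pure rotation, so $\|v(t)\|_{\cH^\sigma} = \|v_0\|_{\cH^\sigma}$, which yields \eqref{stisob} by the norm equivalence. For almost-periodicity of $t \mapsto u(t) \in \cH^\sigma$, I would first recognize $v(t)$ as the uniform-in-$t$ $\cH^\sigma$-limit of the trigonometric polynomials $\sum_{|j| \le N} e^{\ii\, \mathrm{Im}(\lambda_j^{(\infty)}) t}\, \widehat{v}_0(j)\, e^{\ii j \cdot x}$, each Bohr almost periodic with values in $\cH^\sigma$; uniform limits of almost-periodic functions are almost periodic, and multiplication by the quasi-periodic family $\cU(\omega t)$ preserves this class.

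Otherwise, I would pick some $j_0$ with $\mathrm{Re}(\lambda_{j_0}^{(\infty)}) \neq 0$ and test on the single mode $v_0(x) = e^{\ii j_0 \cdot x}$, obtaining $\|v(t)\|_{\cH^\sigma} = \langle j_0 \rangle^\sigma e^{\mathrm{Re}(\lambda_{j_0}^{(\infty)})\, t}$. Setting $u_0 := \cU(0) v_0$, the $t$-uniform norm equivalence then gives \eqref{stisob.2} with $a = |\mathrm{Re}(\lambda_{j_0}^{(\infty)})|$, for $t > 0$ or $t < 0$ according to the sign; if both signs are realized by different indices, two choices of initial data cover the $t \in \R$ case. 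There is no serious obstacle here, as Theorem \ref{main teo} has already done the hard work of producing the diagonal form \eqref{Hfin} with $t$-uniformly bounded conjugation; the only delicate point is the almost-periodicity statement in case (1), which reduces to the uniform-limit-of-almost-periodic argument just sketched.
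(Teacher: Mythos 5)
Your proposal is correct and follows essentially the same route as the paper: conjugate to the diagonal system $\partial_t v = H_\infty v$ via $\cU(\omega t)$, use the $t$-uniform equivalence of $\|u\|_{\cH^\sigma}$ and $\|v\|_{\cH^\sigma}$, and split according to whether all ${\rm Re}\,\lambda_j^{(\infty)}$ vanish, testing a single Fourier mode to produce the exponential growth otherwise. The only cosmetic differences are that the paper obtains the $t\in\R$ growth from one initial datum combining two modes with real parts of opposite sign (rather than your two separate data, which still suffices for the statement as written), and that you spell out the almost-periodicity via uniform limits of trigonometric polynomials, a detail the paper leaves implicit.
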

We remark that under the assumption (Sym) only possibility (1) occurs.

%
%
%



\section{Regularization up to smoothing remainders}\label{sezione regolarizzazione grande}

In this section we conjugate the vector field
\begin{equation}\label{campo vettoriale trasporto}
H(\vphi) := \big( \nu + \e V(\vphi, x) \big) \cdot \nabla + \e {\cal W}^{(0)}(\vphi)\,, \quad {\cal W} \in OPS^{1 - \frak e}
\end{equation}
to another one which is a smoothing perturbation of a time independent
diagonal operator.

\noindent
First remark that a time dependent linear invertible
  transformation $u=\Phi(\omega t)u'$ transforms the equation $\dot
  u=Hu$ into the equation $\dot u'=H'u'$, where
  $$
H'=\Phi_{\omega*}H:=\Phi(\vphi)^{-1}[H\Phi(\vphi)-\omega\cdot
  \partial_\vphi\Phi(\vphi)]\ .
  $$

\begin{definition}[Lipschitz norm]\label{norma generale Lip gamma}
Given a Banach space $(X, \| \cdot \|_X)$, a set $\Omega_0 \subset \Omega = [1, 2]^{\frak n + d}$, $\gamma > 0$ and a Lipschitz function $f : \Omega_0 \to X$, we denote by $\| \cdot \|^{\Lipg}_X$ the Lipschitz norm defined by
\begin{equation}\label{norma generale Lip gamma}
\begin{aligned}
& \|f \|^{\Lipg}_X := \|f \|^{\rm sup}_X + \gamma \|f \|^{\rm lip}_X\,, \\
& \|f \|^{\rm sup} := \sup_{\widetilde \omega \in \Omega_0} \|f(\widetilde \omega) \|_X\,, \quad \|f \|^{\rm lip}_X := \sup_{\begin{subarray}{c}
\widetilde \omega_1, \widetilde \omega_2 \in \Omega_0 \\
\widetilde \omega_1 \neq \widetilde \omega_2
\end{subarray}} \dfrac{\|f(\widetilde \omega_1) - f(\widetilde \omega_2)\|_X}{|\widetilde \omega_1 - \widetilde \omega_2|}\,. 
\end{aligned}
\end{equation}
In the case where $\gamma = 1$, we simply write $\| \cdot \|_X^{\rm
  Lip}$ for $\| \cdot\|_X^{ \rm Lip(1)}$. If $X = \C$ we write $|
\cdot |^\Lipg, |\cdot|^{\rm sup}, |\cdot |^{\rm lip}$ for $\| \cdot
\|_\C^{\Lipg}, \| \cdot \|_\C^{\rm sup}, \| \cdot \|_\C^{\rm
  lip}$. 
\end{definition}

\subsection{Reduction to constant coefficients of the highest order term}\label{sezione riduzione ordine alto}

We consider a diffeomorphism of the torus $\T^{d}$ of the form 
$$
\T^{d} \to \T^{d}, \quad x \mapsto x +  \alpha(\vphi, x)
$$
where $\alpha \in {\cal C}^\infty(\T^{\frak n} \times \T^d, \R^d)$ is a function to be determined. It is well known that for $\| \alpha \|_{C^1}$ small enough such a diffeomorphism is invertible and its inverse has the form 
$$
\T^{d} \to \T^{d}, \quad y \mapsto y + \widetilde \alpha(\vphi, y)
$$
with $\widetilde \alpha \in {\cal C}^\infty(\T^{\frak n} \times \T^d, \R^d)$. We then consider the transformation 
\begin{equation}\label{definizione cal A}
{\cal A}(\vphi) : u(x) \mapsto u(x + \alpha(\vphi, x))\,, \quad \vphi \in \T^{\frak n}
\end{equation}
whose inverse is given by 
\begin{equation}\label{definizione cal A inverso}
{\cal A}(\vphi)^{- 1} : u(y) \mapsto u(y + \widetilde \alpha(\vphi, y))\,, \quad \vphi \in \T^{\frak n}\,. 
\end{equation}
A direct calculation shows that the {\it quasi-periodic push-forward} of the vector field $H(\vphi)$ is given by 
\begin{align}
H^{(0)}(\vphi)  = {\cal A}_{\omega *} H(\vphi) = V^{(0)}(\vphi, x) \cdot \nabla + \e {\cal W}^{(0)}(\vphi) \label{H0 ordine alto}
\end{align}
where 
\begin{equation}\label{definizioni V (0) cal W (0)}
\begin{aligned}
V^{(0)}(\vphi, x) & :={\cal A}(\vphi)^{- 1}\Big( \omega \cdot \partial_\vphi \alpha + \nu +\e V +  \big( \nu + \e V \big) \cdot \nabla \alpha \Big) \\
{\cal W}^{(0)}(\vphi) & := {\cal A}(\vphi)^{- 1} {\cal W}(\vphi) {\cal A}(\vphi)\,. 
\end{aligned}
\end{equation}
The following proposition is a direct consequence of Proposition 3.4
in \cite{trasporto paper} to which we refer for the proof. It allows
to choose the function $\alpha(\vphi, x)$ so that the highest order
term $V^{(0)}(\vphi , x) \cdot \nabla$ in \eqref{H0 ordine alto} is
reduced to constant coefficients.
\begin{proposition}\label{esempio1}
Let $\gamma \in (0, 1)$ and $\tau > \frak n + d$. There exists a
Lipschitz  function $\muzero : \Omega \to \R^{d}, {\tildeom} \mapsto
{\muzero(\tildeom)}$ (where we recall that $\Omega := [1, 2]^{\frak n + d}$) such that
\begin{equation}\label{tordo4b}
\qquad \lvert  \muzero ({\tildeom})- \nu  \rvert^{\Lipg}\lesssim \e , \,
\end{equation}
and, in the set
\begin{equation}\label{cantor regolarizzazione}
\Omega_{0, \gamma}:= \Big\{{\tildeom} \in \Omega: \; |\omega \cdot l
+\muzero ({\tildeom})\cdot j  |>\frac{\g }{\langle l
  ,j\rangle^{\tau}}\,,\;\forall (l ,j)\in \Z^{\nu+d} \setminus
\{0\}\Big\} \ ,
\end{equation}
the following holds. There exists a map
\begin{equation}\label{perunpugnodidollarib}
\alpha:  \T^{\nu+d}\times \Omega_{0, \gamma}\to \R^d \,,  
\end{equation}
so that the map $\T^{\frak n + d} \to \T^{\frak n + d}$, $(\vphi, x)
\mapsto (\vphi, x + \alpha(\vphi, x))$ is a diffeomorphism with
inverse given by $(\vphi, y) \mapsto (\vphi, y + \widetilde
\alpha(\vphi, y))$, furthermore
\begin{equation}\label{stima widetilde alpha}
\|\alpha \|^{\Lipg}_{s} \lesssim_s  \e \g^{-1},\quad \| \widetilde \alpha\|^{\Lipg}_s \lesssim_s \e \gamma^{- 1}\,, \quad \forall s \geq 0\,. 
\end{equation}
Moreover for any ${\tildeom} \in \Omega_{0, \gamma}$ $V^{(0)}$
reduces to a constant (as a function of $x$ and $\varphi$), namely
\begin{equation}\label{tordo6b}
\begin{aligned}
{V^{(0)}} ={\cal A}^{- 1}(\vphi)\Big( \omega \cdot \partial_\vphi \alpha + \nu +\e V +  \big( \nu + \e V \big) \cdot \nabla \alpha \Big)=  \muzero ({\tildeom}). 
\end{aligned}
\end{equation}
Finally, if $V$ is even, then $\alpha$ and $\widetilde \alpha$ are odd.
\end{proposition}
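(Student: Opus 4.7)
The plan is to interpret the statement as a straightening theorem for the quasi-periodic vector field $\omega \cdot \partial_\vphi + (\nu + \e V(\vphi, x)) \cdot \nabla$ on $\T^{\frak n + d}$. Since ${\cal A}(\vphi)$ acts trivially on constants, the requirement $V^{(0)} = \muzero$ in \eqref{tordo6b} is equivalent to the functional equation
\begin{equation*}
\omega \cdot \partial_\vphi \alpha(\vphi, x) + \big(\nu + \e V(\vphi, x)\big) \cdot \big(\mathrm{Id} + \nabla_x \alpha(\vphi, x)\big) = \muzero(\tildeom),
\end{equation*}
to be solved simultaneously for the vector-valued function $\alpha$ and the constant vector $\muzero$, which plays the role of the spacetime average of the rectified vector field.

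The natural approach is a KAM-type iteration. I would set $\alpha_0 \equiv 0$ and $\muzero_0 := \nu + \e \widehat V(0, 0)$, so that the initial residual has zero mean, and, given $(\alpha_k, \muzero_k)$ with residual $f_k$, solve the approximate homological equation
\begin{equation*}
\big(\omega \cdot \partial_\vphi + \muzero_k \cdot \nabla_x\big) h_k = -(f_k - \langle f_k\rangle), \qquad \muzero_{k+1} := \muzero_k + \langle f_k\rangle,
\end{equation*}
in Fourier series via the Diophantine lower bound $|\omega \cdot l + \muzero_k \cdot j| \geq \gamma \langle l, j\rangle^{-\tau}$. Each step costs $\tau$ derivatives, compensated by the explicit $\e$-smallness of the correction and by the identity $f_{k+1} = ((\nu + \e V - \muzero_k) \cdot \nabla) h_k$, giving $\|f_{k+1}\|_s \lesssim \e \gamma^{-1} \|f_k\|_{s + \tau + 1}$. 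Tame Sobolev estimates and Moser interpolation then yield convergence provided $\e \gamma^{-1}$ is sufficiently small. The sums $\alpha := \sum h_k$ and $\muzero := \lim_k \muzero_k$ satisfy \eqref{tordo4b}--\eqref{stima widetilde alpha}; the map $(\vphi, x) \mapsto (\vphi, x + \alpha(\vphi, x))$ is a diffeomorphism because $\| \alpha\|_{{\cal C}^1} = O(\e \gamma^{-1})$ is small, and the bound on $\widetilde \alpha$ follows from the inverse function theorem applied fiberwise in $\vphi$.

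Lipschitz dependence on $\tildeom$ is obtained by differentiating the iteration: the $\tildeom$-incremental quotient of $h_k$ satisfies an analogous homological equation whose inhomogeneity contains the incremental quotient of the divisors $\omega \cdot l + \muzero_k \cdot j$, producing at most one extra factor $\gamma^{-1}$, which is absorbed by the weight $\gamma$ in the definition of $\| \cdot \|_s^{\Lipg}$ recorded in \eqref{norma generale Lip gamma}. The parity statement is preserved throughout the iteration: if $V$ is even, then the data of each homological equation is odd in $(\vphi, x)$, and the unique zero-mean solution with odd data is itself odd, so every $\alpha_k$, and hence the limit $\alpha$ together with its inverse $\widetilde \alpha$, is odd.

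The main obstacle I anticipate is that \eqref{cantor regolarizzazione} is defined through the implicitly determined function $\muzero(\tildeom)$ rather than through $\nu$ itself, which makes non-emptiness of $\Omega_{0, \gamma}$ a priori unclear. However, \eqref{tordo4b} shows that $(\omega, \nu) \mapsto (\omega, \muzero(\tildeom))$ is an $O(\e)$ Lipschitz perturbation of the identity on $\Omega$, hence bi-Lipschitz for $\e$ small; for each fixed $(l, j) \neq 0$ the derivative of $\omega \cdot l + \muzero \cdot j$ with respect to the largest component of $(\omega, \nu)$ is then bounded below by a constant times $\langle l, j\rangle$, so the corresponding resonant set has measure $\lesssim \gamma \langle l, j\rangle^{-\tau - 1}$; summing over $(l, j) \neq 0$ and using $\tau > \frak n + d$ shows that $\Omega_{0, \gamma}$ has nearly full measure, as needed.
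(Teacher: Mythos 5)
The paper does not prove Proposition \ref{esempio1} directly: it is quoted as a direct consequence of Proposition 3.4 of \cite{trasporto paper}, which is established there by a quadratic, Moser-type conjugation scheme. Your reduction of \eqref{tordo6b} to the equation $\omega\cdot\partial_\vphi\alpha+(\nu+\e V)\cdot\nabla\alpha+\nu+\e V=\muzero$ is correct, but the iteration you propose does not converge in the ${\cal C}^\infty$/Sobolev category, and this is the heart of the matter. Your scheme is linear, not Newton-type: in the residual recursion $f_{k+1}=\big((\nu+\e V-\muzero_k)\cdot\nabla\big)h_k$ the prefactor $\nu+\e V-\muzero_k$ stays of fixed size $O(\e)$ forever (it is not of size $\|f_k\|$), so each step gains only the constant factor $\e\gamma^{-1}$ while losing $\tau+1$ derivatives. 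Iterating gives $\|f_k\|_s\lesssim (C\e\gamma^{-1})^k\,\e\,\|V\|_{s+k(\tau+1)+C}$, and for $V$ merely ${\cal C}^\infty$ the norms $\|V\|_{s+k(\tau+1)}$ may grow faster than any geometric sequence, so the product need not tend to zero. Interpolation cannot repair this: any two-norm closure of your scheme requires an estimate of the high norm of $f_{k+1}$ in terms of a strictly higher norm of $f_k$, so nothing closes. This is exactly why one needs either Moser's analytic-approximation argument or a quadratic KAM/Nash--Moser scheme (conjugating at each step by the diffeomorphism generated by $h_k$, so that the new error is quadratic in the old one, together with truncated homological equations), which is what \cite{trasporto paper} does; your appeal to ``tame Sobolev estimates and Moser interpolation'' is precisely the missing step, not a routine one.

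A second, related gap concerns the small divisors: you invert $\omega\cdot\partial_\vphi+\muzero_k\cdot\nabla$ using the bound $|\omega\cdot l+\muzero_k\cdot j|\geq\gamma\langle l,j\rangle^{-\tau}$ for the intermediate constants $\muzero_k$, whereas the set \eqref{cantor regolarizzazione} of the statement only provides this bound for the final $\muzero(\tildeom)$, which is not available before the iteration has converged. The closeness $|\muzero_k-\muzero|\lesssim\e\delta_k$ transfers the bound only for $|j|$ below a $k$-dependent threshold, so one needs truncated equations in the Fourier modes $|(l,j)|\leq N_k$ together with quantitative convergence of $\muzero_k$ --- again an ingredient of the quoted quadratic scheme that your sketch omits. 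The remaining points (invertibility of $x\mapsto x+\alpha(\vphi,x)$ and the bound on $\widetilde\alpha$, the parity statement, the Lipschitz-in-$\tildeom$ estimates absorbed by the weight $\gamma$, and the measure estimate, which in the paper is Remark \ref{misura Omega 0 gamma} rather than part of the proposition) are reasonable in outline, but they all rest on the convergent construction of $\alpha$ and $\muzero$ that your argument does not actually deliver.
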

\begin{remark}\label{misura Omega 0 gamma}
By standard arguments one has $|\Omega \setminus \Omega_{0, \gamma}|
\lesssim \gamma$. More precisely, on the one side one has that
  vectors which are Diophantine with constant $\gamma$ have complement with
   measure of order $\gamma$, and on the other, Lipschitz maps
   preserve the order of magnitude of the measure of sets.
\end{remark}
\begin{remark}\label{stime trasformazione cal A}
Using the definitions \eqref{definizione cal A}, \eqref{definizione cal A inverso} and the estimates \eqref{perunpugnodidollarib}, \eqref{stima widetilde alpha}, a direct calculation shows that the map $\T^{\frak n} \mapsto {\cal B}({\cal H}^s)$, $\vphi \mapsto {\cal A}(\vphi)^{\pm 1}$ is bounded for any $s \geq 0$ and 
$$
\begin{aligned}
& \sup_{\vphi \in \T^{\frak n}}\|{\cal A}(\vphi)^{\pm 1} - {\rm Id} \|_{{\cal B}({\cal H}^{s + 1}, {\cal H}^s)} \lesssim_s \e \gamma^{- 1}, \quad \forall s \geq 0\,, \\
& \sup_{\vphi \in \T^{\frak n}} \| \partial_\vphi^\alpha {\cal A}(\vphi)^{\pm 1} \|_{{\cal B}({\cal H}^{s + |\alpha|}, {\cal H}^s)} \lesssim_{s, \alpha } \e \gamma^{- 1}, \quad \forall s \geq 0, \quad \forall \alpha \in \N^{\frak n}\,. 
\end{aligned}
$$
\end{remark}
Recalling \eqref{H0 ordine alto}, \eqref{definizioni V (0) cal W (0)}
and applying Proposition \ref{esempio1} one gets that the vector field
$H^{(0)}(\vphi)$ takes the form
\begin{equation}\label{ordine alto coefficienti costanti}
H^{(0)}(\vphi) = \muzero  \cdot \nabla + \e {\cal W}^{(0)}(\vphi)
\end{equation}

We now study the properties of $\cW^{(0)}$.

\begin{lemma}\label{lemma cal W (0) simbolo}
One has that ${\cal W}^{(0)} \in {\cal L}ip\Big( \Omega_{0, \gamma},
{\cal C}^\infty\Big( \T^{\frak n}, OPS^{1 - \frak e}\Big)
\Big)$. Moreover ${\cal W}^{(0)}$ is symmetric
hyperbolic. Furthermore, if $V$ is even and $\cW$ real and reversible, then
${\cal W}^{(0)}$ is real and reversible.
\end{lemma}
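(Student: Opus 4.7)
The plan is to decompose the statement into its three assertions (pseudo-differential class with Lipschitz parameter dependence; symmetric hyperbolicity; reality and reversibility under (Sym)) and treat each by combining the explicit conjugation formula ${\cal W}^{(0)}(\vphi) = {\cal A}(\vphi)^{-1} {\cal W}(\vphi) {\cal A}(\vphi)$ with the information on ${\cal A}(\vphi)$ already recorded in Proposition \ref{esempio1} and Remark \ref{stime trasformazione cal A}. For the first assertion I would appeal to the Egorov-type theorem announced for the appendix: conjugation of a $\vphi$-smooth family in $OPS^{1-\frak e}$ by the torus diffeomorphism $x \mapsto x + \alpha(\vphi,x)$ produces a $\vphi$-smooth family of the same order, with seminorms controlled by high Sobolev norms of $\alpha$ and $\widetilde\alpha$. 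Since ${\cal W}$ does not depend on the external parameters and the bounds \eqref{stima widetilde alpha} are Lipschitz in $\widetilde\omega \in \Omega_{0,\gamma}$, propagating these bounds through the Egorov expansion yields ${\cal W}^{(0)} \in \Lipsp(\Omega_{0,\gamma}; \cinfty(\T^{\frak n}; OPS^{1-\frak e}))$.

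For symmetric hyperbolicity, a change-of-variables computation gives the $L^2$-adjoint of ${\cal A}(\vphi)$ in closed form: ${\cal A}(\vphi)^* = M(\vphi) \cdot {\cal A}(\vphi)^{-1}$, where $M(\vphi, y) := |\det(I + \nabla_y \widetilde\alpha(\vphi, y))|$ is, by \eqref{stima widetilde alpha}, a smooth positive multiplication operator close to $1$, hence an element of $OPS^0$. A short algebraic manipulation then yields
\[
{\cal W}^{(0)} + ({\cal W}^{(0)})^* = {\cal A}^{-1}({\cal W} + {\cal W}^*){\cal A} + \bigl[M,\, {\cal A}^{-1} {\cal W}^* {\cal A}\bigr] M^{-1}.
\]
The hypothesis ${\cal W} + {\cal W}^* \in OPS^0$ combined with Egorov places the first summand in $OPS^0$; the commutator of an $OPS^0$-symbol with an $OPS^{1-\frak e}$-operator gains one derivative, so the second summand lies in $OPS^{-\frak e} \subset OPS^0$.

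Under (Sym), reality of ${\cal W}^{(0)}$ reduces to the reality of ${\cal A}$, which is immediate from $\alpha$ being real-valued. For reversibility, Proposition \ref{esempio1} guarantees that when $V$ is even one has $\alpha(-\vphi,-x) = -\alpha(\vphi,x)$, and a direct computation from the definitions of ${\cal A}$ and $S$ shows this is equivalent to ${\cal A}(\vphi) \circ S = S \circ {\cal A}(-\vphi)$, i.e.\ to ${\cal A}$ being reversibility preserving; the same identity then holds for ${\cal A}^{-1}$. Chaining these with the reversibility ${\cal W}(\vphi) \circ S = -S \circ {\cal W}(-\vphi)$ of ${\cal W}$ gives immediately ${\cal W}^{(0)}(\vphi) \circ S = -S \circ {\cal W}^{(0)}(-\vphi)$.

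The main technical obstacle is the Egorov step in the first paragraph: one needs a version of the theorem tracking pseudo-differential seminorm control \emph{and} Lipschitz dependence on $\widetilde\omega \in \Omega_{0,\gamma}$ simultaneously, for a symbol of non-integer order $1-\frak e$. Once this is in place via the appendix, all remaining steps reduce to standard algebra of pseudo-differential operators together with the bookkeeping of the symmetries of $\alpha$.
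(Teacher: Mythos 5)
Your proposal is correct and follows essentially the same route as the paper: Egorov's theorem (applied to ${\cal A}^{\pm1}$ viewed as a flow of a transport equation) for the class $OPS^{1-\frak e}$ with Lipschitz parameter dependence, the explicit adjoint formula ${\cal A}^* = \det(\Id+\nabla\widetilde\alpha)\,{\cal A}^{-1}$ plus a commutator gaining one order for symmetric hyperbolicity, and the oddness of $\alpha,\widetilde\alpha$ from Proposition \ref{esempio1} for reality and reversibility. The only difference is a harmless algebraic rearrangement in the hyperbolicity step: you commute the Jacobian multiplier $M$ past the already conjugated operator ${\cal A}^{-1}{\cal W}^*{\cal A}$, whereas the paper commutes ${\cal W}^*$ past $\det(\Id+\nabla\alpha)$ before conjugating and then uses the reciprocal relation between the two Jacobians; the two computations are equivalent.
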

\begin{proof}
Let $\Phi(\vphi) := {\cal A}(\vphi)^{- 1}$, i.e. $\Phi(\vphi)[u](y) = u (y + \widetilde \alpha(\vphi, y))$ and for any $\tau \in [0, 1]$ we consider $\Phi(\tau, \vphi)[u](y) := u(y + \tau \widetilde \alpha(\vphi, y))$. Let $\psi(\tau, \vphi, y) := \Phi(\tau, \vphi)[u](y)$, then $\psi(0, \vphi, y) = u(y)$ and 
\begin{equation}\label{uffa uffa 0}
\partial_\tau \psi = a(\tau, \vphi, y) \cdot \nabla \psi\,, \quad a(\tau, \vphi, y) := \big({\rm Id} + \tau \nabla \widetilde \alpha(\vphi, y) \big)^{- 1} \widetilde \alpha(\vphi, y)\,.
\end{equation}
Then by the Egorov theorem (see Theorem A.0.9 in \cite{Taylor}) it follows that ${\cal W}^{(0)} \in {\cal L}ip\Big( \Omega_{0, \gamma},
{\cal C}^\infty\Big( \T^{\frak n}, OPS^{1 - \frak e}\Big)
\Big)$.

\noindent
We now show that ${\cal W}^{(0)}$ is symmetric
hyperbolic. Since by \eqref{perunpugnodidollarib}, \eqref{stima widetilde alpha} the functions $\alpha, \widetilde \alpha = O(\e \gamma^{- 1})$ one has that 
$$
{\rm det}\big( {\rm Id} + \nabla \alpha \big)\,,\, {\rm det}\big( {\rm Id} + \nabla \widetilde \alpha \big) > 0
$$
for $\e \gamma^{- 1}$ small enough. Moreover, using that $y \mapsto y + \widetilde \alpha(y)$ is the inverse diffeomorphism of $x \mapsto x + \alpha(x)$ one gets that 
\begin{equation}\label{che palle 2}
{\rm det}\big({\rm Id} + \nabla \widetilde \alpha(y) \big) = \dfrac{1}{{\rm det}\big({\rm Id} + \nabla \alpha \big)|_{x = y + \widetilde \alpha(y)}}\,. 
\end{equation}
A direct calculation shows that 
$$
{\cal A}^* = {\rm det}\big( {\rm Id} + \nabla \widetilde \alpha \big) {\cal A}^{- 1}\,, \quad ({\cal A}^{- 1})^* = {\rm det}\big( {\rm Id} + \nabla \alpha \big) {\cal A}\,.
$$
Then 
\begin{align}
({\cal W}^{(0)})^* & = ({\cal A}^{- 1} {\cal W} {\cal A})^* = {\cal A}^* {\cal W}^* ({\cal A}^{- 1})^* \nonumber\\
& = {\rm det}\big( {\rm Id} + \nabla \widetilde \alpha \big) {\cal A}^{- 1} {\cal W}^* {\rm det}\big( {\rm Id} + \nabla \alpha \big) {\cal A} \nonumber\\
& = {\rm det}\big( {\rm Id} + \nabla \widetilde \alpha \big) {\cal A}^{- 1} {\rm det}\big( {\rm Id} + \nabla \alpha \big)  {\cal W}^* {\cal A} \nonumber\\
& \quad + {\rm det}\big( {\rm Id} + \nabla \widetilde \alpha \big) {\cal A}^{- 1} [{\cal W}^*\,,\, {\rm det}\big( {\rm Id} + \nabla \alpha \big)] {\cal A}\,.  \label{che palle 0}
\end{align}
Since ${\cal W}^* \in OPS^{1 - \frak e}$ one has that the commutator $[{\cal W}^*\,,\, {\rm det}\big( {\rm Id} + \nabla \alpha \big)] \in OPS^{- \frak e} \subset OPS^0$. Using that ${\cal A}(\vphi)^{- 1} = \Phi(\vphi)$ is the time 1 flow map of the PDE \eqref{uffa uffa 0}, by applying the Egorov Theorem A.0.9 in \cite{Taylor}, one gets that ${\rm det}\big( {\rm Id} + \nabla \widetilde \alpha \big) {\cal A}^{- 1} [{\cal W}^*\,,\, {\rm det}\big( {\rm Id} + \nabla \alpha \big)] {\cal A} \in OPS^0$. hence  
\begin{align}
({\cal W}^{(0)})^* & = {\rm det}\big( {\rm Id} + \nabla \widetilde \alpha \big) {\cal A}^{- 1} {\rm det}\big( {\rm Id} + \nabla \alpha \big)  {\cal W}^* {\cal A}  + OPS^0 \nonumber\\
& = {\rm det}\big( {\rm Id} + \nabla \widetilde \alpha \big)  {\rm det}\big( {\rm Id} + \nabla \alpha \big)|_{x = y + \widetilde \alpha(y)} {\cal A}^{- 1}  {\cal W}^* {\cal A}  + OPS^0   \nonumber\\
&\stackrel{\eqref{che palle 2}}{=}{\cal A}^{- 1} {\cal W}^* {\cal A} + OPS^0\,. \label{che palle 3}
\end{align}
Finally, using that ${\cal W}$ is symmetric hyperbolic, i.e. ${\cal W} + {\cal W}^* \in OPS^0$, by \eqref{che palle 2} and applying again the Egorov Theorem A.0.9 in \cite{Taylor} to deduce that ${\cal A}^{- 1}({\cal W} + {\cal W}^*){\cal A} \in OPS^0$ one gets that ${\cal W}^{(0)} + ({\cal W}^{(0)})^* \in OPS^0$. In the real and reversible case, one has that ${\cal W}$ is a reversible operator. By Proposition \ref{esempio1}, one has that $\alpha, \widetilde \alpha$ are odd functions, implying that ${\cal A}$, ${\cal A}^{- 1}$ are reversibility preserving operators. Hence one concludes that ${\cal W}^{(0)} = {\cal A}^{- 1} {\cal W} {\cal A}$ is a reversible operator. 
\end{proof}

\subsection{Reduction of the lower order terms}\label{sezione regolarizzazione}

\noindent The reduction of the lower order terms is contained in the following result, which is an adaptation of Theorem 3.8 of \cite{BGMR2} to a
symmetric hyperbolic context.

\begin{theorem} \label{regula}
$\forall\ M>0$ there exists a sequence of symmetric hyperbolic maps
  $\{{ G}_j(\vphi, \tildeom)\}_{j=1}^{M}$ with ${ G}_j(\vphi,
  \tildeom) \in {\Lipsp} \left( \Omegauno; \cinfty\left( \T^{{\frak
      n}}; OPS^{1-j\frake}\right)\right)$ such that the change of
  variables $\psi = e^{- \e G_1(\vphi, \tildeom)} \cdots \ e^{- \e
    G_M(\vphi, \tildeom)}\phi$ transforms $H_0 + \e {\cal
    W}^{(0)}(\vphi)$ into the operator
	\begin{equation} \label{regolare}
		H^{(M)}(\vphi) = H_0 + \e Z^{(M)}(\tildeom) + \e {\cal W}^{(M)}(\vphi, \tildeom),
	\end{equation}
	where $Z^{(M)}$
	is a time independent Fourier multiplier, which in particular
        fulfills
	\begin{equation} \label{z diag}
	[Z^{(M)}, K_m] = 0, \quad m= 1 \dots, d,
	\end{equation}
        and
        \begin{equation} \label{regolarita z w}
	\begin{gathered}
	Z^{(M)}(\tildeom) \in \Lipsp \left(\Omegauno; OPS^{1-\frake}\right),\\
	{\cal W}^{(M)}(\vphi, \tildeom) \in \Lipsp \left(\Omegauno; {\cal C}^{\infty}(\T^{{\frak n}}; OPS^{1-M\frake})\right).
	\end{gathered}
	\end{equation}
	Furthermore, 	if ${\cal W}^{(0)}$ is real and reversible, then $Z^{(M)},\ {\cal W}^{(M)}$ are real and reversible too.
\end{theorem}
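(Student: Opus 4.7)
The strategy is iterative: at each step $j=1,\ldots,M$ I conjugate by $\Phi_j(\vphi):=e^{-\e G_j(\vphi)}$, producing
\begin{equation}
H^{(j)}(\vphi) = H_0 + \e Z^{(j)}(\tildeom) + \e \cW^{(j)}(\vphi,\tildeom),
\end{equation}
with $Z^{(j)} \in OPS^{1-\frake}$ a time-independent Fourier multiplier and $\cW^{(j)} \in OPS^{1-(j+1)\frake}$ symmetric hyperbolic, starting from $Z^{(0)}:=0$ and $\cW^{(0)}$ as given by Lemma \ref{lemma cal W (0) simbolo}.

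Given $H^{(j-1)}$, I decompose its symbolic remainder as $\cW^{(j-1)} = [\cW^{(j-1)}] + \cW^{(j-1)}_{\perp}$, where $[\cW^{(j-1)}]$ is the component independent of both $\vphi$ and $x$ (in double Fourier, the restriction to $l=0$ and $\eta=\xi$). I set $Z^{(j)} := Z^{(j-1)} + [\cW^{(j-1)}]$, which is still a Fourier multiplier (hence commutes with $K_m$) of order $1-\frake$, and I solve the homological equation
\begin{equation}\label{homeq_plan}
\omega\cdot\partial_\vphi G_j + [G_j, H_0] = -\cW^{(j-1)}_{\perp}
\end{equation}
on the set $\Omegauno$. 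Since $H_0 = \muzero\cdot\nabla$ is a Fourier multiplier, \eqref{homeq_plan} reads in Fourier
\begin{equation}
\ii\bigl(\omega\cdot l + \muzero\cdot(\xi-\eta)\bigr)\,\widehat G_j(l,\eta,\xi) = -\widehat{\cW^{(j-1)}_{\perp}}(l,\eta,\xi),
\end{equation}
and the Diophantine condition defining $\Omegauno$ bounds the denominator below by $\gamma/\langle l,\xi-\eta\rangle^\tau$ for all $(l,\xi-\eta)\neq 0$. Division produces $G_j$ whose symbol lies in the same $S^{1-j\frake}$ class as $\cW^{(j-1)}_{\perp}$ (the $\xi$-decay is unaffected), at the cost of $\tau$ derivatives in $(\vphi,x)$ and a factor $\gamma^{-1}$; Lipschitz dependence on $\tildeom$ comes from differentiating the small-divisor identity, losing a further $\gamma^{-1}$ and more derivatives, yielding $G_j \in \Lipsp(\Omegauno; \cinfty(\T^{\frak n}; OPS^{1-j\frake}))$.

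Because the right-hand side of \eqref{homeq_plan} is symmetric hyperbolic and $H_0$ is skew-adjoint, the solution satisfies $G_j + G_j^* \in OPS^0$, so $\Phi_j = e^{-\e G_j}$ is bounded and invertible on every $\cH^s$. A Duhamel/Lie-series expansion of the quasi-periodic push-forward gives
\begin{equation}
H^{(j)} = \Phi_j^{-1}H^{(j-1)}\Phi_j - \Phi_j^{-1}\omega\cdot\partial_\vphi\Phi_j = H^{(j-1)} + \e\bigl([G_j, H_0] + \omega\cdot\partial_\vphi G_j\bigr) + \e^2\,{\cal E}_j,
\end{equation}
and substituting \eqref{homeq_plan} the $O(\e)$ term collapses to $\e[\cW^{(j-1)}] = \e(Z^{(j)}-Z^{(j-1)})$. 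The new $\cW^{(j)}$ then consists of the commutators $[G_j, Z^{(j-1)}]$ and $[G_j, \cW^{(j-1)}]$, both in $OPS^{1-(j+1)\frake}$ by the pseudo-differential calculus (each commutator gains one order), plus the iterated Lie brackets $\mathrm{ad}_{G_j}^{k}(\cdot)/k!$ from the Duhamel tail, which are of strictly smaller order. Symmetric hyperbolicity of $\cW^{(j)}$ propagates because the commutator of a sym-hyp operator with one that is skew-adjoint modulo $OPS^0$ is itself sym-hyp, via the Egorov-type argument already used in Lemma \ref{lemma cal W (0) simbolo}.

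After $M$ iterations one obtains \eqref{regolare}--\eqref{regolarita z w}. The main technical obstacle is the bookkeeping of pseudo-differential seminorms (and their Lipschitz-in-$\tildeom$ counterparts) through the Duhamel expansion: each step loses $\tau$ derivatives from the Diophantine division but must strictly gain $\frake$ in symbolic order, so the estimates must compound favourably across all $M$ stages while preserving $\cinfty$ regularity in $\vphi$. The real, reversible case runs in parallel: if $\cW^{(j-1)}$ is real and reversible, then so is $\cW^{(j-1)}_{\perp}$, and since the small divisor $\omega\cdot l+\muzero\cdot(\xi-\eta)$ is odd under $(l,\xi-\eta)\mapsto -(l,\xi-\eta)$ the solution $G_j$ of \eqref{homeq_plan} is real and reversibility-preserving; hence $\Phi_j$ preserves both structures, and $Z^{(M)}$ and $\cW^{(M)}$ inherit them from $\cW^{(0)}$.
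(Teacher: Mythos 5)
Your proposal is correct and takes essentially the same route as the paper: the same iterative conjugation by $e^{-\e G_j}$, the same splitting of the remainder into its Fourier-multiplier average (absorbed into $Z^{(j)}$) plus an oscillating part killed by a homological equation with divisors $\omega\cdot l+\nu^{(0)}\cdot k$ controlled by the Diophantine condition on $\Omega_{0,\gamma}$, the same source of the order gain (commutators with $Z^{(j)}\in OPS^{1-\frake}$ and with $G_j$), and the same propagation of symmetric hyperbolicity, reality and reversibility through the antisymmetry of the divisors and the flow/commutator lemmas. The only difference is presentational: the paper implements the diagonal/off-diagonal decomposition at the operator level via the auxiliary conjugation $e^{\ii\tau\cdot K}$ and Fourier expansion in $\tau\in\T^d$ (Lemma \ref{lemma eq homolog}), whereas you divide directly on the double Fourier coefficients in $(l,\xi-\eta)$, which is an equivalent bookkeeping of the same small divisors.
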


We now prove such theorem.

Denote ${K_j = i \partial_j, \quad j = 1, \dots, d,}$ then $K_1,
\dots, K_d$ are self-adjoint commuting operators such that $K_m \in
OPS^{1}\ \forall m= 1, \dots, d.$ Define $K=\left(K_1, \dots,
K_d\right).$ 
\noindent
The main step for the proof of Theorem \ref{regula} is the following
lemma, which is a variant of Lemma 3.7 of \cite{BGMR2}:
\begin{lemma} \label{lemma eq homolog}
	Let $ {W} \in \Lipsp \left(\Omegauno; \cinfty\left(\T^{{\frak
            n}}; OPS^{\eta}\right)\right),$ be given and consider the
        homological equation
	\begin{equation} \label{homological equation reg}
	\omphi G + [H_0,\ G] = W - \langle W\rangle 
	\end{equation}
	with 
	$$
	\langle W \rangle  := \frac{1}{(2\pi)^{{\frak
              n}+d}}\int_{\T^{d}} \int_{\T^{{\frak n}}} e^{i \tau
          \cdot K} W e^{ -i \tau \cdot K}\ d\vphi \ d \tau\ ;
	$$
then \eqref{homological equation reg} has a solution $G \in \Lipsp \left(\Omegauno; \cinfty\left(\T^{{\frak n}}; OPS^{\eta}\right) \right). $\\
	If $W$ is symmetric hyperbolic, $G$ is symmetric hyperbolic.
	Moreover, if $W$ is real and reversible, $G$ is real and reversibility preserving; if $W$ is anti self-adjoint, $G$ is anti self-adjoint.
\end{lemma}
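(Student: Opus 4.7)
The plan is to solve the homological equation via simultaneous Fourier expansion in $\vphi \in \T^{\frak n}$ and $x \in \T^d$. The first step is to identify $\langle W \rangle$ explicitly: since $K_j = i \partial_j$, the flow $e^{i \tau \cdot K}$ acts as the translation $u(x) \mapsto u(x - \tau)$ on $\T^d$, so if $W(\vphi) = \Op(w(\vphi, x, \xi))$ then $e^{i \tau \cdot K} W(\vphi) e^{-i \tau \cdot K} = \Op(w(\vphi, x - \tau, \xi))$; averaging in $\tau \in \T^d$ and then in $\vphi \in \T^{\frak n}$ picks out only the $(l, j) = (0, 0)$ Fourier coefficient of $w$, so $\langle W \rangle = \Op(\widehat w(0, 0, \xi))$ is a Fourier multiplier. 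In particular $[\langle W \rangle, K_m] = 0$, which is why the averaging is set up this way.

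Expanding $w(\vphi, x, \xi) = \sum_{l, j} \widehat w(l, j, \xi) e^{i l \cdot \vphi + i j \cdot x}$ and seeking $G$ with symbol $g(\vphi, x, \xi) = \sum_{l, j} \widehat g(l, j, \xi) e^{i l \cdot \vphi + i j \cdot x}$, the key observation is that for $G_{l, j} := \Op(\widehat g(l, j, \xi) e^{i j \cdot x})$ a direct computation using $H_0 = \muzero \cdot \nabla$ gives $[H_0, G_{l, j}] = i (\muzero \cdot j) G_{l, j}$. Hence the homological equation decouples into
$$i (\omega \cdot l + \muzero \cdot j) \widehat g(l, j, \xi) = \widehat w(l, j, \xi), \qquad (l, j) \in \Z^{\frak n + d} \setminus \{0\},$$
and we set $\widehat g(0, 0, \xi) := 0$. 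On $\Omegauno$ the Diophantine bound $|\omega \cdot l + \muzero \cdot j| \geq \gamma \la l, j \ra^{-\tau}$ allows division with loss $\gamma^{-1} \la l, j \ra^{\tau}$, which is absorbed by the super-polynomial decay in $(l, j)$ of the Fourier coefficients of the smooth symbol $w$, while the $S^\eta$ estimates in $\xi$ are inherited by $g$ term by term. Lipschitz dependence on $\tildeom$ follows by differentiating the small-divisor quotient, using the Lipschitz estimate on $\muzero$ from Proposition \ref{esempio1}; this produces an additional polynomial factor $\la l, j \ra^{\tau + 1} / \gamma$ that is again absorbed by the rapid decay of $\widehat w$.

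For the symmetric hyperbolic preservation, I would take the adjoint of the homological equation. Since $\omphi^* = -\omphi$ and $H_0^* = -H_0$ (both $\omega$ and $\muzero$ are real), one has $(\omphi G)^* = \omphi G^*$ and $[H_0, G]^* = [H_0, G^*]$, and averaging commutes with taking adjoints. Adding the adjoint equation to the original one yields
$$\omphi (G + G^*) + [H_0, G + G^*] = (W + W^*) - \langle W + W^* \rangle,$$
a homological equation whose right-hand side lies in $OPS^0$ by hypothesis. The same solvability argument then shows $G + G^* \in OPS^0$, i.e., $G$ is symmetric hyperbolic. The anti-self-adjoint case ($W + W^* = 0$) follows by uniqueness of the zero-average solution with zero right-hand side. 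Reality and reversibility-preservation are proved by the same pattern: applying complex conjugation, respectively the transformation $G(\vphi) \mapsto S G(-\vphi) S$, to the equation and using $\overline{H_0} = H_0$, $S H_0 S = -H_0$ and the corresponding hypotheses on $W$ and on $\langle W \rangle$, one shows that the transformed operator satisfies the same homological equation as $G$, and uniqueness concludes.

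The main technical work is the careful bookkeeping of symbol seminorms together with the Lipschitz losses in the Diophantine denominator; this follows the template of Lemma 3.7 in \cite{BGMR2}. The only novel point is that the argument must now propagate the symmetric hyperbolic structure $W + W^* \in OPS^0$ rather than skew-adjointness, but the structural argument above handles this at no extra cost, since $W + W^*$ being of lower order than $W$ is itself a property preserved by the homological equation.
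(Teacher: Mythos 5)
Your proposal is correct and follows essentially the same route as the paper: Fourier decomposition in $\vphi$ and in the spatial translations generated by $K$ (you perform it directly on the symbol, while the paper uses the auxiliary conjugation $e^{i\tau\cdot K}We^{-i\tau\cdot K}$ and operator-valued Fourier coefficients, which is equivalent), division by the non-resonant divisors $i(\omega\cdot l+\muzero\cdot j)$ on $\Omegauno$ with the loss $\gamma^{-1}\langle l,j\rangle^{\tau}$ absorbed by the rapid decay of the coefficients, and the same difference-quotient treatment of the Lipschitz dependence on $\tildeom$. The only cosmetic difference is that you obtain symmetric hyperbolicity, reality, reversibility-preservation and anti-self-adjointness by applying the corresponding symmetry to the homological equation and invoking uniqueness of the zero-average solution, whereas the paper checks the same identities coefficient-by-coefficient in Fourier; the two arguments are equivalent.
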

\begin{proof} Define $\forall \tau \in \T^{d}$
	$$
W(\tau) := e^{i \tau \cdot K} W e^{- i\tau \cdot K},
	$$
then we look for $G$ s.t.
$$
	G(\tau) := e^{i \tau \cdot K} G e^{ -i \tau \cdot K}
$$ solves
	\begin{equation} \label{eq omologica allargata}
	\omphi G(\tau) +  [H_0,\ G(\tau)] = W(\tau) - \langle W \rangle \quad \forall\ \tau \in \T^d,
	\end{equation}
	observing that  since $G= G(0),\ W= W(0),$ solving equation
        \eqref{eq omologica allargata} $\forall\ \tau$ implies having
        solved~ \eqref{homological equation reg}.\\
Note that
$\forall\ \eta \in \R,\ \forall\ A \in OPS^{\eta}$ the map
		\begin{equation} 
		[-1,\ 1] \ni \tau \mapsto e^{-i \tau \cdot K} A e^{ i \tau \cdot K} \in \cinfty \left(\T^d; OPS^{\eta}\right)
		\end{equation}
(see Remark \ref{lemma assumption ii} of the appendix).
	We make a Fourier expansion both in $\vphi$ and $\tau$ variables, namely
	\begin{equation} \label{fourier op}
	W_{\tildeom}( \vphi, \tau) =  \sum_{k \in \Z^{d}} \sum_{l \in \Z^{{\frak n}}} \four{W}{k}{l}(\tildeom) e^{i \vphi \cdot l} e^{i \tau \cdot k},
	\end{equation}
        and similarly for $G$. 
	A direct calculation shows that 
	\begin{align*}
	[H_0,\ G(\tau)]  = \sum_{k,\ l} i \left(\muzero  \cdot k\right)\ \four{G}{k}{l}e^{i\tau \cdot k} e^{i \vphi \cdot l}\,. 
	\end{align*}
	Thus, taking the $(k,l)-$th Fourier coefficient of equation \eqref{eq omologica allargata}, one has
	$$
	i \left( \om \cdot l +  \muzero \cdot k \right) \four{G}{k}{l} = \four{W}{k}{l} \quad \textrm{ if }(k, l) \neq (0,0), \quad \four{G}{0}{0} = 0.
	$$
For $\left|k\right|+\left| l\right|\not=0$,        define
        $$
\four{G}{k}{l}:=\frac{\four{W}{k}{l}}{i ( \om \cdot l +  \muzero \cdot
  k)}\ ,
        $$ then, by regularity of the map $(\vphi, \tau) \mapsto
W(\vphi, \tau)$ all the seminorms of the operator $\four{W}{k}{l}$
decay faster than any power of $(|k|+|l|),$ and since the frequencies
belong to $\Omega_{0,\gamma}$ (cf. \eqref{cantor regolarizzazione}), it follows that
the seminorms of the operator $\four{G}{k}{l}$ exhibit the same decay;
hence  the series defining $G(\tau)$ converges absolutely and  $G
= G(0) \in \cinfty\left(\T^{{\frak n}};
OPS^{\eta}\right).$
\\ Lipschitz regularity with respect to
$\tildeom=(\om, \nu) \in \Om_{0, \gamma }$ follows observing that
given $(\omega_1, \nu_1), (\omega_2, \nu_2) \in \Omega_{0, \gamma}$,
one has that
	$$
	\begin{aligned}
	  \four{G}{k}{l}(\omega_1) -  \four{G}{k}{l}(\omega_2)   & =  \four{G}{k}{l}(\omega_1) \dfrac{(\omega_1 - \omega_2) \cdot l  + \big( \muzero(\omega_1, \nu_1) - \muzero (\omega_2, \nu_2) \big) \cdot k}{(\omega_1 \cdot l + \muzero(\omega_1, \nu_1) \cdot k) (\omega_2 \cdot l + \muzero(\omega_2, \nu_2) \cdot k)} \\
	  & \quad + \dfrac{\four{G}{k}{l}(\omega_1) - \four{G}{k}{l}(\omega_2)}{\ii (\omega_2 \cdot l + \muzero (\omega_2, \nu_2) \cdot k)}
	\end{aligned}
	$$
	using the fact that the map $(\omega, \nu) \mapsto \muzero (\omega, \nu)$ is Lipschitz (see Proposition \ref{esempio1}) and the diophantine estimate required in \eqref{cantor regolarizzazione}.\\
	{\sc Symmetric hyperbolicity:} We observe that
	$$
	W + W^{*} = e^{- i \tau \cdot K} \left( W(\tau) + W^{*}(\tau)\right) e^{ i \tau \cdot K}, \quad G + G^{*} = e^{- i \tau \cdot K} \left( G(\tau) + G^{*}(\tau)\right) e^{ i \tau \cdot K}.
	$$
	Hence $W$ (resp., $G$) is symmetric hyperbolic if and only if $W(\tau)$ (resp., $G(\tau)$) is symmetric hyperbolic.\\
	Thus, arguing as before and being
	$$
	\widehat{\left(W^{*}\right)}_{k,l} = \overline{\widehat{W}_{-k,-l}}\quad \forall\ k \in \Z^d,\ l \in \Z^{\frak n},
	$$
	it follows that if $\forall\ k \in \Z^d,\ l \in \Z^{\frak n}$
	$
	\widehat{W}_{k,l} + \overline{\widehat{W}_{-k,-l}}
	$
	are the Fourier coefficients of an operator in $OPS^{0},$ then
	$$
	\widehat{G}_{k,l} + \overline{\widehat{G}_{-k,-l}} = \frac{\widehat{W}_{k,l} + \overline{\widehat{W}_{-k,-l}}}{i \left( \om \cdot l +  \nu \cdot k \right)}
	$$
	are again Fourier coefficients of an operator in $OPS^{0}.$ \\
	{\sc Reversibility:}
	We apply Lemma \ref{lemma tau rev} of the Appendix to deduce
        reversibility of $W$ and we observe that an operator $A(\tau,
        \vphi)$ is reversible (resp. reversibility preserving) if and
        only if, developing in Fourier series as in \eqref{fourier op}, its coefficients satisfy
	$$
	\four{A}{k}{l} \circ S = - S \circ \four{A}{-k}{-l} \qquad \left(\textrm{resp.\ }\four{A}{k}{l} \circ S =  S \circ \four{A}{-k}{-l}\right),
	$$
	so that $\forall\ k \in \Z^d,\ l \in \Z^{\frak n},$
	\begin{align*}
	\four{G}{k}{l} \circ S &= \frac{\four{W}{k}{l} \circ S}{i (\omega \cdot l + \nu \cdot k)}=  \frac{-S \circ \four{W}{-k}{-l}}{- (\omega \cdot (-l) + \nu \cdot (-k))} = S \circ \four{G}{-k}{-l}.
	\end{align*}
	Hence $G,$ and thus ${\calg},$ is reversibility preserving. (See Lemma \ref{lemma tau rev}.)\\
	{\sc Reality:}
	Reality condition in Fourier coefficients reads
	$$
	\four{A}{l}{k} = \overline{\four{A}{-l}{-k}}.
	$$
	We apply Lemma \ref{lemma tau rev} again to deduce that reality of $W$ (resp, $G$) is equivalent to reality of $\calw$ (resp, $\calg$) and we compute
	\begin{align*}
	\four{G}{k}{l} &= \frac{\four{W}{k}{l}}{i (\omega \cdot l + \nu \cdot k)} =\frac{\overline{\four{W}{-k}{-l}}}{-i (\omega \cdot (-l) + \nu \cdot (-k))}= \overline{\four{G}{-k}{-l}}.
	\end{align*}
\end{proof}

\begin{proof}[Proof of Theorem \ref{regula}]
	Fix $M>0.$ We prove by induction that $\forall j = 0, \dots, N-1$
	$$
	H^{(j)}(\vphi) = H_0 + \e Z^{(j)}(\tildeom) + \e {\cal W}^{(j)}(\vphi, \tildeom)
	$$ is mapped by the change of variables 
	\begin{equation} \label{change coord reg}
	u =e^{- \e G_j(\vphi, \tildeom)} v
	\end{equation}
	into
	$$
	H^{(j+1)}(\vphi) = H_0 + \e Z^{(j+1)}(\tildeom) + \e {\cal W}^{(j+1)}(\vphi, \tildeom),
	$$
	with
	\begin{equation} \label{regolarita z w}
	\begin{gathered}
	Z^{(j+1)}(\tildeom) \in  {\cal L}ip\Big( \Omega_{0, \gamma}; {\cal C}^{\infty}(\T^{{\frak n}}; OPS^{1-\frake}) \Big),\\ {\cal W}^{(j+1)}  \in \Lipsp \left(\Omegauno ;   {\cal C}^{\infty}\left(\T^{{\frak n}}; OPS^{1-(j+1)\frake}\right)\right),
	\end{gathered}
	\end{equation}
	${\cal W}^{(j+1)}$ symmetric hyperbolic and $Z^{(j+1)}(\tildeom)$ a Fourier multiplier commuting with all the $K_m.$\\
	If $j=0,$ the hypotheses are satisfied for $Z^{(0)}=0,\ {\cal W}^{(0)}=\calw \in {\Lipsp \left(\Omegauno; \cinfty(\T^{{\frak n}}; OPS^{1-\frake})\right)}.$\\
	Suppose now that $H^{(j)}$ satisfies the required hypotheses; the change of coordinates \eqref{change coord reg} maps $H^{(j)}$ into
	\begin{align}
	H^{(j+1)}(\vphi, \tildeom) & = H_0 +  \e Z^{(j)}(\tildeom) + \e \langle {\cal W}^{(j)} \rangle\\ \label{1st order term}
	&+ \e \left(-  \omphi G_j +  [H_0, G_j] +  {\cal W}^{(j)}(\vphi, \tildeom) -   \langle {\cal W}^{(j)} \rangle\right)\\ \label{resto ho}
	&+e^{ \e G_j(\vphi, \tildeom)}H_0 e^{ -\e G_j(\vphi, \tildeom)} - H_0 -\e [H_0, G_j]\\ \label{resto z}
	&+\e e^{ \e G_j(\vphi, \tildeom)} Z^{(j)}(\tildeom) e^{-\e G_j(\vphi, \tildeom)} - \e Z^{(j)}(\tildeom)\\ \label{resto v}
	&+\e e^{\e G_j(\vphi, \tildeom)} {\cal W}^{(j)}(\vphi, \tildeom) e^{-\e G_j(\vphi, \tildeom)} - \e {\cal W}^{(j)}(\vphi, \tildeom)\\  \label{resto xpunto}
	&- \e \int_{0}^{1}\ e^{-\e  s G_j(\vphi, \tildeom)} \omphi G_j(\vphi, \tildeom) e^{ \e s G_j(\vphi, \tildeom)} \ ds + \e \omphi G_j.
	\end{align}

	From Lemma \ref{lemma eq homolog} it is possible to find an operator $G_j \in \Lipsp \left(\Omegauno; \cinfty(\T^{{\frak n}}; OPS^{1-j\frake})\right)$ such that $G_j$ is symmetric hyperbolic and \eqref{1st order term} equals zero.
	Since Lemma \ref{lemma coniugo col flusso} of the Appendix entails that
	\begin{gather*}
	\eqref{resto ho} \in  \Lipsp\left(\Omegauno; \cinfty\left(\T^{{\frak n}}; OPS^{1-2j\frake}\right)\right),\\
	\eqref{resto z} \in  \Lipsp\left(\Omegauno; \cinfty\left(\T^{{\frak n}}; OPS^{1-(j+1)\frake}\right)\right),\\
	\eqref{resto v} \in  \Lipsp\left(\Omegauno; \cinfty\left(\T^{{\frak n}}; OPS^{1-2j\frake}\right)\right),\\
	\eqref{resto xpunto}  \in  \Lipsp\left(\Omegauno; \cinfty\left(\T^{{\frak n}}; OPS^{1-2j\frake}\right)\right),
	\end{gather*}
	if we define
	\begin{equation}
	\begin{aligned}
 Z^{(j+1)}(\tildeom) & : =  Z^{(j)}(\tildeom) +  \langle {\cal W}^{(j)} \rangle, \quad \\
	\e {\cal W}^{(j+1)}(\vphi, \tildeom) & = \eqref{resto ho} + \eqref{resto z} + \eqref{resto v} + \eqref{resto xpunto},
	\end{aligned}
	\end{equation}
	we have ${\cal W}^{(j+1)}(\vphi, \tildeom) \in \Lipsp\left(\Omegauno; \cinfty\left(\T^{{\frak n}}; OPS^{1-(j+1)\frake}\right)\right).$\\
	We observe that \eqref{resto ho} is of order $\e,$ as can be seen performing a Taylor expansion of the operator $e^{- \e G_j(\vphi, \tildeom)}H_0 e^{ \e G_j(\vphi, \tildeom)}$ as in Lemma \ref{lemma coniugo col flusso} of the Appendix.\\
	Reality and reversibility of ${\cal W}^{(j+1)}(\vphi, \tildeom)$ follow from  Lemma \ref{lemma flusso PDE}, whereas symmetric hyperbolicity of ${\cal W}^{(j+1)}(\vphi, \tildeom)$ follows from Lemma \ref{lemma commutatori adj}.
\end{proof}

\begin{remark} \label{rmk regolarita exp g}
	For all $j = 1,\ \dots,\ M$ we have $e^{\e G_j }\in {\cal B}\left({\cal H}^{\s}\right)\ \forall\ \s, $ and
	$$
	\|e^{\e G_j } - {\rm Id} \| \pediceSunoSdue{\s}{\s - (1-j\frake)} \lesssim \e \| G_j\| \pediceSunoSdue{\s}{\s - (1-j\frake)}.
	$$
	Furthermore, from Lemma \ref{lemma flusso PDE}, $\forall\ \alpha \in \N$ we have
	$$
	\partial_{\vphi}^{\alpha} e^{\e G_j } \in \bSunoSdue{\s}{\s - (1-j\frake)|\alpha|}.
	$$
\end{remark}
}
Note that, since $Z^{(M)} \in {\cal L}ip\Big( \Omega_{0, \gamma};
{\cal C}^\infty(\T^{\frak n}; OPS^{1 - \frak e}) \Big)$ then $Z^{(M)}
= {\rm Op}(z(\xi))$ with $z \in {\cal L}ip\Big( \Omega_{0, \gamma};
{\cal C}^\infty(\T^{\frak n}; S^{1 - \frak e}) \Big)$. Hence
$\partial_\xi z \in {\cal L}ip\Big( \Omega_{0, \gamma}; {\cal
  C}^\infty(\T^{\frak n}; S^{ - \frak e}) \Big)$ and the following
estimate holds
\begin{gather} \label{z lip 1}
\sup_{\xi \in \R^d} \langle \xi \rangle^{\frak e - 1} |z|^{\rm Lip}\,,\, \underset{ \xi \in \R^d}{\sup}  \langle \xi \rangle^{1-\frak e }|\partial_{\csi}z(\xi, \cdot)|^{\rm Lip} \lesssim \e;
\end{gather}
Concerning the second of \eqref{z lip 1}, we remark that we will only
use the fact that $|\partial_{\csi}z(\xi, \cdot)|^{\rm Lip}$ is
bounded.

\section{Reducibility}\label{sezione riducibilita grande}
\subsection{Functional Setting}
Given a linear operator ${R} : L^2(\T^d) \to L^2(\T^d),$ we denote by $R_j^{j'}$ its matrix elements with respect to the exponential basis $\{ e^{\ii j \cdot x} : j \in \Z^d \},$ namely 
$$
{ R}_j^{j'} := \int_{\T^d} { R}[e^{\ii j'\cdot x}]\,\, e^{- \ii j \cdot x}\, d x\,, \quad \forall j, j' \in \Z^d\,. 
$$ 
We define some families of operators related to ${ R} \in {\cal B}(L^2(\T^d))$ that will be useful in our estimates:
\begin{definition}
	Given $\beta \geq 0$ and ${ R} \in {\cal B}(L^2(\T^d)),$ we define the operator $\langle \nabla \rangle^{\beta} { R}$ as
	$$
	(\Gradbeta { R})_{j}^{j'} := \langle j - j' \rangle^\beta { R}_j^{j'}.
	$$ 
\end{definition}
We remark that this operator is useful since, for any operator
  $R$ and any function $u$, one has
  $$
\nabla R u=R\nabla u+[R;\nabla ] u \ , 
$$
and
$$
[R;\nabla ]\simeq \langle\nabla\rangle R\ .
$$

\begin{definition}
	We consider the space
	$$
	{\cal B}^{HS}({\cal H}^{\s_1}, {\cal H}^{\s_2}) := \left \lbrace { R} \in \bSunoSdue{\sigma_1}{\sigma_2} \big|\ \|{ R}\|^{HS}_{\s_1, \s_2} < + \infty \right \rbrace,
	$$
	with
	$$
	\left(\| { R} \|^{HS}_{\sigma_1, \sigma_2}\right)^2 := \sum_{k \in \Z^d} \sum_{k' \in \Z^d} \langle k \rangle ^{2 \sigma_2} |{ R}_{k}^{k'}|^2 \langle k' \rangle ^{-2 \sigma_1}.
	$$
\end{definition}
We consider operators ${ R}(\vphi)$ depending on the angles $\vphi \in \T^{\frak n},$ with ${ R} \in {\cal H}^{s}\left(\T^{\frak n};\ \bhsSunoSdue{\sigma_1}{\sigma_2} \right).$ Thus we define the time Fourier coefficients of ${ R}:$  $\forall\ l \in \Z^{\frak n}$ 
$\hat{R}(l)$ is the operator with matrix elements
\begin{equation}\label{def fourier l}
(\hat{ R}(l))_{j}^{j'} := \frac{1}{(2\pi)^{{\frak n}}} \int_{\T^{\frak n}} { R}_{j}^{j'} e^{-\ii l \vphi}\ \ d \vphi.
\end{equation}
\begin{definition}[Class of operators]
	Given $s, \sigma \geq 0$, we consider the space 
		\begin{equation}
	{\cal M}^{s}_{\sigma_1, \sigma_2}:={\cal H}^{s}\left(\T^{\frak n};\ \bhsSunoSdue{\sigma_1}{\sigma_2} \right),
	\end{equation}
	endowed with the norm
	\begin{equation}\label{classe operatori lineari riducibilita}
	\begin{aligned}
	& \| { R}\|\indSsigsig{s}{\sigma_1}{\sigma_2} := \Big( \sum_{l \in \Z^{\frak n}} \langle l  \rangle^{2 s} \big(\|\widehat{R}(l ) \|^{HS}_{\sigma_1, \sigma_2}\big)^2 \Big)^{\frac12}\,.
	\end{aligned}
	\end{equation}
\end{definition}

\begin{definition}[Higher regularity norm]
	Let $\Omega_0 \subseteq \Omega$ and $R\in \Lipsp\left(\Omega_0; {\cal M}^{s}_{\s_1, \s_2}\right)$. Given $\beta >0,$ if $R(\tildeom)$ is such that 
	$$ {R}(\tildeom) \in \Lipsp\left(\Omega_0; {\cal M}^{s+\beta}_{\s_1, \s_2}\right), \quad {\Gradbeta {R}(\tildeom) \in \Lipsp\left(\Omega_0; {\cal M}^{s}_{\s_1, \s_2}\right),}$$ we define
	\begin{equation} \label{def norme beta}
	\normabeta{ R}{\sigma_1}{\sigma_2}:= \| R\|^{\Lip} \indSsigsig{s+\beta}{\s_1}{\s_2} + \| \Gradbeta R\|^{\Lip} \indSsigsig{s}{\s_1}{\s_2}\,.
	\end{equation}
\end{definition}

\begin{definition}[Cutoffs]
	Given an operator ${ R} : L^2(\T^d) \to L^2(\T^d)$, for any $N \in \N$, we define the projector $\pi_N { R}$ as 
	\begin{equation}\label{proiettore pi N}
	(\pi_N {R})_j^{j'} := \begin{cases}
	{R}_j^{j'} \quad \textit{ if } |j-j'| <N\\
	0 \qquad \textit{ if } |j-j'| \geq N
	\end{cases}
	\end{equation}
	and we set $\pi_N^\bot {R} := { R} - \pi_N { R}$.
	For ${ R} : \T^{\frak n} \to {\cal B}(L^2(\T^d))$, $\vphi \mapsto {R}(\vphi)$, we define $\Pi_N { R}$ as 
	\begin{equation}\label{proiettore Pi N}
	\Pi_N { R}(\vphi) := \sum_{|l | \leq N} \pi_N \widehat{ R}(l )\, e^{\ii l \cdot \vphi}\,. 
	\end{equation}
	We then set $\Pi_N^\bot {R} := { R} - \Pi_N { R}$.
\end{definition}

In the following lemma we point out a key estimate for the remainder $\Pi_N^\bot { R}$ of an operator~${ R}:$
\begin{lemma} \label{lemma pienne bot} 
	Let ${ R}(\tildeom) \in {\cal M}^{s}_{\s_1, \s_2}$, $\tildeom \in \Omega_0 \subseteq \Omega$. Then for any $N>0$,
	\begin{equation} \label{stima pin ovvia}
	\|\Pi_N {R} \|^{\Lip}\indSsigsig{s}{\s_1}{\s_2}, \  \|\Pi_N^\bot {R} \| ^{\Lip}\indSsigsig{s}{\s_1}{\s_2} \leq \| {R}\|^{\Lip}\indSsigsig{s}{\s_1}{\s_2}.
	\end{equation}
	\noindent
	Moreover, let $\beta > 0$ and assume that ${R}(\tildeom) \in
        {\cal M}^{s+\beta}_{\s_1, \s_2}$, $\Gradbeta {R}(\tildeom) \in
        {\cal M}^{s}_{\s_1, \s_2}$, $\tildeom \in
        {\tilde{\Om}}$. Then, for any $N \in \N$, one has $\Pi_N^\bot
        {R}(\tildeom) \in {\cal M}^{s}_{\s_1, \s_2}$ and
	\begin{equation} \label{stima pin meno ovvia}
	\normaSsigsig{ \Pi_N^\bot R}{\s_1}{\s_2} \leq N^{- \beta} \normabeta{R}{\s_1}{\s_2}
	\end{equation}
\end{lemma}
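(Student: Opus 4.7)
The plan is to exploit the fact that, up to the weights $\langle l\rangle^{2s}$, $\langle j\rangle^{2\sigma_2}$ and $\langle j'\rangle^{-2\sigma_1}$, the norm $\|\cdot\|\indSsigsig{s}{\sigma_1}{\sigma_2}$ is a Hilbert-space norm on the Fourier side in the variables $(l,j,j')\in \Z^{\frak n}\times\Z^d\times\Z^d$; correspondingly, both $\Pi_N$ and $\Pi_N^\bot$ are orthogonal projectors. This makes \eqref{stima pin ovvia} an immediate Parseval-type identity and \eqref{stima pin meno ovvia} a standard Bernstein-type gain. Because $\Pi_N$ and $\Pi_N^\bot$ do not depend on $\tildeom$, the Lipschitz-in-$\tildeom$ estimates will follow by applying the same projections to the difference quotient $[R(\tildeom_1)-R(\tildeom_2)]/|\tildeom_1-\tildeom_2|$.

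For the first bound \eqref{stima pin ovvia} I would unpack the definitions \eqref{classe operatori lineari riducibilita}, \eqref{proiettore pi N}, \eqref{proiettore Pi N}: the entry $(\widehat{\Pi_N R}(l))_j^{j'}$ equals $\widehat R(l)_j^{j'}$ on the set $A_N := \{(l,j,j') : |l|\leq N,\,|j-j'|<N\}$ and vanishes outside, while $(\widehat{\Pi_N^\bot R}(l))_j^{j'}$ is supported on the complement of $A_N$. Since these Fourier supports are disjoint, I get the orthogonal splitting
\[
\big(\|R\|\indSsigsig{s}{\sigma_1}{\sigma_2}\big)^2 = \big(\|\Pi_N R\|\indSsigsig{s}{\sigma_1}{\sigma_2}\big)^2 + \big(\|\Pi_N^\bot R\|\indSsigsig{s}{\sigma_1}{\sigma_2}\big)^2,
\]
which yields both inequalities in \eqref{stima pin ovvia} for the sup-in-$\tildeom$ part of $\|\cdot\|^{\Lip}$; the Lipschitz seminorm part is controlled by applying the same identity to $R(\tildeom_1)-R(\tildeom_2)$ and dividing by $|\tildeom_1-\tildeom_2|$.

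For \eqref{stima pin meno ovvia} I would decompose $\Pi_N^\bot R$ into the $l$-tail $\sum_{|l|>N}\widehat R(l)e^{\ii l\cdot\vphi}$ and the $(j-j')$-tail $\sum_{|l|\leq N}\pi_N^\bot \widehat R(l)\,e^{\ii l\cdot\vphi}$, whose Fourier supports in $(l,j,j')$ are still disjoint. On the first piece, the inequality $1\leq N^{-\beta}\langle l\rangle^\beta$ valid for $|l|>N$ produces a gain of $\langle l\rangle^{2\beta}$ in the $l$-weight, giving an $\mathcal{M}^s$-contribution $\leq N^{-\beta}\|R\|\indSsigsig{s+\beta}{\sigma_1}{\sigma_2}$. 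On the second piece, the bound $1\leq N^{-\beta}\langle j-j'\rangle^\beta$ valid for $|j-j'|\geq N$, combined with the very definition of $\langle\nabla\rangle^\beta$, gives a contribution $\leq N^{-\beta}\|\langle\nabla\rangle^\beta R\|\indSsigsig{s}{\sigma_1}{\sigma_2}$ after enlarging the sum by dropping the constraint $|j-j'|\geq N$. Adding the two squared contributions and using $\sqrt{a^2+b^2}\leq a+b$ for $a,b\geq 0$ yields \eqref{stima pin meno ovvia} at the level of the sup-norm, and the Lipschitz seminorm is handled by the same argument applied to the difference quotient in $\tildeom$, producing the full bound in terms of $\normabeta{R}{\sigma_1}{\sigma_2}$ via \eqref{def norme beta}. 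I do not expect any real obstacle here; the only thing that needs care is keeping the two orthogonal splittings (in $l$ and in $j-j'$) clearly separated and using the $\tildeom$-independence of $\Pi_N$, $\Pi_N^\bot$ so that the estimates automatically lift to Lipschitz norms.
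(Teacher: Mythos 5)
Your proof is correct and follows essentially the same route as the paper: the same splitting of $\Pi_N^\bot R$ into the $l$-tail and the $(j-j')$-tail with the gains $1\leq N^{-\beta}\langle l\rangle^\beta$ and $1\leq N^{-\beta}\langle j-j'\rangle^\beta$, combined at the end (the paper uses the triangle inequality where you invoke orthogonality, an inessential difference), and the lifting to Lipschitz norms via the $\tildeom$-independence of the projectors is exactly what the paper's statement implicitly relies on.
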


\begin{proof}
Estimate \eqref{stima pin ovvia} is a direct consequence of the
definitions \eqref{classe operatori lineari
  riducibilita}-\eqref{proiettore Pi N}. We prove estimate \eqref{stima pin meno ovvia}.  By \eqref{proiettore pi N}, \eqref{proiettore Pi N}, one has 
	\begin{equation}\label{splitting Pi N bot cal R}
	\begin{aligned}
	& \Pi_N^\bot { R}(\vphi)  = { R}_{1, N}(\vphi) + {R}_{2, N}(\vphi)\,, \\
	& { R}_{1, N}(\vphi) :=  \sum_{|l | \leq N}  \pi_N^\bot \widehat{R}(l )  e^{\ii l  \cdot \vphi}\,, \quad {R}_{2, N}(\vphi) :=  \sum_{|l | > N} \widehat{ R}(l )  e^{\ii l \cdot \vphi}\,. 
	\end{aligned}
	\end{equation}
	We estimate separately the two terms in the above formula. 
	
	\noindent
	{\sc Estimate of ${R}_{1, N}$.} For any $\ell \in \Z^{\frak n}$, one has 
	\begin{align*}
	\left(\normahs{\pi^{\bot}_{N} \widehat{R}(l )}{\s_1}{\s_2}\right)^2 &=
	\sum_{\begin{subarray}{c}
		k, k' \in \Z^d \\
		|k-k'| > N
		\end{subarray}} |\widehat{R}(l )_{k}^{k'}|^2 \langle k \rangle ^{2 \sigma_2} \langle k' \rangle ^{-2 \sigma_1}\\
	& \leq N^{-2\beta} \sum_{k, k' \in \Z^d} \langle k -k' \rangle ^{2\beta}|\widehat{R}( l )_{k}^{k'}|^2 \langle k \rangle ^{2 \sigma_2} \langle k' \rangle ^{-2 \sigma_1}\\
	&=  N^{-2\beta} \left(\normahs{\Gradbeta \widehat{R}(l ) }{\s_1}{\s_2}\right)^2.
	\end{align*}
	Therefore, recalling \eqref{classe operatori lineari riducibilita}, one gets the estimate 
	\begin{equation}\label{stima cal R N 1}
	\| {R}_{1, N} \|\indSsigsig{s}{\s_1}{\s_2} \leq N^{- \beta} \| \langle \nabla\rangle^\beta  {R} \|\indSsigsig{s}{\s_1}{\s_2}\,. 
	\end{equation}
	
	\noindent
	{\sc Estimate of ${R}_{2, N}$.} The operator ${ R}_{2, N}$ can be estimated as
	\begin{align}
	\left(\|{R}_{2, N} \|\indSsigsig{s}{\s_1}{\s_2}\right)^2 & = \sum_{|l | > N} \langle l  \rangle^{2 s}  \left(\normahs{\ \widehat{R}(l )}{\s_1}{ \s_2}\right)^2 \nonumber\\
	& \leq N^{- 2 \beta} \sum_{l \in \Z^{\frak n}} \langle l  \rangle^{2 (s + \beta)}  \left(\normahs{\ \widehat{R}(l)}{\s_1}{ \s_2}\right)^2\nonumber\\
	&= N^{- 2 \beta} \left(\| {R}\|\indSsigsig{s+\beta}{\s_1}{\s_2}\right)^2, \nonumber
	\end{align}
	implying that 
	\begin{equation}\label{stima cal R N 2}
	\|{R}_{2, N} \|\indSsigsig{s}{\s_1}{\s_2} \leq N^{- \beta} \| { R}\|\indSsigsig{s+\beta}{\s_1}{\s_2}\,. 
	\end{equation}
	The claimed inequality then follows by \eqref{def norme beta}, \eqref{splitting Pi N bot cal R}, \eqref{stima cal R N 1} and \eqref{stima cal R N 2}. 
\end{proof}

\subsection{Diagonalization}\label{sotto sezione riducibilita}
Fix $M>0$ and consider the matrix representation of the regularized operator $H^{(M)}$ of Theorem \ref{regula}, namely
\begin{equation}\label{definizione A0 + P0}
A_0  + P_0(\vphi), \quad A_0 := D_0 +  Z
\end{equation}
where $D_0,\ Z$ and $ P_0$ are the matrix representations of $\muzero({\tildeom}) \cdot \nabla$, $\e Z^{(M)}$ and ${\cal W}^{(M)}$ respectively. \\
Since $\muzero \cdot \nabla$ and $Z^{(M)}$ depend only on $\nabla$ and not on the $x$ variable, their associated operators $D_0$ and $Z$ remain diagonal if we pass to Fourier variables, so that we deal with the sum of a diagonal operator $A_0 = D_0 + Z$ and a perturbative term $ P_0(\vphi)$ whose dependence on the angle $\vphi$ we want to eliminate. More precisely 
\begin{equation}\label{parte diagonale inizio riducibilita}
A_0 = {\rm diag}_{j \in \Z^d} \lambda^{(0)}_j, \quad \lambda^{(0)}_j := \ii \muzero \cdot j + z(j)
\end{equation}
where we recall that $z \in {\cal L}ip(\Omega_{0, \gamma};  OPS^{1 - \frak e})$.
Before to state the reducibility theorem, we fix some constants. Given $\tau > 0$ we define
\begin{equation}\label{vincoli costanti alpha beta}
\alpha :=  12 \tau + 7\,, \quad \beta := \alpha + 1\,, \quad m := 2\tau + 2 
\end{equation} 
Moreover, we fix the scale on which we perform the reducibility scheme as
\begin{equation}\label{definizione Nk}
		N_k = N_0^{\left(\frac{3}{2}\right)^k} \quad \forall k \in \N, \quad N_{-1} : =1
\end{equation}
where for convenience we link $N_0$ and $\gamma$ as 
\begin{equation}\label{N0 gamma - 1}
N_0 = \gamma^{- 1}
\end{equation}
where $\gamma$ is the constant appearing in the definition \eqref{cantor regolarizzazione} of the set $\Omega_{0, \gamma}$  (see also \eqref{Omgj} in the theorem below). 
We also fix the number $M $ of regularization steps in Theorem \ref{regula} as 
\begin{equation}\label{definizione M reg rid}
M := 2 m + 2 \beta + [d/2] + 1\,. 
\end{equation}

\begin{remark}
By Theorem \ref{regula} one has that $P_0 = \e {\cal W}^{(M)} \in {\cal C}^\infty(\T^{\frak n}; OPS^{-M})$. Since by \eqref{definizione M reg rid}, $M > 2 m + 2 \beta + \frac{d}{2}$, by applying Lemma \ref{lemma norma dm}, one has that 
\begin{equation}\label{stime P0 prima di riducibilita}
\normaSsigsig{P_0 }{\s -m}{\s +m}\,,\, \normabeta{ P_0}{\s -m}{\s +m} \lesssim_{s, \sigma} \e\,, \quad \forall s \geq 0\,, \quad \forall \sigma>0\,. 
\end{equation}
\end{remark}

\begin{theorem}{{\bf (KAM reducibility)}} \label{thm:abstract linear reducibility}  
Consider the system \eqref{regolare}. Let $\gamma \in (0, 1)$, $\tau > 0$. Then for any $s > [\frak n/2] + 1, \sigma \geq 0$
there exist constants $C_0 = C_0(s, \sigma, \tau) > 0$ large enough and $\delta = \delta(s, \sigma,  \tau)  \in (0, 1)$ small enough such that, if
\begin{equation}\label{piccolezza1}
N_0^{C_0}\e  \leq \delta 
\end{equation}
then, for all $ k \geq 0 $:

\begin{itemize}
\item[${\bf(S1)}_k$] 
There exists a vector field  
\be\label{def:Lj}
H_k (\vphi) :=A_k + P_k (\vphi)\,, \qquad \vphi \in \T^\nu\,,
\ee
\begin{equation}\label{cal D nu}
A_k = {\rm diag}_{j \in \Z^d} \lambda_j^{(k)}, \quad \lambda_j^{(k)}({\tildeom}) = \lambda_j^{(0)}({\tildeom}) + \rho_j^{(k)} ({\tildeom})
\end{equation}
defined for all $ \widetilde \omega \in {\cal O}_{k, \gamma}$, where we set ${\cal O}_{0, \gamma} := \Omega_{0, \gamma}$ (see \eqref{cantor regolarizzazione}) and 
 for $k \geq 1$, 
\begin{align}
{\cal O}_{k, \gamma} & :=  \Big\{\widetilde \omega = (\omega, \nu) \in {\cal O}_{k - 1, \gamma}  : |\ii \omega \cdot l + \lambda_j^{(k - 1)}(\widetilde \omega) - \lambda_{j'}^{(k -1)}(\widetilde \omega)| \geq \frac{\gamma}{\langle l \rangle^\tau \langle j \rangle^\tau \langle j'\rangle^\tau} \nonumber\\
& \forall (l , j, j') \neq (0, j, j), \quad |l|, |j - j'| \leq N_{k - 1}\Big\}\,. \label{Omgj}
\end{align}
For $k \geq 0$, the Lipschitz functions ${\cal O}_{k, \gamma} \to \C$, ${\tildeom} \mapsto \rho_j^{(k)}({\tildeom})$, $j \in \Z^d$ satisfy 
\begin{equation}  \label{rjnu bounded}
 \sup_{j \in \Z^d} \langle j \rangle^{2 m} |\rho_j^{(k)}|^\Lip \lesssim_{s, \sigma} \e\,.  
\end{equation}
There exist a  constant $C_* = C_*(s, \sigma, \beta, \tau, m) > 0$ such that
\begin{align}
\normaSsigsig{P_k }{\s -m}{\s +m} \leq C_* N_{k-1}^{-\alpha} \e , \quad \normabeta{ P_k}{\s -m}{\s +m} \leq  C_* N_{k-1} \e \,.     \label{stima ricorsiva Pn beta}
\end{align}
Moreover, for $k \geq 1 $, 
\be	\label{Lnu+1}
H_{k} (\vphi)= (\Phi_{k - 1})_{\omega*} H_{k - 1}(\vphi)  \, , \quad \Phi_{k -1} := {\rm Id} + X_{k - 1} \, 
\ee
where the map $ X_{k-1} $  satisfies the estimates
\begin{gather} \label{stima Xn}
	\normaSsigsig{ X_{k-1} }{\s \pm m}{\s \pm m} \lesssim_s N_k^{4 \tau + 2} N_{k - 1}^{- \alpha} \e  \,. 
	\end{gather}
	Moreover, if $P_0(\vphi)$ is real and reversible, for any $k \geq 1$, $P_k(\vphi)$ is real and reversible and
\begin{equation} \label{i lambda_n real}
 \lambda^{(k)}_{j} \in \ii \R \quad \forall j \in \Z^d.
\end{equation}

	\item[${\bf(S2)}_{k}$] 
For all $j \in \Z^d$,  there exists a Lipschitz extension to the set $\Omega_{0, \gamma}$ defined in \eqref{cantor regolarizzazione}, that we denote by  
$ \widetilde \lambda_j^{(k)}: \Omega_{0, \gamma} \to \C $ of  $ \lambda_j^{(k)} : {\cal O}_{k, \gamma} \to \C$ 
satisfying,  for $k \geq 1$, 
\be\label{lambdaestesi}  
| \widetilde \lambda_j^{(k)} - \widetilde \lambda_j^{(k - 1)} |^\Lip  \lesssim  \langle j \rangle^{- 2m} \normaSsigsig{P_{k-1} }{\s -m}{\s +m}  \lesssim_{s, \sigma}   \langle j \rangle^{- 2m} N_{k-2}^{-\alpha} \e  \,.
\ee
 \end{itemize}
\end{theorem}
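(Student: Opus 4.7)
\medskip
\noindent
\textbf{Proof plan.} I would proceed by induction on $k$, following the standard KAM reducibility scheme adapted to the smoothing/beta--norm framework of \cite{BBHM}. The base case $k=0$ is immediate: the assumptions on $A_0$ and the structure of $P_0$ in \eqref{stime P0 prima di riducibilita} give $(S1)_0$, and $(S2)_0$ is vacuous since $\lambda_j^{(0)}$ is already defined on $\Omegauno = {\cal O}_{0,\gamma}$ (with extension provided by Kirszbraun's theorem for later use). The structural properties (reality, reversibility, purely imaginary spectrum) follow from Theorem \ref{regula} and are preserved along the scheme by inspection.

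The inductive step from $k$ to $k+1$ reduces to solving the homological equation
\begin{equation*}
\omphi X_k + [A_k, X_k] = \Pi_{N_k} P_k - \mathrm{diag}_{j \in \Z^d}(\widehat{P_k}(0))_j^j
\end{equation*}
on the restricted parameter set ${\cal O}_{k+1,\gamma}$. In the Fourier basis in $(\vphi, x)$ one gets
\begin{equation*}
(\widehat{X_k}(l))_j^{j'} = \frac{(\widehat{P_k}(l))_j^{j'}}{\ii \omega \cdot l + \lambda_j^{(k)} - \lambda_{j'}^{(k)}}\,,
\end{equation*}
for $|l|, |j-j'| \leq N_k$ with $(l,j,j')\neq (0,j,j)$, and zero otherwise. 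The Melnikov condition defining ${\cal O}_{k+1,\gamma}$ yields a loss of $\gamma^{-1} \langle l \rangle^\tau \langle j \rangle^\tau \langle j' \rangle^\tau \lesssim \gamma^{-1} N_k^{\tau} \langle j \rangle^\tau \langle j' \rangle^\tau$. The decisive point is that the weights $\langle j \rangle^{\sigma+m}\langle j'\rangle^{m-\sigma}$ appearing in $\normaSsigsig{\cdot}{\sigma-m}{\sigma+m}$ absorb the $\langle j\rangle^\tau\langle j'\rangle^\tau$ losses thanks to the choice $m = 2\tau+2$ in \eqref{vincoli costanti alpha beta}; for the Lipschitz part an additional loss comes from differentiating the denominator in $\tildeom$ and from the Lipschitz dependence of $\lambda_j^{(k)}$ (for which the pseudo-differential structure \eqref{lafina} is crucial, so that $|\lambda_j^{(k)} - \lambda_{j'}^{(k)}|^{\rm Lip} \lesssim \langle j-j' \rangle$). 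Combining these yields
\begin{equation*}
\normaSsigsig{X_k}{\s\pm m}{\s\pm m} \lesssim_s N_k^{4\tau+2}\gamma^{-1}\normaSsigsig{P_k}{\s - m}{\s + m}\,,
\end{equation*}
which, via \eqref{stima ricorsiva Pn beta} and \eqref{N0 gamma - 1}, gives \eqref{stima Xn}.

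Once $X_k$ is constructed I would set $\Phi_k := \id + X_k$ (invertible by Neumann series under \eqref{piccolezza1}) and define $H_{k+1} = (\Phi_k)_{\omega*} H_k$. A direct expansion shows that $H_{k+1} = A_{k+1} + P_{k+1}$, where $A_{k+1} = A_k + \mathrm{diag}(\widehat{P_k}(0))$, hence $\lambda_j^{(k+1)} = \lambda_j^{(k)} + (\widehat{P_k}(0))_j^j$, giving \eqref{rjnu bounded} and, via Kirszbraun, the extension and estimate \eqref{lambdaestesi} in $(S2)_{k+1}$. The new perturbation $P_{k+1}$ is the sum of the high--frequency remainder $\Pi_{N_k}^\bot P_k$ (controlled by Lemma \ref{lemma pienne bot}, which exchanges the $\beta$--norm decay for a gain $N_k^{-\beta}$) and quadratic commutator terms in $X_k, P_k$. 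The basic norm estimate $N_k^{-\alpha}$ follows from $\alpha < \beta$ and the choice $\alpha = 12\tau+7$ balancing the loss $N_k^{4\tau+2}$ squared in the quadratic terms against the gain $N_k^{-\beta}$ from the high--frequency cutoff; the $\beta$--norm estimate is cruder and only costs a factor $N_k$ because $\Gradbeta$ produces an extra $\langle j-j'\rangle^\beta \lesssim N_k^\beta$ on the part retained by $\Pi_{N_k}$, while the uncut piece is controlled by the $\beta$--norm of $P_k$ itself.

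The main obstacle I expect is the precise bookkeeping of the losses in the $\beta$--norm estimate of the commutator $[X_k, P_k]$: the operator $\Gradbeta$ does not behave as a derivation on products, so one must argue via the triangle inequality $\langle j-j''\rangle^\beta \lesssim \langle j-j'\rangle^\beta + \langle j'-j''\rangle^\beta$ at the matrix element level, at the cost of bounding either factor in the stronger $\beta$--norm and the other in the $\mathcal{M}^s$ norm, and balancing these bounds against the super--exponential decay of $N_k^{-\alpha}$ to keep the $\beta$--norm growing only linearly in $N_{k-1}$. Once this is in place, the smallness \eqref{piccolezza1} propagates by the quadratic nature of the iteration and the scale \eqref{definizione Nk}, closing the induction.
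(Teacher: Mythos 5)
Your plan follows essentially the same route as the paper: solve the homological equation on the nested Melnikov sets ${\cal O}_{k+1,\gamma}$, absorb the losses $\gamma^{-1}\langle l\rangle^\tau\langle j\rangle^\tau\langle j'\rangle^\tau$ through the smoothing indices $(\sigma-m,\sigma+m)\to(\sigma\pm m,\sigma\pm m)$ with $m=2\tau+2$ and the Lipschitz control of $\lambda_j^{(k)}-\lambda_{j'}^{(k)}$ coming from the symbol $z$, invert $\Phi_k=\mathrm{Id}+X_k$ by Neumann series, trade the $\beta$--norm for the gain $N_k^{-\beta}$ on $\Pi_{N_k}^\bot P_k$, read off $\lambda_j^{(k+1)}$ from the diagonal of $\widehat P_k(0)$ and extend by Kirszbraun, exactly as in Lemmas \ref{lemma_sol_eq_homol}, \ref{rmk stime P+} and the surrounding argument. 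The only imprecise point, attributing the factor $N_k$ in the $\beta$--norm bound to $\langle j-j'\rangle^\beta\lesssim N_k^\beta$ on the retained part, is harmless: as you note in your last paragraph, the correct mechanism is the product estimate $\normaSsigsig{\Gradbeta({\cal R}{\cal P})}{\s_1}{\s_3}\lesssim \|\Gradbeta{\cal R}\|\,\|{\cal P}\|+\|{\cal R}\|\,\|\Gradbeta{\cal P}\|$ (Lemma \ref{stima composizione}), and the linear growth $N_{k-1}\to N_k$ merely absorbs the resulting constants.
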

We remark that ${\bf (S2)}_k$ will be used to construct the final
eigenvalues $\lambda_j^{(\infty)} $. The procedure will be to show
that as $k\to\infty$, the sequence $\lambda_j^{(k)} $ admits a limit
on $\Omega_{0,\gamma}$ and then to use the final value
$\lambda_j^{(\infty)} $ in order to define the set in which
reducibility holds (c.f. eq. {\eqref{Cantor finale}}).

\subsection{Proof of Theorem \ref{thm:abstract linear reducibility}}
{\sc Proof  of ${\bf ({S}i)}_{0}$, $i=1, 2$.} 
Properties \eqref{def:Lj}-\eqref{stima ricorsiva Pn beta} hold by setting $\rho_j^{(0)} = 0$ for any $j \in \Z^d$, $ N_{-1} := 1 $ and recalling the estimate \eqref{stime P0 prima di riducibilita}. 

${\bf({S}2)}_0 $ holds, since the constant $\lambda_j^{(0)}$ is already defined for all ${\tildeom} \in \Omega_{0, \gamma}$ and in the real and reversible case it satisfies $ \lambda_j^{(0)} \in \ii \R$  in force of Proposition \ref{esempio1}. Thus we simply set $\rho_j^{(0)} = 0$ for any $j \in \Z^d$. 

\subsubsection{The reducibility step: proof of ${\bf ({S}i)}_{k + 1}$, $i=1, 2$.}

\noindent
{\sc Proof of ${\bf ({S}1)}_{k + 1}$.}

\noindent
We now describe the inductive step, showing how to define a transformation 
$ \Phi_k := {\rm Id} +   X_k  $ 
so that the transformed vector field $H_{k +1 }(\vphi) = (\Phi_k)_{\omega*} H_k(\vphi) $ has the desired properties. If we perform a change of coordinates of the form $u':= \Phi_k(\vphi)u,$ $\Phi_k(\vphi) = {\rm Id} + X_k(\vphi)$ one has that $H_{k + 1}(\vphi) = (\Phi_k)_{\omega*} H_k(\vphi) $ takes the form 
\begin{align}
H_{k + 1}(\vphi) &= A_k+ \Phi_k(\vphi)^{- 1}\big( \Pi_{N_k} P_k(\vphi) + [X_k(\vphi), A_k] - \omphi X_k(\vphi) \big)  \nonumber\\
&+ \Phi_k(\vphi)^{- 1}\big( \Pi_{N_k}^{\bot} P_k(\vphi) + P_k(\vphi) X_k(\vphi) \big) \nonumber
\end{align}
We look for a transformation $X_k(\vphi)$ solving the {\it homological equation}
\begin{equation} \label{hom eq operatori}
\Pi_{N_k} P_k(\vphi) + [X_k(\vphi), A_k] - \omphi X_k(\vphi) = \overline{P_k}
\end{equation}
where $\overline{P_k}$ is a diagonal operator. Then we set
\begin{equation}\label{roba nuova}
\begin{aligned}
& A^{k + 1} = A_k + \overline{P_k}, \quad P^{k + 1} = \Pi_{N_k}^\bot P_k + P_k X_k + (\Phi_k^{- 1} - {\rm Id}) \big( \overline P_k + \Pi_{N_k}^\bot P_k + P_k X_k \big)   \,, \\
& \overline P_k := {\rm diag}_{j \in \Z} (\widehat P_k)_j^j(0)\,. 
\end{aligned}
\end{equation}
By formula \eqref{roba nuova} one obtains that 
$$
A_{k + 1} := {\rm diag}_{j \in \Z^d} \lambda_j^{(k + 1)}
$$
where for any $j \in \Z^d$
\begin{equation}\label{definizione autovalore k + 1}
\begin{aligned}
& \lambda_j^{(k + 1)} := \lambda_j^{(k)} + \widehat P_k(0)_{j}^j =\ii  \muzero \cdot j + \e z( j) + \rho_{j }^{(k + 1)} \\
& \rho_j^{(k + 1)} := \rho_j^{(k)} + \widehat P_k(0)_{j}^j\,.
\end{aligned}
\end{equation}
In the real and reversible case, since $P_k$ is real and reversible, by Lemma \ref{rmk real rev anti sa fourier} one has $\widehat P_k(0)_j^j \in \ii \R$, and since $\lambda_j^{(k)}, \rho_j^{(k)} \in \ii \R$ then one has that $\lambda_j^{(k + 1)}, \rho_j^{(k + 1)} \in \ii \R$. 

\medskip

\noindent
By the definition \eqref{definizione autovalore k + 1}, applying Lemma \ref{lemma decay operatore smooth} and using the estimate \eqref{stima ricorsiva Pn beta}, one gets that for any $j \in \Z^d$ for any $i \in \{ 0, 1, \ldots, k\}$
\begin{align}
|\lambda_j^{(i + 1)} - \lambda_j^{(i)}|^{\rm Lip} & = |\rho_j^{(i + 1)} - \rho_j^{(i)}|^{\rm Lip}  = | (\widehat P_i)_{jj}(0)|^{\rm Lip}  \nonumber\\
& \lesssim  \langle j \rangle^{- 2 m} \| P_i\|_{{\cal M}^s_{\sigma - m, \sigma + m}}^{\rm Lip}  \lesssim_{s, \sigma}  \langle j \rangle^{- 2 m} N_{i-1}^{-\alpha} \e\,. \label{stima rho k + 1 - rho k}
\end{align}
We now verify the estimate \eqref{rjnu bounded} at the step $k + 1$. By using a telescoping argument, recalling that $\rho_j^{(0)} = 0$ for any $j \in \Z^d$, one gets that 
\begin{align}
|\rho_j^{(k + 1)}|^{\rm Lip} & \leq \sum_{i = 0}^k |\rho_j^{(i + 1)} - \rho_j^{(i)}|^{\rm Lip} \stackrel{\eqref{stima rho k + 1 - rho k}}{\lesssim_{s, \sigma}} \langle j \rangle^{- 2 m} \e \sum_{i = 0}^\infty N_{i-1}^{-\alpha}  \lesssim \langle j \rangle^{- 2 m} \e
\end{align}
since the series $\sum_{i = 0}^\infty N_{i-1}^{-\alpha}$ is convergent (see \eqref{definizione Nk}). Hence \eqref{rjnu bounded} is verified at the step $k + 1$. 

\medskip

\noindent
In the next lemma we will show how to solve the homological equation
\eqref{hom eq operatori}. This is the main lemma of the section. 
\begin{lemma} \label{lemma_sol_eq_homol}
	Let $m >2 \tau + 1$. Then for any ${\tildeom} \in {\cal O}_{k + 1, \gamma}$ (recall \eqref{Omgj}), the homological equation
	\begin{equation} \label{hom_eq}
	[A_k, X_k] + \omphi X_k = \Pi_{N_k} P_k - \overline{P}_k,
	\end{equation}
	with
	\begin{equation} \label{media}
	\overline{P}_k = \diag_{j \in \Z^d} \widehat{P_k}(0)_{j}^{j} 
	\end{equation}
	has a solution $X_k$ defined on ${\cal O}_{k, \gamma}$ and satisfying the estimates 
	\begin{eqnarray} \label{stima x p}
	\| X_k \|^{\Lip}\indSsigsig{s}{\s \pm m}{\s \pm m} \lesssim  N_k^{4 \tau + 2} \| P_k\|^{\Lip}\indSsigsig{s}{\s -m}{\s +m},\\ \label{stima grad x p}
	\| \Gradbeta X_k \|^{\Lip}\indSsigsig{s}{\s \pm m}{\s \pm m} \lesssim N_k^{4 \tau + 2} \| \Gradbeta P_k\|^{\Lip}\indSsigsig{s}{\s - m}{\s + m}.
	\end{eqnarray}
	Furthermore, if $P_k$ is real and reversible then $X_k$ is real and reversibility preserving. 
\end{lemma}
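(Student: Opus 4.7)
My plan is to diagonalize the homological equation \eqref{hom_eq} in the joint Fourier basis in $\vphi$ and $x$. Taking the $(l, j, j')$-th Fourier--matrix coefficient and using that $A_k$ is diagonal yields the scalar identity
$$
\bigl(\ii \omega \cdot l + \lambda_j^{(k)} - \lambda_{j'}^{(k)}\bigr)\, \widehat{X_k}(l)_j^{j'} = \widehat{P_k}(l)_j^{j'}
$$
for $(l, j, j') \neq (0, j, j)$ with $|l|, |j - j'| \leq N_k$; on the diagonal $(0, j, j)$ both sides vanish because of the subtraction of $\overline{P_k}$, and I set $\widehat{X_k}(l)_j^{j'} := 0$ outside the cutoff region. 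Writing $d(\widetilde\omega) := \ii \omega \cdot l + \lambda_j^{(k)} - \lambda_{j'}^{(k)}$, on $\mathcal{O}_{k+1,\gamma}$ the bound in \eqref{Omgj} gives $|d|^{-1} \leq \gamma^{-1}\langle l \rangle^\tau \langle j \rangle^\tau \langle j' \rangle^\tau$, so $X_k$ is well-defined there by explicit division. I then extend $X_k$ Lipschitz-continuously to $\mathcal{O}_{k,\gamma}$ via a Kirszbraun--McShane-type extension, which preserves Lipschitz bounds.

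For the sup bound I estimate $|\widehat{X_k}(l)_j^{j'}| \lesssim \gamma^{-1}\langle l \rangle^\tau \langle j \rangle^\tau \langle j' \rangle^\tau |\widehat{P_k}(l)_j^{j'}|$ and use $\langle l \rangle \leq N_k$, $\gamma^{-1} = N_0 \leq N_k$, together with Peetre's inequality $\langle j \rangle \lesssim \langle j - j' \rangle \langle j' \rangle \lesssim N_k \langle j' \rangle$ (so $\langle j \rangle^\tau \lesssim N_k^\tau \langle j' \rangle^\tau$; symmetrically one may exchange the roles of $j, j'$) to convert the pointwise loss into a power of $N_k$ times residual factors $\langle j' \rangle^{2\tau}$ or $\langle j \rangle^{2\tau}$. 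Inserting into the weighted Hilbert--Schmidt sum defining the $\mathcal{M}^s_{\sigma \pm m, \sigma \pm m}$-norm of $X_k$ and comparing to the $\mathcal{M}^s_{\sigma - m, \sigma + m}$-norm of $P_k$, these residual powers must be absorbed against the smoothing gain $\langle j \rangle^{-2m}$ or $\langle j' \rangle^{2m}$ present in $P_k$'s norm but absent from $X_k$'s; the assumption $m > 2\tau + 1$ guarantees that the leftover exponent is strictly negative. The resulting total loss produces the factor $N_k^{4\tau+2}$ of \eqref{stima x p}. The estimate \eqref{stima grad x p} is identical since the multiplier $\langle j - j' \rangle^\beta$ commutes with the division by $d$, so one simply starts the same computation from $\langle \nabla \rangle^\beta P_k$; the weighted sum over $l$ with $\langle l \rangle^{2s}$ is immediate.

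The Lipschitz seminorm follows from the standard resolvent-type identity
$$
\Delta \widehat{X_k}(l)_j^{j'} = \frac{\Delta \widehat{P_k}(l)_j^{j'}}{d(\widetilde\omega_1)} - \widehat{P_k}(l)_j^{j'}(\widetilde\omega_2)\, \frac{\Delta d}{d(\widetilde\omega_1) d(\widetilde\omega_2)},
$$
with $|\Delta d| \lesssim \langle l \rangle + |j - j'| \lesssim N_k$ obtained from the Lipschitz bound on $\nu^{(0)}$ in Proposition \ref{esempio1}, on $z$ from \eqref{z lip 1}, and the inductive bound \eqref{rjnu bounded} on $\rho_j^{(k)}$. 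The squared denominator costs one extra factor $\gamma^{-1}$ which the weight $\gamma$ built into the Lipschitz norm of Definition \ref{norma generale Lip gamma} cancels, so no extra power of $N_k$ is needed. Reality and reversibility of $X_k$ when $P_k$ is real and reversible are read off from the Fourier coefficients: in that case $\lambda_j^{(k)} \in \ii \R$ and $\lambda_{-j}^{(k)} = -\overline{\lambda_j^{(k)}}$, so $d$ satisfies $\overline{d(l,j,j')} = -d(-l,-j,-j')$ and $d(l,j,j') = -d(-l,-j',-j)$ (up to the relevant symmetry involving $S$); division therefore propagates the same Fourier identities that characterize reality and reversibility-preservation for $\widehat{P_k}$. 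The main technical obstacle I expect is the weight bookkeeping that pins down the precise constraint $m > 2\tau + 1$; once this is set up, the rest of the argument is essentially mechanical.
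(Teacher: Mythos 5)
Your proposal reproduces the paper's argument essentially step by step: Fourier diagonalization of \eqref{hom_eq}, division by the small divisors controlled by the second Melnikov conditions defining ${\cal O}_{k+1,\gamma}$ in \eqref{Omgj}, the Peetre-type trade of the $\langle j\rangle^\tau\langle j'\rangle^\tau$ loss for powers of $N_k$ with the residual $\langle\cdot\rangle^{2\tau}$ (and $\langle\cdot\rangle^{4\tau}$ in the Lipschitz part) absorbed by the $2m$-gap between the indices $\sigma-m$ and $\sigma+m$, a resolvent-type identity together with the Lipschitz bounds on $\nu^{(0)}$, $z$ and $\rho_j^{(k)}$ for the Lipschitz estimate, and the Fourier-coefficient characterization of reality/reversibility as in Lemma \ref{rmk real rev anti sa fourier}. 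The only point to correct is in your last paragraph: in the real and reversible case the eigenvalues are odd and purely imaginary, $\lambda^{(k)}_{-j}=-\lambda^{(k)}_{j}=\overline{\lambda^{(k)}_{j}}$ (not $\lambda^{(k)}_{-j}=-\overline{\lambda^{(k)}_{j}}$), so the divisor $d(l,j,j')=\ii\omega\cdot l+\lambda^{(k)}_j-\lambda^{(k)}_{j'}$ satisfies $d(-l,-j,-j')=-d(l,j,j')=\overline{d(l,j,j')}$ rather than the two relations you wrote, and it is exactly this parity/conjugation symmetry that makes the division send a real, reversible $P_k$ to a real, reversibility preserving $X_k$.
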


\begin{proof}
	
	To simplify notations, here we drop the index $ k $, namely we write $A$, $P$, $X$, $\lambda_j$, $\rho_j$ instead of $A_k$, $P_k$, $X_k$, $\lambda_j^{(k)}$, $\rho_j^{(k)}$. 
	Taking the $(j, j')$ matrix element and the $l-$th Fourier coefficient of \eqref{hom_eq} we get:
	\begin{align*}
	&  (\ii \omega \cdot l + \lambda_j - \lambda_{j '}) \ \widehat X(l)_j^{j'} =  \widehat P(l)_j^{j'} \quad \textrm{if } 0 <|j-j'|< N,\ 0<|l|<N\\
	&\widehat X(l)_j^{j'} = 0 \quad \textrm{otherwise} 
	\end{align*}
	Since ${\tildeom} \in {\cal O}_{k+1, \gamma}$ one has 
	\begin{align} \label{x p diofantea}
	| \widehat X(l)_j^{j'}| \leq \frac{|\widehat P(l)_j^{j'}||j|^{\tau} |j'|^{\tau} |l|^{\tau}}{\gamma},
	\end{align}
	hence
	\begin{equation} \label{stima x p eltwise 1} 
	\begin{aligned}
	| \widehat X(l)_j^{j'}| &\lesssim  \gamma^{- 1} | \widehat P(l)_j^{j'}| |l|^{\tau}\langle j'\rangle ^{\tau}\Big(\langle j'\rangle ^{\tau} + |j-j'|^{\tau}\Big)\\
	&\leq \gamma^{- 1} | \widehat P(l)_j^{j'}| N^{\tau}  \langle j'\rangle ^{\tau}\Big(\langle j'\rangle^{\tau} + N^{\tau}\Big)\\
	&\lesssim \gamma^{-1} | \widehat P(l)_j^{j'}| N^{2\tau} \langle j'\rangle^{2\tau},
	\end{aligned}
	\end{equation}
	Similarly, one gets
	\begin{align} \label{stima x p eltwise 2}
	| \widehat X(l)_j^{j'}|&\lesssim \gamma^{-1} | \widehat P(l)_j^{j'}| N^{2\tau} \langle j \rangle ^{2\tau}.
	\end{align}
	Thus, recalling that  $\tau<m,$ (see \eqref{vincoli costanti alpha beta}) the norm $\| X\|\indSsigsig{s}{\s +m}{\s +m}$ is estimated by:
	\begin{equation} \label{stima x p conti +}
	\begin{aligned}
	\left(\| X\|\indSsigsig{s}{\s +m}{\s +m}\right)^2&=
	\sum_{l \in \Z^{\frak n}} \langle l \rangle ^{2 s} \sum_{j, j' \in \Z^d} \langle j \rangle ^{2 (\s +m)} |\widehat X(l)_j^{j'}(l)|^2 \langle j' \rangle^{-2(\s + m)}\\
	&\lesssim \gamma^{-2}  N^{4 \tau }\sum_{l \in \Z^{\frak n}} \langle l \rangle ^{2 s} \sum_{j,j' \in \Z^d} \langle j \rangle ^{2 (\s +m)} |\widehat P(l)_j^{j'}|^2 \langle j'\rangle^{4\tau} \langle j' \rangle^{-2(\s + m)}\\
	&\leq \gamma^{-2}  N^{4 \tau }\sum_{l \in \Z^{\frak n}} \langle l \rangle ^{2 s}  \sum_{j, j' \in \Z^d} \langle j \rangle ^{2 (\s +m)} |\widehat P(l)_j^{j'}|^2 \langle j' \rangle^{-2(\s -m)} \\
	&= \gamma^{-2}  N^{4 \tau } \left(\| P\|\indSsigsig{s}{\s-m}{\s +m}\right)^2.
	\end{aligned}
	\end{equation}
	Similarly, one obtains 
	\begin{equation} \label{stima x p conti -}
	\begin{aligned}
	\left(\| X\|\indSsigsig{s}{\s-m}{\s -m}\right)^2 \lesssim \gamma^{-2}  N^{4 \tau } \left(\| P\|\indSsigsig{s}{\s-m}{\s +m}\right)^2.
	\end{aligned}
	\end{equation}
	 To estimate the norm of the operator $\Gradbeta X, $ we argue as in \eqref{stima x p eltwise 1}, \eqref{stima x p eltwise 2}  to get
	\begin{equation}
	\begin{gathered}
	\langle j-j' \rangle^{\beta} | \widehat X(l)_j^{j'}|\lesssim N^{2\tau} \langle j\rangle ^{2\tau}\langle j-j' \rangle^{\beta}  |  \widehat P(l)_j^{j'}|,\\
	\langle j-j' \rangle^{\beta} | \widehat X(l)_j^{j'}|\lesssim N^{2\tau} \langle j'\rangle ^{2\tau} \langle j-j' \rangle^{\beta}  |  \widehat P(l)_j^{j'}|;
	\end{gathered}
	\end{equation}
	hence we repeat the same argument of \eqref{stima x p conti +}, \eqref{stima x p conti -} to get \eqref{stima grad x p}.
Concerning Lipschitz estimates, recall that the eigenvalues $\lambda_j$, $j \in \Z^d$ have the expansion
$$
\lambda_j({\tildeom}) = \lambda^{(0)}_j({\tildeom}) + \rho_j({\tildeom}) = \ii \muzero ({\tildeom}) \cdot j + z({\tildeom},  j) + \rho_j({\tildeom})\,.
$$
By \eqref{tordo4b}, \eqref{z lip 1} and the induction hypotheses \eqref{rjnu bounded} one has that for any $\widetilde \omega_1, \widetilde \omega_2 \in \Omega_{\gamma}$ and any $j , j' \in \Z^d$, one has 
\begin{equation}\label{stima lambda j lip lemma eq omo}
|(\lambda_j - \lambda_{j'})(\widetilde \omega_1) - (\lambda_j - \lambda_{j'})(\widetilde \omega_2)| \lesssim \e \gamma^{- 1} \langle j - j' \rangle | \widetilde \omega_1 -  \widetilde \omega_2|\,. 
\end{equation} 
 Hence, one uses $|l|, |j - j'| \leq N$, \eqref{x p diofantea}, \eqref{stima lambda j lip lemma eq omo} and the inequality
 \begin{align*}
	|l|^{2\tau+1}|j|^{2\tau}|j'|^{2\tau} &\lesssim_{\tau}
	N^{2\tau+1} |j|^{2\tau}\left(|j|^{2\tau}+ N^{2\tau}\right) \lesssim N^{4 \tau + 1} \langle j \rangle^{4 \tau}
	\end{align*}
to deduce the Lipschitz estimates as usual.
	By Lemma \ref{rmk real rev anti sa fourier} of the Appendix, if $A = {\rm diag}_{j \in \Z^d} \lambda_j$ and $P$ are real and reversible one easily get that $X$ is real and reversible too. 
\end{proof}
The estimate \eqref{stima Xn} follows from \eqref{stima x p} and \eqref{stima ricorsiva Pn beta}. Moreover, using that by \eqref{vincoli costanti alpha beta}, $\alpha > 6 \tau + 3$ and by using the smallness condition \eqref{piccolezza1}, one gets that 
\begin{equation}\label{stima Xk leq 1}
\| X_k\|_{{\cal M}^s_{\sigma \pm m, \sigma \pm m}}^{\rm Lip} \leq \delta(s) 
\end{equation}
for some $\delta(s) \in (0,1)$ small enough. Therefore, one can apply Lemma \ref{Neumann series} implying that 
\begin{equation}\label{stima Phi k inv - Id}
\begin{aligned}
& \| \Phi_k^{- 1} - {\rm Id} \|_{{\cal M}_{\sigma \pm m, \sigma \pm m}^s}^{\rm Lip} \lesssim_{s, \sigma} \| X_k\|_{{\cal M}_{\sigma \pm m, \sigma \pm m}^s}^{\rm Lip} \stackrel{\eqref{stima x p}}{\lesssim_s} N_k^{4 \tau + 2} \| P_k\|^{\Lip}\indSsigsig{s}{\s -m}{\s +m} \\
& \| \langle \nabla \rangle^\beta(\Phi_k^{- 1} - {\rm Id}) \|_{{\cal M}_{\sigma \pm m, \sigma \pm m}^s}^{\rm Lip} \lesssim_{s, \beta} \| \langle \nabla \rangle^\beta X_k\|_{{\cal M}_{\sigma \pm m, \sigma \pm m}^s}^{\rm Lip} \stackrel{\eqref{stima grad x p}}{\lesssim_{s, \beta}} N_k^{4 \tau + 2} \| \langle \nabla \rangle^\beta P_k\|^{\Lip}\indSsigsig{s}{\s -m}{\s +m}
\end{aligned}
\end{equation}
In the next lemma, we obtain key estimates for the remainder term $P_{k + 1}$ defined in \eqref{roba nuova}.
\begin{lemma} \label{rmk stime P+}
	There exists a constant $C = C(s, \sigma, \tau) > 0$ such that the operator $P_{k + 1}(\vphi)$ defined in \eqref{roba nuova} fulfills
	\begin{equation}\label{cornetto 0}
		\begin{aligned}
		& \normaSsigsig{P_{k + 1} }{\s -m}{\s +m} \leq C\Big( N_k^{4 \tau + 2} \big( \normaSsigsig{ P_k}{\s -m}{\s +m} \big)^2  
		+ N_k^{-\beta}  \normabeta{P_k}{\s -m}{\s +m} \Big), \\
		& \normabeta{P_{k + 1} }{\s -m}{\s +m} \leq C  \normabeta{ P_k}{\s -m}{\s +m}\,. 
		\end{aligned}
	\end{equation}
	Furthermore, if $P_k(\vphi)$ is real and reversible then $P_{k + 1}(\vphi)$ is real and reversible too.
\end{lemma}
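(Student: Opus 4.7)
My plan is to treat the three pieces in the decomposition
\[
P_{k+1} = \Pi_{N_k}^\bot P_k \;+\; P_k X_k \;+\; (\Phi_k^{-1} - \mathrm{Id})\bigl(\overline P_k + \Pi_{N_k}^\bot P_k + P_k X_k\bigr)
\]
separately, using the already available estimates: Lemma \ref{lemma pienne bot} (smoothing in the cutoff), Lemma \ref{lemma_sol_eq_homol} (bounds on $X_k$ via the homological equation), the Neumann-series bound \eqref{stima Phi k inv - Id} on $\Phi_k^{-1}-\mathrm{Id}$, together with a product/composition rule in the class ${\cal M}^s_{\sigma\pm m,\sigma\pm m}$ (of Moser type) and an analogous product rule for the $\beta$-regularized norm $\|\cdot\|_{{\cal W}^{s,\beta}_{\sigma-m,\sigma+m}}^{\Lip}$. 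Throughout, I will implicitly use $\|P_k\|$ to mean $\|P_k\|^{\Lip}_{{\cal M}^s_{\sigma-m,\sigma+m}}$ (and similarly for the $\beta$-norm) and exploit the smallness estimate \eqref{stima Xk leq 1} to absorb lower-order contributions into constants.

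For the first estimate in \eqref{cornetto 0}, I would apply Lemma \ref{lemma pienne bot} to get $\|\Pi_{N_k}^\bot P_k\|^{\Lip}_{{\cal M}^s_{\sigma-m,\sigma+m}} \leq N_k^{-\beta}\|P_k\|^{\Lip}_{{\cal W}^{s,\beta}_{\sigma-m,\sigma+m}}$; then bound $\|P_k X_k\|$ by the algebra property together with \eqref{stima x p}, which yields a factor $N_k^{4\tau+2}\|P_k\|^2$; finally, because $\|\Phi_k^{-1}-\mathrm{Id}\|^{\Lip}_{{\cal M}^s_{\sigma\pm m,\sigma\pm m}}$ is controlled by \eqref{stima Phi k inv - Id} by $N_k^{4\tau+2}\|P_k\|$, the third piece is dominated (after using $\|\overline P_k\|+\|\Pi_{N_k}^\bot P_k\|+\|P_k X_k\|\lesssim \|P_k\|$) by a term of the same type as $P_k X_k$, hence also absorbed in $N_k^{4\tau+2}\|P_k\|^2$.

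The main obstacle is the second estimate, namely propagating the $\beta$-regularity \emph{without} accumulating the huge factor $N_k^{4\tau+2}$. For this I would use a Leibniz-type inequality
\[
\|\langle\nabla\rangle^\beta (AB)\|^{\Lip}_{{\cal M}^s} \lesssim \|\langle\nabla\rangle^\beta A\|^{\Lip}_{{\cal M}^s}\|B\|^{\Lip}_{{\cal M}^s} + \|A\|^{\Lip}_{{\cal M}^s}\|\langle\nabla\rangle^\beta B\|^{\Lip}_{{\cal M}^s},
\]
combined with the \emph{non-losing} bound $\|\Pi_{N_k}^\bot P_k\|^{\Lip}_{{\cal W}^{s,\beta}}\leq \|P_k\|^{\Lip}_{{\cal W}^{s,\beta}}$ (first part of Lemma \ref{lemma pienne bot}). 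The key point is that, although $\|\langle\nabla\rangle^\beta X_k\|$ carries the factor $N_k^{4\tau+2}\|\langle\nabla\rangle^\beta P_k\|$, it is multiplied by $\|P_k\|^{\Lip}_{{\cal M}^s_{\sigma-m,\sigma+m}}$, which by the inductive estimate \eqref{stima ricorsiva Pn beta} is smaller than $C_*N_{k-1}^{-\alpha}\e$, and by the smallness \eqref{piccolezza1}, $N_k^{4\tau+2}N_{k-1}^{-\alpha}\e\leq N_k^{4\tau+2}\e \leq 1$ once $\alpha>12\tau+7$ is used to absorb the $N_k$ powers; hence such terms are absorbed into the constant $C$. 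The term coming from $(\Phi_k^{-1}-\mathrm{Id})$ is handled analogously by inserting \eqref{stima Phi k inv - Id} (both rows) into the Leibniz rule.

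Finally, reality and reversibility of $P_{k+1}$ follow automatically: $\Pi_{N_k}^\bot$, diagonal averaging $\overline{P_k}$, and the Neumann series defining $\Phi_k^{-1}$ all preserve these properties by Lemma \ref{rmk real rev anti sa fourier} of the appendix, and by Lemma \ref{lemma_sol_eq_homol} $X_k$ is real and reversibility preserving when $P_k$ is real and reversible; since the composition of a reversibility preserving operator with a reversible one is reversible, each summand in the decomposition of $P_{k+1}$ is real and reversible, and so is their sum.
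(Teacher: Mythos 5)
Your proposal follows essentially the same route as the paper: the same three-term decomposition of $P_{k+1}$, the smoothing estimate of Lemma \ref{lemma pienne bot}, the bounds \eqref{stima x p}, \eqref{stima grad x p} on $X_k$ and \eqref{stima Phi k inv - Id} on $\Phi_k^{-1}-{\rm Id}$, the product rule of Lemma \ref{stima composizione} (your ``Leibniz-type inequality''), and the absorption of the dangerous terms via $N_k^{4\tau+2}\normaSsigsig{P_k}\lesssim N_k^{4\tau+2}N_{k-1}^{-\alpha}\e\leq 1$, with the same argument for reality and reversibility via Lemma \ref{lemma_sol_eq_homol} and \eqref{roba nuova}. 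One small correction: the chain $N_k^{4\tau+2}N_{k-1}^{-\alpha}\e\leq N_k^{4\tau+2}\e\leq 1$ is wrong as written, since $N_k^{4\tau+2}\e$ is not bounded uniformly in $k$ by \eqref{piccolezza1}; you must keep the factor $N_{k-1}^{-\alpha}$ and use $N_k^{4\tau+2}N_{k-1}^{-\alpha}=N_{k-1}^{6\tau+3-\alpha}\leq 1$ (which holds since $\alpha>6\tau+3$ by \eqref{vincoli costanti alpha beta}), exactly the mechanism your parenthetical remark intends.
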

\begin{proof}
	By recalling the definition of $P_{k +1}$ given in \eqref{roba
          nuova}, using the inductive estimates \eqref{stima x p},
        \eqref{stima grad x p}, and the estimate \eqref{stima Phi k
          inv - Id}, by applying Lemma \ref{lemma pienne bot} and
        Lemma \ref{stima composizione} in the appendix, which gives an
        estimate of the product of operators, we get
	\begin{align} 
		\normaSsigsig{P_{k + 1} }{\s -m}{\s +m} &\lesssim_{s, \sigma} N_k^{4 \tau + 2} \big(\normaSsigsig{ P_k}{\s -m}{\s +m} \big)^2  \nonumber\\
		& \quad + N_k^{-\beta} \left(\| P_k\|^{\Lip} \indSsigsig{s+\beta}{\s -m}{\s +m} + \normaSsigsig{ \Gradbeta P_k}{\s -m}{\s +m}\right),\label{stima P k + 1 s}\\ \label{stima s+b}
		\| P_{k + 1}\|^{\Lip} \indSsigsig{s+\beta}{\s -m}{\s +m} &\lesssim_{s, \sigma} N_k^{4 \tau + 2}  \normaSsigsig{ P_k}{\s -m}{\s +m} \| P_k\|^{\Lip} \indSsigsig{s+\beta}{\s -m}{\s +m} + \| P_k\|^{\Lip} \indSsigsig{s+\beta}{\s -m}{\s +m}, \\ 
		\normaSsigsig{ \Gradbeta P_{k + 1}}{\s -m}{\s +m} &\lesssim_{s, \sigma}  \normaSsigsig{ \Gradbeta P_k}{\s -m}{\s +m} \nonumber\\
		& \quad + N_k^{4 \tau + 2}  \normaSsigsig{ P_k}{\s -m}{\s +m} \normaSsigsig{ \Gradbeta P_k}{\s -m}{\s +m} . \label{stima gradb} 
	\end{align}
	Recalling that
	$
	\normabeta{\cdot}{\s -m}{\s +m }= \| \cdot\|^{\Lip} \indSsigsig{s+\beta}{\s -m}{\s +m} + \normaSsigsig{ \Gradbeta \cdot}{\s -m}{\s +m}
	$
	and summing up the contribution of \eqref{stima s+b}, \eqref{stima gradb}, we get
	\begin{equation}\label{cornetto}
	\begin{aligned}
		\normaSsigsig{P_{k + 1} }{\s -m}{\s +m} &\lesssim N_k^{4 \tau + 2} \big(\normaSsigsig{ P_k}{\s -m}{\s +m} \big)^2  
		+ N_k^{-\beta} \normabeta{P_k}{\s -m}{\s +m},\\ 
		\normabeta{P_{k + 1}}{\s -m}{\s +m} &\lesssim N_k^{4 \tau + 2}  \normaSsigsig{ P_k}{\s -m}{\s +m} \normabeta{P_k}{\s -m}{\s +m} + \normabeta{P_k}{\s -m}{\s +m}.
	\end{aligned}
	\end{equation}
	Furthermore, by using the smallness condition \eqref{piccolezza1}, recalling the definition \eqref{definizione Nk}, using that $\alpha > 6 \tau + 3$, taking $N_0$ large enough and $\e$ small enough one gets that
	$$
	N_k^{4 \tau + 2}  \normaSsigsig{ P_k}{\s -m}{\s +m} \stackrel{}{\lesssim} N_k^{4 \tau + 2} N_{k - 1}^{- \alpha}\e  \leq 1\
	$$
and then \eqref{cornetto} implies the claimed estimate \eqref{cornetto 0}. 
	
	\noindent
	Finally, if $P_k$ is real and reversible, then by Lemma \ref{lemma_sol_eq_homol}, the operator $X_k$ (and hence $\Phi_k = {\rm Id} + X_k$ and $\Phi_k^{- 1}$) is real and reversibility preserving.  By the definition \eqref{roba nuova}, one concludes that $P_{k + 1}$ is real and reversible. 
\end{proof}
By Lemma \ref{rmk stime P+} one has 
\begin{align}
\normabeta{P_{k + 1} }{\s -m}{\s +m} & \leq C  \normabeta{ P_k}{\s -m}{\s +m} \stackrel{\eqref{stima ricorsiva Pn beta}}{\leq} C C_* \e N_{k - 1} \leq C_* \e N_k \nonumber
\end{align}
provided $CN_{k - 1} \leq N_k$ for any $k \geq 0$. This latter condition is verified by taking $N_0 > 0$ large enough. Furthermore 
\begin{align}
\normaSsigsig{P_{k + 1} }{\s -m}{\s +m} & \leq C N_k^{4 \tau + 2} \big( \normaSsigsig{ P_k}{\s -m}{\s +m} \big)^2  
		+C N_k^{-\beta}  \normabeta{P_k}{\s -m}{\s +m} \nonumber\\
		& \stackrel{\eqref{stima ricorsiva Pn beta}}{\leq} C N_k^{4 \tau + 2} C_*^2 \e^2 N_{k - 1}^{- 2 \alpha} + C N_k^{- \beta} C_* N_{k - 1} \e \leq C_* \e N_k^{- \alpha} \nonumber
\end{align}
provided 
$$
2 C N_k^{\alpha + 4 \tau + 2} N_{k - 1}^{- 2 \alpha} \e \leq 1\,, \quad \,, 2 C N_k^{\alpha - \beta} N_{k - 1} \leq 1 \quad \forall k \geq 0\,. 
$$
The above conditions are verified by \eqref{vincoli costanti alpha beta}, the smallness condition \eqref{piccolezza1}, recalling the definition \eqref{definizione Nk} and taking $\e$ small enough and $N_0$ large enough. Hence the estimate \eqref{stima ricorsiva Pn beta} is proved at the step $k + 1$. The proof of ${\bf ({S}1)}_{k + 1}$ is then concluded. 

\medskip

\noindent	
{\sc Proof  of ${\bf ({S}2)}_{k + 1}$.} By the estimate \eqref{stima rho k + 1 - rho k}, on the set ${\cal O}_{k, \gamma}$, $\delta_j^{(k)} := \rho_j^{(k + 1)} - \rho_j^{(k)}$ satisfies $|\delta_j^{(k)}|^{\rm Lip} \lesssim \langle j \rangle^{- 2 m} \| P_k\|_{{\cal M}^s_{\sigma - m, \sigma + m}}^{\rm Lip}  \lesssim_{s, \sigma}  \langle j \rangle^{- 2 m} N_{k-1}^{-\alpha} \e$ for any $j \in \Z^d$. By the Kirszbraun Theorem (see Lemma M.5 in \cite{KapPoe}), we extend the function $\delta_j^{(k)} : {\cal O}_{k, \gamma} \to \C$ to a function $\widetilde \delta_j^{(k)} : \Omega_{0, \gamma} \to \C$ which still satisfies the estimate $|\widetilde \delta_j^{(k)}|^{\rm Lip} \lesssim \langle j \rangle^{- 2 m} \| P_k\|_{{\cal M}^s_{\sigma - m, \sigma + m}}^{\rm Lip}  \lesssim_{s, \sigma}  \langle j \rangle^{- 2 m} N_{k-1}^{-\alpha} \e$. Therefore, ${\bf ({S}2)}_{k + 1}$ follows by defining $\widetilde \rho_j^{(k + 1)} := \widetilde \rho_j^{(k)} + \widetilde \delta_j^{(k)}$ and $\widetilde \lambda_j^{(k + 1)} = \lambda_j^{(0)} + \widetilde \rho_j^{(k + 1)}$ (note that $\lambda_j^{(0)}$ is already defined on $\Omega_{0, \gamma}$). Note that in the real and reversible case, one has that $\rho_j^{(k)}, \lambda_j^{(k)} : {\cal O}_{\gamma, k- 1} \to \ii \R$, $\widetilde \rho_j^{(k)}, \widetilde \lambda_j^{(k)} : \Omega_{ 0, \gamma} \to \ii \R$, $\delta_j^{(k)} : {\cal O}_{k, \gamma} \to \ii \R$ and hence $\widetilde \lambda_j^{(k + 1)}\,,\, \widetilde \rho_j^{(k +1)} : \Omega_{0, \gamma} \to \ii \R$. 
	
\subsection{Passing to the limit and completing the diagonalization procedure}	

By Theorem \ref{thm:abstract linear reducibility}-${\bf ({S}2)}_{k}$,
using a telescoping argument, for any $j \in \Z^d$, the sequence
$(\widetilde \rho_j^{(k)})_{k \geq 0}$ is a Cauchy sequence w.r. to
the norm $| \cdot |^{\rm Lip}$ in $\Omega_{0, \gamma}$, and hence it converges to
$\rho_j^{(\infty)}$. The following estimates hold:
\begin{equation}\label{stime autovalori finali}
|\widetilde \rho_j^{(k)} - \rho_j^{(\infty)}|^{\rm Lip} \lesssim_{s, \sigma} \langle j \rangle^{- 2m} N_{k-1}^{-\alpha} \e\,,  \qquad \, |\rho_j^{(\infty)}|^{\rm Lip} \lesssim_{s, \sigma} \langle j \rangle^{- 2m}  \e\,. 
\end{equation}
Note that in the real and reversible case, $\rho_j^{(\infty)} : \Omega_{0, \gamma} \to \ii \R$ for any $j \in \Z^d$.

\noindent
We then define the {\it final eigenvalues} $\lambda_j^{(\infty)} : \Omega_{0, \gamma} \to \C$ as 
\begin{equation}\label{definizione autovalori finali}
\lambda_j^{(\infty)} := \lambda_j^{(0)} + \rho_j^{(\infty)} \stackrel{\eqref{parte diagonale inizio riducibilita}}{=} \ii \muzero \cdot j + z(j) + \rho_j^{(\infty)}\,, \quad j \in \Z^d\,. 
\end{equation}
We then define 
\begin{equation}\label{Cantor finale}
\begin{aligned}
{\cal O}_{\infty, \gamma} & :=  \Big\{\widetilde \omega = (\omega, \nu) \in \Omega_{0, \gamma} \ :  \ |\ii \omega \cdot l + \lambda_j^{(\infty)}(\widetilde \omega) - \lambda_{j'}^{(\infty)}(\widetilde \omega)| \geq \frac{2\gamma}{\langle l \rangle^\tau \langle j \rangle^\tau \langle j'\rangle^\tau} \\
& \forall (l , j, j') \neq (0, j, j) \quad \Big\}\,. 
\end{aligned}
\end{equation}
The following lemma holds.
\begin{lemma}\label{inclusione cantor finale}
One has ${\cal O}_{\infty, \gamma} \subseteq \cap_{k \geq 0} {\cal O}_{k, \gamma}$. 
\end{lemma}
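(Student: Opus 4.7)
The plan is to proceed by induction on $k \geq 0$, showing that $\mathcal{O}_{\infty,\gamma} \subseteq \mathcal{O}_{k,\gamma}$. The base case $k=0$ is immediate, since $\mathcal{O}_{0,\gamma} = \Omega_{0,\gamma}$ and $\mathcal{O}_{\infty,\gamma} \subseteq \Omega_{0,\gamma}$ by definition \eqref{Cantor finale}.

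For the inductive step, fix $\widetilde\omega \in \mathcal{O}_{\infty,\gamma}$ and assume $\widetilde\omega \in \mathcal{O}_{k-1,\gamma}$. Since the extensions produced by Kirszbraun in $\mathbf{(S2)}_i$ agree with the original functions on their domain of definition, one has $\widetilde\lambda_j^{(i)}(\widetilde\omega) = \lambda_j^{(i)}(\widetilde\omega)$ for all $0 \leq i \leq k-1$. A telescoping sum based on \eqref{lambdaestesi} and the convergence estimate \eqref{stime autovalori finali}, together with the geometric growth $N_i = N_0^{(3/2)^i}$, yields
\begin{equation*}
|\widetilde\lambda_j^{(k-1)}(\widetilde\omega) - \lambda_j^{(\infty)}(\widetilde\omega)| \;\lesssim_{s,\sigma}\; \e\,\langle j\rangle^{-2m}\, N_{k-2}^{-\alpha},
\end{equation*}
with the convention that $N_{k-2}^{-\alpha} = 1$ for $k=1$ (in which case one invokes \eqref{stime autovalori finali} directly).

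Next I would write, for $(l,j,j') \neq (0,j,j)$ with $|l|, |j-j'| \leq N_{k-1}$,
\begin{equation*}
\ii\omega\cdot l + \lambda_j^{(k-1)} - \lambda_{j'}^{(k-1)} \;=\; \bigl(\ii\omega\cdot l + \lambda_j^{(\infty)} - \lambda_{j'}^{(\infty)}\bigr) - \bigl(\lambda_j^{(\infty)} - \lambda_j^{(k-1)}\bigr) + \bigl(\lambda_{j'}^{(\infty)} - \lambda_{j'}^{(k-1)}\bigr),
\end{equation*}
and use membership in $\mathcal{O}_{\infty,\gamma}$ (which gives the lower bound $2\gamma/(\langle l\rangle^\tau\langle j\rangle^\tau\langle j'\rangle^\tau)$ for the first term, and in particular is valid with no restriction on $l,j,j'$) together with the triangle inequality to get
\begin{equation*}
\bigl|\ii\omega\cdot l + \lambda_j^{(k-1)} - \lambda_{j'}^{(k-1)}\bigr| \;\geq\; \frac{2\gamma}{\langle l\rangle^\tau\langle j\rangle^\tau\langle j'\rangle^\tau} - C\e N_{k-2}^{-\alpha}\bigl(\langle j\rangle^{-2m} + \langle j'\rangle^{-2m}\bigr).
\end{equation*}
It then suffices to show the error term is at most $\gamma/(\langle l\rangle^\tau\langle j\rangle^\tau\langle j'\rangle^\tau)$, which is equivalent to
\begin{equation*}
C\e\,N_{k-2}^{-\alpha}\,\langle l\rangle^\tau\langle j\rangle^\tau\langle j'\rangle^\tau\bigl(\langle j\rangle^{-2m} + \langle j'\rangle^{-2m}\bigr) \;\leq\; \gamma.
\end{equation*}

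The last step is an elementary bookkeeping. Using $|l|\leq N_{k-1}$ and $|j-j'|\leq N_{k-1}$, one has $\langle j\rangle\langle j'\rangle \lesssim N_{k-1}\min(\langle j\rangle,\langle j'\rangle)^2$, and $\langle j\rangle^{-2m} + \langle j'\rangle^{-2m} \lesssim \min(\langle j\rangle,\langle j'\rangle)^{-2m}$, so the condition reduces to $C\e\,N_{k-2}^{-\alpha}\,N_{k-1}^{2\tau}\,\min(\langle j\rangle,\langle j'\rangle)^{2\tau - 2m} \lesssim \gamma$. Since $m = 2\tau + 2 > \tau$, the $\min(\cdot)^{2\tau - 2m}$ factor is $\leq 1$. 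Finally, $N_{k-1} = N_{k-2}^{3/2}$ yields $N_{k-2}^{-\alpha}N_{k-1}^{2\tau} = N_{k-2}^{3\tau - \alpha} \leq 1$ because $\alpha = 12\tau + 7 > 3\tau$. The remaining condition $C\e \leq \gamma = N_0^{-1}$ is then ensured by the smallness hypothesis $N_0^{C_0}\e\leq\delta$, provided $C_0$ is chosen large enough. The main (very mild) obstacle is verifying that $m > \tau$ and $\alpha > 3\tau$ suffice to absorb both the growth $N_{k-1}^{2\tau}$ and the possibly large $\langle l\rangle^\tau\langle j\rangle^\tau\langle j'\rangle^\tau$; this is precisely why $m$ and $\alpha$ were taken so generously in \eqref{vincoli costanti alpha beta}.
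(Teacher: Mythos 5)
Your proposal is correct and follows essentially the same route as the paper: induction on $k$, the identification $\widetilde\lambda_j^{(k-1)}(\widetilde\omega)=\lambda_j^{(k-1)}(\widetilde\omega)$ on the smaller set, the triangle inequality against the unrestricted bound defining ${\cal O}_{\infty,\gamma}$, the closeness estimate \eqref{stime autovalori finali}, and the same elementary bookkeeping with $|l|,|j-j'|\leq N_{k-1}$, $m>\tau$ and $\alpha$ large. The only differences are an index shift and a marginally sharper power of $N$ ($N_{k-1}^{2\tau}$ versus the paper's $N_k^{3\tau}$), which is immaterial.
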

\begin{proof}
We prove by induction that for any $k \geq 0$ one has ${\cal O}_{\infty, \gamma} \subseteq {\cal O}_{k, \gamma}$. For $k = 0$, it follows by definition that ${\cal O}_{\infty, \gamma} \subseteq {\cal O}_{0, \gamma}$ since ${\cal O}_{0, \gamma} = \Omega_{0, \gamma}$. Then assume that ${\cal O}_{\infty, \gamma} \subseteq {\cal O}_{k, \gamma}$ for some $k \geq 0$ and let us show that ${\cal O}_{\infty, \gamma} \subseteq {\cal O}_{k + 1, \gamma}$. Let $\widetilde \omega= (\omega, \nu) \in {\cal O}_{\infty, \gamma}$. Since by the induction hypothesis $\widetilde \omega \in {\cal O}_{k, \gamma}$ one has that by Theorem \ref{thm:abstract linear reducibility}-${\bf (S1)}_k$, $\lambda_j^{(k)}(\widetilde \omega)$ is well defined and by Theorem \ref{thm:abstract linear reducibility}-${\bf (S2)}_k$ one has that $\widetilde \lambda_j^{(k)}(\widetilde \omega) = \lambda_j^{(k)}(\widetilde \omega)$ and $\widetilde \rho_j^{(k)}(\widetilde \omega) = \rho_j^{(k)}(\widetilde \omega)$ (recall that $\lambda_j^{(k)} = \lambda_j^{(0)} + \rho_j^{(k)}$ and $\widetilde \lambda_j^{(k)} = \lambda_j^{(0)} + \widetilde \rho_j^{(k)}$).  
We then have that for any $(l, j, j') \neq (0, j, j)$, $|l|, |j - j'| \leq N_k$, 
\begin{align}
|\ii \omega \cdot l + \lambda_j^{(k)}(\widetilde \omega) - \lambda_{j'}^{(k)}(\widetilde \omega)| & \geq |\ii \omega \cdot l + \lambda_j^{(\infty)}(\widetilde \omega) - \lambda_{j'}^{(\infty)}(\widetilde \omega)| - |\widetilde \rho_j^{(k)}(\widetilde \omega) - \rho_j^{(\infty)}(\widetilde \omega) |  \nonumber\\
& \quad -   |\widetilde \rho_{j'}^{(k)}(\widetilde \omega) - \rho_{j'}^{(\infty)}(\widetilde \omega) | \nonumber\\
& \stackrel{\eqref{stime autovalori finali}}{\geq}  \frac{2 \gamma}{\langle l \rangle^\tau \langle j \rangle^\tau \langle j' \rangle^\tau} -  \frac{C \e}{ N_{k - 1}^\alpha{\rm min}\{ \langle j \rangle , \langle j' \rangle \}^{2m}}  \nonumber\\
& \geq \frac{\gamma}{\langle l \rangle^\tau \langle j \rangle^\tau \langle j' \rangle^\tau} \nonumber
\end{align}
provided 
\begin{equation}\label{condizione inclusione nel lemma}
\frac{C \e \langle l \rangle^\tau \langle j \rangle^\tau \langle j' \rangle^\tau}{\gamma N_{k - 1}^\alpha {\rm min}\{ \langle j \rangle , \langle j' \rangle \}^{2m} } \leq 1\,. 
\end{equation}
Using that $|l|, |j - j'| \leq N_k$, $m > \tau$ and since 
$$
\langle j \rangle \langle j' \rangle \leq \big( \langle j - j' \rangle + {\rm min}\{ \langle j \rangle, \langle j' \rangle \} \big)^2 \lesssim \langle j - j' \rangle^2 + {\rm min}\{ \langle j \rangle, \langle j' \rangle \}^2  \lesssim N_k^2 + {\rm min}\{ \langle j \rangle, \langle j'\rangle \}^2
$$ 
one gets that 
\begin{align}
\frac{\langle l \rangle^\tau \langle j \rangle^\tau \langle j' \rangle^\tau}{{\rm min}\{ \langle j \rangle , \langle j' \rangle \}^{2m}} & \lesssim   N_k^{3 \tau}\,.
\end{align}
Therefore 
$$
\frac{C \e \langle l \rangle^\tau \langle j \rangle^\tau \langle j' \rangle^\tau}{\gamma N_{k - 1}^\alpha {\rm min}\{ \langle j \rangle , \langle j' \rangle \}^{2m} } \leq C' \e \gamma^{- 1} N_k^{3 \tau} N_{k - 1}^{- \alpha} \leq 1
$$
since $\alpha > \frac92 \tau$ (see \eqref{vincoli costanti alpha beta}) and by taking $\e$ small enough (see the smallness condition \eqref{piccolezza1} and recall that $\gamma^{- 1} = N_0$). Condition \eqref{condizione inclusione nel lemma} is then verified and hence $\widetilde \omega \in {\cal O}_{k + 1, \gamma}$. This concludes the proof of the lemma. 
\end{proof}
For any $k \geq 0$, $\widetilde \omega \in {\cal O}_{\infty, \gamma}$ we define the map 
\begin{equation}\label{definizione mappe iterate riducibilita}
{\cal V}_k(\vphi, \widetilde \omega) \equiv {\cal V}_k(\vphi) := \Phi_0(\vphi) \circ \Phi_1(\vphi) \circ \ldots \circ \Phi_k(\vphi)\,. 
\end{equation}
Note that by Lemma \ref{inclusione cantor finale} and Theorem \ref{thm:abstract linear reducibility} all the maps $\Phi_k(\vphi)$ are well defined for $\widetilde \omega \in {\cal O}_{\infty, \gamma}$.

\noindent
The following lemma holds
\begin{lemma}\label{lemma trasformazione finale KAM}
The sequence $({\cal V}_k)_{k \geq 0}$ converges to an invertible operator ${\cal V}_\infty$ in ${\cal L}ip\Big({\cal O}_{\infty, \gamma}; {\cal H}^s\big( \T^{\frak n}; {\cal B}(
 {\cal H}^{\sigma \pm m}, {\cal H}^{\sigma \pm m}\big) \Big)$ and the operator ${\cal V}_\infty^{\pm 1} - {\rm Id}$ satisfies the estimate 
$$
\| {\cal V}_\infty^{\pm 1} - {\rm Id}\|^{\rm Lip}_{H^s\big( \T^{\frak n}, {\cal B}(
 {\cal H}^{\sigma \pm m}, {\cal H}^{\sigma \pm m}\big)}   \lesssim_{s, \sigma} N_0^{4 \tau + 2} \e\,. 
$$
Moreover in the real and reversible case, ${\cal V}_\infty^{\pm 1}$ is real and reversibility preserving. 
\end{lemma}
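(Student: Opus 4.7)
The plan is to show that $(\mathcal{V}_k)_{k \geq 0}$ is a Cauchy sequence in the Banach space indicated via a telescoping argument based on the identity $\mathcal{V}_k - \mathcal{V}_{k-1} = \mathcal{V}_{k-1} X_k$, to deduce the norm estimate from summability of the $\|X_k\|$, and finally to obtain invertibility by running the same scheme on the sequence of inverses.

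First I would observe that $\mathcal{V}_k = \mathcal{V}_{k-1} \circ \Phi_k = \mathcal{V}_{k-1} + \mathcal{V}_{k-1} X_k$, so by the composition estimates for the class $\mathcal{M}^s$ (Lemma \ref{stima composizione} in the appendix), together with the embedding $\mathcal{M}^s_{\sigma \pm m, \sigma \pm m} \hookrightarrow \mathcal{H}^s(\mathbb{T}^{\frak n}; \mathcal{B}(\mathcal{H}^{\sigma \pm m}))$ that holds since $s > [\frak n/2]+1$, one obtains
\[
\|\mathcal{V}_k - \mathcal{V}_{k-1}\|^{\mathrm{Lip}} \lesssim_{s,\sigma} \|\mathcal{V}_{k-1}\|^{\mathrm{Lip}} \, \|X_k\|^{\mathrm{Lip}}_{\mathcal{M}^s_{\sigma \pm m, \sigma \pm m}} \stackrel{\eqref{stima Xn}}{\lesssim_{s,\sigma}} \|\mathcal{V}_{k-1}\|^{\mathrm{Lip}} \, N_k^{4\tau+2} N_{k-1}^{-\alpha} \varepsilon.
\]
Using $\alpha = 12\tau + 7$ from \eqref{vincoli costanti alpha beta} and $N_k = N_0^{(3/2)^k}$, the inequality $(3/2)(4\tau+2) < \alpha$ holds, so the series $\sum_{k \geq 0} N_k^{4\tau+2} N_{k-1}^{-\alpha}$ converges super-exponentially, with leading term $N_0^{4\tau+2}$ (recall $N_{-1}=1$). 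The smallness condition \eqref{piccolezza1} together with an induction gives the uniform bound $\|\mathcal{V}_k\|^{\mathrm{Lip}} \leq \prod_{j=0}^k (1 + C\|X_j\|^{\mathrm{Lip}}) \leq 2$, so $(\mathcal{V}_k)$ is Cauchy and summing the telescoping identity yields
\[
\|\mathcal{V}_\infty - \mathrm{Id}\|^{\mathrm{Lip}} \leq \|X_0\|^{\mathrm{Lip}} + \sum_{k \geq 1} \|\mathcal{V}_{k-1} X_k\|^{\mathrm{Lip}} \lesssim_{s,\sigma} N_0^{4\tau+2} \varepsilon.
\]

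For invertibility I would apply the identical argument to the sequence $\mathcal{W}_k := \Phi_k^{-1} \circ \cdots \circ \Phi_0^{-1}$: each $\Phi_k^{-1}$ exists by Neumann series and satisfies $\|\Phi_k^{-1} - \mathrm{Id}\|^{\mathrm{Lip}}_{\mathcal{M}^s} \lesssim \|X_k\|^{\mathrm{Lip}}_{\mathcal{M}^s}$ in view of \eqref{stima Phi k inv - Id}. The same telescoping therefore gives convergence of $(\mathcal{W}_k)$ to some $\mathcal{W}_\infty$ obeying the same norm bound, and passing to the limit in $\mathcal{V}_k \mathcal{W}_k = \mathcal{W}_k \mathcal{V}_k = \mathrm{Id}$ identifies $\mathcal{W}_\infty = \mathcal{V}_\infty^{-1}$. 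In the real and reversible case, Theorem \ref{thm:abstract linear reducibility} guarantees that each $\Phi_k$ is real and reversibility preserving, and since both properties are closed under composition and under operator-norm limits, $\mathcal{V}_\infty^{\pm 1}$ inherit them.

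The only delicate point is balancing the growth factor $N_k^{4\tau+2}$ coming from the loss of derivatives in solving the homological equation (Lemma \ref{lemma_sol_eq_homol}) against the decay $N_{k-1}^{-\alpha}$ of $\|P_k\|$; the constants in \eqref{vincoli costanti alpha beta} are tuned precisely so that the combined exponent $N_0^{(3/2)^{k-1}((3/2)(4\tau+2)-\alpha)}$ is super-exponentially small. Beyond this book-keeping the rest is a routine Neumann-type manipulation in the Banach algebra of operators endowed with $\|\cdot\|^{\mathrm{Lip}}_{\mathcal{M}^s}$.
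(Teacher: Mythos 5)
Your proposal is correct and is precisely the ``standard argument'' the paper invokes: the paper omits the proof, referring to Corollary 4.1 of \cite{Mon17a}, and only remarks that the factor $N_0^{4\tau+2}$ comes from the $k=0$ bound $\|\Phi_0-{\rm Id}\|^{\rm Lip}_{{\cal M}^s_{\sigma\pm m,\sigma\pm m}}\lesssim_{s,\sigma}N_0^{4\tau+2}\e$, exactly as in your telescoping sum. Your treatment of the inverses via $\Phi_k^{-1}$ (Lemma \ref{Neumann series}, estimate \eqref{stima Phi k inv - Id}) and of reality/reversibility preservation under composition and limits fills in the omitted details in the intended way.
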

\begin{proof}
The proof is based on standard arguments and therefore it is omitted (see for instance the proof of Corollary 4.1 in \cite{Mon17a}). The presence of $N_0^{4 \tau + 2}$ in front of $\e$ in the claimed inequality is due to the fact that \eqref{stima Xn} for $k = 0$ gives $\| \Phi_0 - {\rm Id}\|^{\rm Lip}_{{\cal M}^s_{\sigma \pm m, \sigma \pm m}} \lesssim_{s, \sigma} N_0^{4 \tau + 2} \e$. 
\end{proof}
\begin{lemma}\label{final conjugacy}
For any $\widetilde \omega \in {\cal O}_{\infty, \gamma}$, one has that $({\cal V}_\infty)_{\omega*} (A_0 +P_0) = H_\infty$ (recall \eqref{definizione A0 + P0}) where the operator $H_\infty$ is given by $H_\infty = {\rm diag}_{j \in \Z^d} \lambda_j^{(\infty)}$. Furthermore in the real and reversible case, the eigenvalues $\lambda_j^{(\infty)}$ are purely imaginary. 
\end{lemma}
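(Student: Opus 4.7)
The plan is to pass to the limit $k \to \infty$ in the telescoped identity produced by the reducibility step. First I would verify by induction on $k$ the finite-step identity
\[
({\cal V}_k)_{\omega*}(A_0 + P_0) = A_{k+1} + P_{k+1}\,,
\]
using \eqref{Lnu+1} and the composition rule $(\Phi \circ \Psi)_{\omega*}(\cdot) = \Psi_{\omega*}\big(\Phi_{\omega*}(\cdot)\big)$ together with the definition \eqref{definizione mappe iterate riducibilita} of ${\cal V}_k$. Note that, by Lemma \ref{inclusione cantor finale}, if $\widetilde\omega \in {\cal O}_{\infty,\gamma}$ then $\widetilde\omega \in {\cal O}_{k,\gamma}$ for every $k$, so all the transformations $\Phi_k$ and all the diagonal operators $A_k$ appearing in this identity are well defined.

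Next, I would take the limit term by term. For the right hand side, the estimate \eqref{stima ricorsiva Pn beta} gives $\| P_{k+1}\|_{{\cal M}^s_{\sigma -m,\sigma +m}} \leq C_\ast N_k^{-\alpha}\e \to 0$, and therefore $P_{k+1} \to 0$ in ${\cal B}({\cal H}^{\sigma+m}, {\cal H}^{\sigma-m})$ (after the Sobolev embedding $H^s(\T^{\frak n}) \hookrightarrow C^0$ in the $\varphi$-variable, ensured by $s > [\frak n/2]+1$). On the other hand, by \eqref{stime autovalori finali} the Lipschitz-extended sequence $\widetilde\lambda_j^{(k)}$ converges to $\lambda_j^{(\infty)}$ uniformly in $j$, with $|\widetilde\lambda_j^{(k)} - \lambda_j^{(\infty)}| \lesssim_{s,\sigma} \langle j\rangle^{-2m} N_{k-1}^{-\alpha}\e$; since on ${\cal O}_{\infty,\gamma}$ one has $\widetilde\lambda_j^{(k)} = \lambda_j^{(k)}$ by ${\bf (S2)}_k$, this implies that $A_{k+1} \to H_\infty$ in, say, ${\cal B}({\cal H}^{\sigma+1}, {\cal H}^{\sigma-1})$, the loss of one derivative being dictated by the unperturbed part $\ii\muzero\cdot j$, which is common to all $A_k$ and $H_\infty$ and therefore cancels out.

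For the left hand side, Lemma \ref{lemma trasformazione finale KAM} provides the convergence ${\cal V}_k^{\pm 1} \to {\cal V}_\infty^{\pm 1}$ in the space ${\cal H}^s(\T^{\frak n}; {\cal B}({\cal H}^{\sigma\pm m}))$. Because $s > [\frak n/2]+1$, this also gives the convergence $\omega\cdot\partial_\varphi {\cal V}_k \to \omega\cdot \partial_\varphi {\cal V}_\infty$ in $C^0(\T^{\frak n}; {\cal B}({\cal H}^{\sigma\pm m}))$ via Sobolev embedding. Combining these facts with $A_0 + P_0 \in {\cal B}({\cal H}^{\sigma+m}, {\cal H}^{\sigma-m-1})$, one concludes that $({\cal V}_k)_{\omega*}(A_0+P_0) \to ({\cal V}_\infty)_{\omega*}(A_0+P_0)$ in the appropriate operator topology, and so $({\cal V}_\infty)_{\omega*}(A_0+P_0) = H_\infty$.

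The main obstacle I anticipate is purely bookkeeping: keeping track of the precise scale of Sobolev spaces on which each convergence takes place, since $A_k$ is unbounded while $P_k$ is smoothing of high order, and the two limiting processes live naturally in different spaces. The cleanest route should be to subtract the common unbounded part $D_0 = \ii\muzero\cdot\nabla$ from both sides and work with the bounded remainders $A_k - D_0$ and $P_k$, for which all the convergences above become norm convergences in ${\cal B}({\cal H}^\sigma)$. Finally, the purely imaginary character of $\lambda_j^{(\infty)}$ in the Sym case follows immediately from ${\bf (S1)}_k$, property \eqref{i lambda_n real}, and the reversibility-preserving extension constructed in ${\bf (S2)}_k$, by passing $\lambda_j^{(k)} \in \ii\R$ to the limit.
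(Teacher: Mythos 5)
Your proposal is correct and follows essentially the same route as the paper: one establishes the telescoped identity $({\cal V}_{k-1})_{\omega*}(A_0+P_0)=A_k+P_k$ from \eqref{Lnu+1} and \eqref{definizione mappe iterate riducibilita}, then passes to the limit using \eqref{stima ricorsiva Pn beta}, \eqref{stime autovalori finali} and Lemma \ref{lemma trasformazione finale KAM}, with the reality/reversibility statement inherited from \eqref{i lambda_n real} and ${\bf (S2)}_k$. The extra bookkeeping you supply (subtracting the common unbounded part, tracking the spaces in which each limit is taken) just fills in details the paper leaves implicit.
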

\begin{proof}
By \eqref{Lnu+1} and recalling the definition \eqref{definizione mappe iterate riducibilita}, one gets that for any $k \geq 1$
$$
  ({\cal V}_{k - 1})_{\omega*}(A_0 + P_0(\vphi)) = H_k(\vphi) = A_k + P_k(\vphi)\,. 
$$
The claimed statement then follows by passing to the limit in the above identity, recalling the definition of $A_k$ given in \eqref{cal D nu}, the definition \eqref{definizione autovalori finali}, the estimates \eqref{stima ricorsiva Pn beta}, \eqref{stime autovalori finali} and Lemma \ref{lemma trasformazione finale KAM}. 
\end{proof}
\subsection{Measure Estimates}\label{sezione stime misura}
In this section we show that the set ${\cal O}_{\infty, \gamma} $ defined in \eqref{Cantor finale} has {\it large} Lebesgue measure. We prove the following 
\begin{proposition}\label{proposizione stima di misura}
One has $|\Omega \setminus {\cal O}_{\infty, \gamma}| \lesssim\gamma$. 
\end{proposition}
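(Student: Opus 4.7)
The plan is to decompose the complement
\[
\Omega\setminus{\cal O}_{\infty,\gamma}\subseteq\bigl(\Omega\setminus\Omega_{0,\gamma}\bigr)\cup\bigcup_{(l,j,j')\neq(0,j,j)}R_{l,j,j'},
\]
where
\[
R_{l,j,j'}:=\Bigl\{\widetilde\omega\in\Omega_{0,\gamma}\,:\,\bigl|\ii\omega\cdot l+\lambda_j^{(\infty)}-\lambda_{j'}^{(\infty)}\bigr|<\tfrac{2\gamma}{\langle l\rangle^\tau\langle j\rangle^\tau\langle j'\rangle^\tau}\Bigr\}.
\]
By Remark~\ref{misura Omega 0 gamma} the first piece has measure $\lesssim\gamma$, so the whole work lies in estimating $|R_{l,j,j'}|$ and summing.

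\textbf{Measure of a single resonant set.} For fixed $(l,j,j')\neq(0,j,j)$ set
\[
\phi_{l,j,j'}(\widetilde\omega):=\ii\omega\cdot l+\ii\nu^{(0)}(\widetilde\omega)\cdot(j-j')+\bigl(z(j)-z(j')\bigr)+\bigl(\rho_j^{(\infty)}-\rho_{j'}^{(\infty)}\bigr).
\]
The key ``transversality'' claim is that in the unit direction $\eta=(l,j-j')/\sqrt{|l|^2+|j-j'|^2}\in\R^{\frak n+d}$ the Lipschitz quotient of $\phi_{l,j,j'}$ is bounded below by $\tfrac12\sqrt{|l|^2+|j-j'|^2}$. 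The linear part of $\phi_{l,j,j'}$ in $\widetilde\omega$, namely $\ii\omega\cdot l+\ii\nu\cdot(j-j')$, contributes exactly $\ii\sqrt{|l|^2+|j-j'|^2}$ in the $\eta$ direction. For the three perturbative terms:
(i) $|\nu^{(0)}-\nu|^{\Lip}\lesssim\e$ (Proposition~\ref{esempio1}) gives Lip constant $\lesssim\e|j-j'|$ for $(\nu^{(0)}-\nu)\cdot(j-j')$;
(ii) using that $z\in{\cal L}ip(\Omega_{0,\gamma};S^{1-\frak e})$ with $|\partial_\xi z|^{\Lip}\lesssim\e$ uniformly (second bound of \eqref{z lip 1}), the mean value theorem applied to the continuous extension yields $|(z(j)-z(j'))(\widetilde\omega_1)-(z(j)-z(j'))(\widetilde\omega_2)|\lesssim\e|j-j'||\widetilde\omega_1-\widetilde\omega_2|$ -- this is the pseudo-differential input the paper advertises in Lemma~\ref{stima singolo risonante};
(iii) by \eqref{stime autovalori finali}, $|\rho_j^{(\infty)}-\rho_{j'}^{(\infty)}|^{\Lip}\lesssim\e$. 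Hence the perturbation contributes at most $C\e(|l|+|j-j'|+1)$, which is absorbed in the main term provided $\e$ is small enough. A 1D Fubini slicing in the $\eta$ direction then yields
\[
|R_{l,j,j'}|\;\lesssim\;\frac{\gamma}{\max(1,|l|,|j-j'|)\,\langle l\rangle^\tau\langle j\rangle^\tau\langle j'\rangle^\tau}.
\]

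\textbf{Summation over triples.} Split the indices as:
(a) \emph{Diagonal case} $j=j'$, $l\neq0$. Since the resonance condition is $|\omega\cdot l|<2\gamma/(\langle l\rangle^\tau\langle j\rangle^{2\tau})$ (no $\nu$-dependence), $\bigcup_{j\in\Z^d}R_{l,j,j}\subseteq\{|\omega\cdot l|<2\gamma/\langle l\rangle^\tau\}$, which has measure $\lesssim\gamma/\langle l\rangle^{\tau+1}$. Summing over $l\neq0$ is convergent for $\tau>\frak n-1$ and gives $\lesssim\gamma$.
(b) \emph{Off-diagonal case} $j\neq j'$. By the single-set estimate,
\[
\sum_{l\in\Z^{\frak n}}\sum_{j\neq j'}|R_{l,j,j'}|\;\lesssim\;\gamma\sum_{l}\frac{1}{\langle l\rangle^\tau}\sum_{j\neq j'}\frac{1}{\langle j\rangle^\tau\langle j'\rangle^\tau|j-j'|},
\]
and both series converge for $\tau$ large enough (e.g. $\tau>\max(\frak n,d)+1$), yielding a total contribution $\lesssim\gamma$. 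Combining (a) and (b) with the measure of $\Omega\setminus\Omega_{0,\gamma}$ proves $|\Omega\setminus{\cal O}_{\infty,\gamma}|\lesssim\gamma$.

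\textbf{Main obstacle.} The only delicate step is the Lipschitz estimate $|z(j)-z(j')|^{\Lip}\lesssim\e|j-j'|$ in (ii), which crucially exploits that $z$ is a (Lipschitz family of) pseudo-differential symbol(s) of order $1-\frak e$ -- as the paper emphasizes, a merely bounded $z$ would only give $\e(\langle j\rangle^{1-\frak e}+\langle j'\rangle^{1-\frak e})$, insufficient to preserve the linear lower bound $\sqrt{|l|^2+|j-j'|^2}$ that powers the whole transversality argument. Once that estimate is in hand, everything else is a routine Fubini-plus-counting computation, with $\tau$ chosen sufficiently large (as already permitted in Theorem~\ref{thm:abstract linear reducibility}).
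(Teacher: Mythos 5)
Your argument is correct and follows essentially the same route as the paper's: the same decomposition of $\Omega_{0,\gamma}\setminus{\cal O}_{\infty,\gamma}$ into the resonant sets ${\cal R}_{l j j'}$, the same transversality estimate along the direction $(l,j-j')$ powered by the pseudo-differential input $\sup_\xi|\partial_\xi z(\xi,\cdot)|^{\rm Lip}\lesssim\e$ from \eqref{z lip 1} (giving $|z(j)-z(j')|^{\rm lip}\lesssim\e|j-j'|$ via the mean value theorem), and the same Fubini-plus-summation conclusion; the only differences are cosmetic, namely that you treat $\nu^{(0)}-\nu$ perturbatively where the paper straightens it by the Lipschitz change of variables $\Psi:(\omega,\nu)\mapsto(\omega,\nu^{(0)}(\omega,\nu))$ in Lemma \ref{stima singolo risonante}, and you retain an extra gain $|(l,j-j')|^{-1}$ and a separate diagonal case that the paper does not need since $\tau>\max\{\frak n,d\}$ already makes the triple series converge. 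One small correction: Proposition \ref{esempio1} bounds $|\nu^{(0)}-\nu|^{\Lipg}$, so the plain Lipschitz constant of $\nu^{(0)}-\nu$ is only $O(\e\gamma^{-1})$ rather than $O(\e)$; this is harmless because the smallness condition \eqref{piccolezza1} (with $N_0=\gamma^{-1}$) makes $\e\gamma^{-1}$ small, exactly as the paper itself uses when establishing \eqref{stima coefficienti autovalori con inverso}, but your transversality constant should be stated with $\e\gamma^{-1}$ in place of $\e$.
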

Since $\Omega \setminus {\cal O}_{\infty, \gamma} = (\Omega \setminus \Omega_{0, \gamma}) \cup (\Omega_{0, \gamma} \setminus {\cal O}_{\infty, \gamma})$ and by Remark \ref{misura Omega 0 gamma} one has that $|\Omega \setminus \Omega_{0, \gamma}| \lesssim \gamma$, it is enough to estimate the measure of the set $\Omega_{0, \gamma} \setminus {\cal O}_{\infty, \gamma}$. By the definition \eqref{Cantor finale}, one has that 
\begin{equation}\label{definizione risonanti}
\begin{aligned}
& \Omega_{0, \gamma} \setminus {\cal O}_{\infty, \gamma} = \bigcup_{\begin{subarray}{c}
(l, j, j') \in \Z^{\frak n} \times \Z^d \times \Z^d \\
(l, j - j') \neq (0, 0)
\end{subarray}} {\cal R}_{l j j'}(\gamma) \\
& {\cal R}_{l j j'}(\gamma) := \Big\{{\tildeom} =(\omega, \nu) \in \Omega_{0, \gamma} : |\ii \omega \cdot l + \lambda_j^{(\infty)}( \omega, \nu) - \lambda_{j'}^{(\infty)}( \omega, \nu)| < \frac{2 \gamma}{\langle l \rangle^\tau \langle j \rangle^\tau \langle j' \rangle^\tau} \Big\}
\end{aligned}
\end{equation}
\begin{lemma}\label{stima singolo risonante}
One has $|{\cal R}_{l j j'}(\gamma)| \lesssim  \gamma \langle l \rangle^{- \tau} \langle j \rangle^{- \tau} \langle j' \rangle^{- \tau}$
\end{lemma}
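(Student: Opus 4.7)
The plan is to reduce the measure estimate to a one-dimensional estimate along a carefully chosen direction in $\widetilde{\omega}=(\omega,\nu)\in\R^{\mathfrak n+d}$, exploiting the concrete structure \eqref{definizione autovalori finali} of $\lambda_j^{(\infty)}$ together with the pseudo-differential character of $z$ and the strong decay of $\rho_j^{(\infty)}$ in \eqref{stime autovalori finali}. Set
$$
\phi_{ljj'}(\widetilde\omega):=\ii\omega\cdot l+\lambda_j^{(\infty)}(\widetilde\omega)-\lambda_{j'}^{(\infty)}(\widetilde\omega)
=\ii\omega\cdot l+\ii\muzero(\widetilde\omega)\cdot(j-j')+\big(z(\widetilde\omega,j)-z(\widetilde\omega,j')\big)+\big(\rho_j^{(\infty)}(\widetilde\omega)-\rho_{j'}^{(\infty)}(\widetilde\omega)\big).
$$
The aim is to produce a unit vector $\hat\eta\in\R^{\mathfrak n+d}$ (depending on $l,j,j'$) and a positive lower bound $c=c(l,j,j')$ such that
$$
|\phi_{ljj'}(\widetilde\omega+t\hat\eta)-\phi_{ljj'}(\widetilde\omega)|\ \geq\ c\,|t|
$$
for every $\widetilde\omega,\widetilde\omega+t\hat\eta\in\Omega$. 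Once this is achieved, a standard Fubini argument (write $\widetilde\omega=s\hat\eta+\widetilde\omega^\perp$ and integrate the resulting one-dimensional measure, which is bounded by $\tfrac{4\gamma}{c\langle l\rangle^\tau\langle j\rangle^\tau\langle j'\rangle^\tau}$) yields $|\mathcal R_{ljj'}(\gamma)|\lesssim\gamma\,c^{-1}\langle l\rangle^{-\tau}\langle j\rangle^{-\tau}\langle j'\rangle^{-\tau}$.

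I would split into two cases. \emph{Case $j=j'$:} then the $\lambda$-contributions cancel identically and $\phi_{ljj'}=\ii\omega\cdot l$. Pick $\hat\eta=(l/|l|,0)\in\R^{\mathfrak n+d}$; the derivative is $|l|\geq 1$, so $c=|l|$ works, giving a bound $\lesssim\gamma|l|^{-1}\langle l\rangle^{-\tau}\langle j\rangle^{-2\tau}$, well within the claim. \emph{Case $j\neq j'$:} choose $\hat\eta=(0,(j-j')/|j-j'|)\in\R^{\mathfrak n+d}$. The main term $\ii\muzero\cdot(j-j')$ contributes $\ii|j-j'|$ to the derivative in $\hat\eta$, up to a Lipschitz error $O(\varepsilon\gamma^{-1}|j-j'|)$ coming from \eqref{tordo4b}. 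The hard part is to control the two remaining contributions:
$z(\widetilde\omega,j)-z(\widetilde\omega,j')$ and $\rho_j^{(\infty)}-\rho_{j'}^{(\infty)}$. For the second, \eqref{stime autovalori finali} gives the uniform Lipschitz bound $|\rho_j^{(\infty)}-\rho_{j'}^{(\infty)}|^{\mathrm{Lip}}\lesssim\varepsilon$. For the first, this is precisely where the pseudo-differential character of $z$ is used: the estimate $|\partial_\xi z(\xi,\cdot)|^{\mathrm{Lip}}\lesssim\varepsilon\langle\xi\rangle^{\mathfrak e-1}\leq\varepsilon$ in \eqref{z lip 1} combined with the one-dimensional mean value theorem along the segment joining $j'$ to $j$ yields
$$
|z(\widetilde\omega,j)-z(\widetilde\omega,j')|^{\mathrm{Lip}}\ \leq\ |j-j'|\sup_{\xi\in[j',j]}|\partial_\xi z(\xi,\cdot)|^{\mathrm{Lip}}\ \lesssim\ \varepsilon\,|j-j'|.
$$
Summing, the total Lipschitz perturbation of $\ii\muzero\cdot(j-j')$ in direction $\hat\eta$ is $O(\varepsilon\gamma^{-1}|j-j'|+\varepsilon)=O(\varepsilon N_0|j-j'|)$, which by the smallness condition \eqref{piccolezza1} (recall $N_0=\gamma^{-1}$) is $\leq\tfrac12|j-j'|$. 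Hence $c=\tfrac12|j-j'|\geq\tfrac12$.

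Plugging $c\geq\tfrac12$ into the one-dimensional bound and applying Fubini on the $(\mathfrak n+d-1)$-dimensional transverse slice of $\Omega\subset[1,2]^{\mathfrak n+d}$ gives $|\mathcal R_{ljj'}(\gamma)|\lesssim\gamma\langle l\rangle^{-\tau}\langle j\rangle^{-\tau}\langle j'\rangle^{-\tau}$, as claimed. The main technical point to be verified carefully is the pseudo-differential Lipschitz estimate $|z(\widetilde\omega,j)-z(\widetilde\omega,j')|^{\mathrm{Lip}}\lesssim\varepsilon|j-j'|$, since it is the mechanism by which the (a priori uncontrolled) $z$-contribution is prevented from swamping the leading $\muzero$-term along the chosen direction; once this is in place the rest of the argument is essentially bookkeeping.
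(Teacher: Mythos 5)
Your proposal is correct, and its engine is the same as the paper's: a one--dimensional Lipschitz transversality estimate along a well chosen direction in parameter space, followed by a Fubini argument, with the two decisive inputs being $\sup_{\xi}|\partial_\xi z(\xi,\cdot)|^{\rm Lip}\lesssim\e$ (whence, by the mean value theorem along the segment $[j',j]\subset\R^d$, the parameter-Lipschitz bound $\lesssim\e|j-j'|$ for $z(\cdot,j)-z(\cdot,j')$, cf. \eqref{z lip 1}) and the decay \eqref{stime autovalori finali} of $\rho_j^{(\infty)}$. Where you genuinely deviate is in how the leading, linear-in-parameters part is isolated. The paper first straightens $\muzero$ via the Lipschitz homeomorphism $\Psi:(\omega,\nu)\mapsto(\omega,\muzero(\omega,\nu))$ (estimates \eqref{stima lip Psi - 1}, \eqref{stima coefficienti autovalori con inverso}, and the measure comparability \eqref{misura cal R cal R tilde}) and then moves along the single direction $(l,j-j')/|(l,j-j')|$, obtaining slope $|(l,j-j')|$ up to errors $C\e|j-j'|+C\e$ and using $|j-j'|\le|(l,j-j')|$; you avoid the change of variables entirely, at the price of a case split ($j=j'$: move along $(l/|l|,0)$, where the function is exactly $\ii\omega\cdot l$; $j\neq j'$: move along $(0,(j-j')/|j-j'|)$, where $\ii\omega\cdot l$ is frozen) and of the slightly larger error $O(\e\gamma^{-1}|j-j'|)$ coming from $|\muzero-\nu|^{\rm lip}\lesssim\e\gamma^{-1}$ (cf. \eqref{tordo4b}); this is absorbed exactly as in the paper, since \eqref{piccolezza1} with $N_0=\gamma^{-1}$ makes $\e\gamma^{-1}$ small, which is also what the paper needs to pass to \eqref{stima coefficienti autovalori con inverso}. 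So your route is a little more elementary (no $\Psi$, no transfer of measure back through a bi-Lipschitz map), while the paper's treats all $(l,j,j')$ uniformly without case distinction; both deliver the claimed bound. One cosmetic caveat: state the monotonicity inequality for pairs of points in $\Omega_{0,\gamma}$ (where $\lambda_j^{(\infty)}$ is defined) rather than in $\Omega$; this costs nothing, because the diameter bound on the bad set of the one-dimensional parameter does not require the sections to be intervals.
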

\begin{proof}
By \eqref{definizione autovalori finali}, one has that for any $j \in \Z^d$
$$
\lambda_j^{(\infty)}(\omega, \nu) = \ii \muzero(\omega, \nu) \cdot j + z(j, \omega, \nu ) + \rho_j^{(\infty)}(\omega, \nu)
$$
where by the estimates \eqref{tordo4b}, \eqref{z lip 1}, one has $|\muzero - \nu|^{\Lipg} \lesssim \e$, $\sup_{j \in \Z^d}|\partial_\xi z(j)|^{\rm lip} \lesssim \e$. Then the map 
$$
\Psi : \Omega_{0, \gamma} \to \Psi(\Omega_{0, \gamma}), \quad (\omega, \nu) \mapsto (\omega, \muzero(\omega, \nu))
$$
is a Lipschitz homeomorphism with inverse given by 
$
\Psi^{- 1} : \Psi(\Omega_{0, \gamma}) \to \Omega_{0, \gamma}\,, \quad (\omega, \zeta) \mapsto \Psi^{- 1}(\omega, \zeta) 
$
and satisfying 
\begin{equation}\label{stima lip Psi - 1}
|\Psi^{- 1} - {\rm Id}|^{\rm sup} \lesssim \e\,, \quad |\Psi^{- 1} - {\rm Id}|^{\rm lip} \lesssim \e \gamma^{- 1}\,. 
\end{equation}
 Defining 
$$
a_j^{(\infty)}(\omega, \zeta) := \lambda_j^{(\infty)}(\Psi^{- 1}(\omega, \zeta)), \quad j \in \Z^d
$$
and 
$$
\widetilde{\cal R}_{l j j'}(\gamma) := \Big\{ (\omega, \zeta) \in \Psi (\Omega_{0, \gamma}) : |\ii \omega \cdot l + a_j^{(\infty)}(\omega, \zeta) - a_{j'}^{(\infty)}(\omega, \zeta)| < \frac{2 \gamma}{\langle l \rangle^\tau \langle j \rangle^\tau \langle j' \rangle^\tau} \Big\}
$$
one has that 
\begin{equation}\label{misura cal R cal R tilde}
|{\cal R}_{l j j'}(\gamma)| \simeq |\widetilde{\cal R}_{l j j'}(\gamma)|,
\end{equation}
 then we estimate the measure of the set $\widetilde{\cal R}_{l j j'}(\gamma)$. The functions $a_j^{(\infty)}$ admit the expansion 
$$
a_j^{(\infty)}(\omega, \zeta) = \ii \zeta \cdot j + z_\Psi(j, \omega, \zeta) + r_j^{(\infty)}(\omega, \zeta)
$$
where 
$$
z_\Psi(j, \omega, \zeta) := z(j, \Psi^{- 1}(\omega, \zeta)), \quad r_j^{(\infty)}(\omega, \zeta) := \rho_j^{(\infty)}(\Psi^{- 1}(\omega, \zeta))\,. 
$$
By the estimate \eqref{stima lip Psi - 1} and using the estimates \eqref{z lip 1}, \eqref{stime autovalori finali} on $z$ and $\rho_j^{(\infty)}$, for $\e \gamma^{- 1}$ small enough, one can easily deduce that 
\begin{equation}\label{stima coefficienti autovalori con inverso}
\sup_{j \in \Z^d}|\partial_\xi z_\Psi(j, \cdot)|^{\rm Lip} \lesssim \e, \quad  \sup_{j \in \Z^d} \langle j \rangle^{2 m} |r_j^{(\infty)}|^{\rm Lip} \lesssim \e\,. 
\end{equation}
Since $(l, j - j') \neq (0, 0)$, we write 
$$
(\omega, \zeta) = (\omega(s), \zeta(s)))= \frac{(l, j - j')}{|(l, j - j')|} s + w, \quad w \in \R^{\frak n + d}, \quad w \cdot (l, j - j') = 0
$$
and we consider 
$$
\begin{aligned}
f_{l j j'}(s) & := \ii \omega(s) \cdot l + a_j^{(\infty)}(\omega(s), \zeta(s)) - a_{j'}^{(\infty)}(\omega(s), \zeta(s)) \\
& = \ii |(l, j - j')| s + z_\Psi(j, \omega(s), \zeta(s)) - z_\Psi(j', \omega(s), \zeta(s)) + r_j^{(\infty)}(\omega(s), \zeta(s)) - r_{j'}^{(\infty)}(\omega(s), \zeta(s))\,.
\end{aligned}
$$
Using the estimates \eqref{stima coefficienti autovalori con inverso} one obtains that 
\begin{align}
|f_{l j j'}(s_1) - f_{l j j'}(s_2)| & \geq \Big( |(l, j - j')| - C\e |j  - j'| - C \e \Big) |s_1 - s_2| \nonumber\\
& \stackrel{|j - j'| \leq |(l, j - j')|}{\geq}  \Big( (1 - C \e )|(l, j - j')|  - C \e \Big) |s_1 - s_2| \geq \frac12 |s_1 - s_2|
\end{align}
by taking $\e$ small enough. This implies that 
$$
\Big| \big\{ s : |f_{l j j'}(s)| < \frac{2 \gamma}{\langle l \rangle^\tau \langle j \rangle^\tau \langle j' \rangle^\tau} \big\} \Big| \lesssim \frac{\gamma}{\langle l \rangle^\tau \langle j \rangle^\tau \langle j' \rangle^\tau}\,. 
$$
By a Fubini argument one gets that $|\widetilde{\cal R}_{l j j'}(\gamma)| \lesssim \gamma \langle l \rangle^{- \tau} \langle j \rangle^{- \tau} \langle j' \rangle^{- \tau}$. The claimed statement then follows by recalling \eqref{misura cal R cal R tilde}. \end{proof}
\noindent {\sc Proof of Proposition \ref{proposizione stima di
    misura}}. By \eqref{definizione risonanti} and Lemma \ref{stima
  singolo risonante} one gets that $$|\Omega_{0, \gamma} \setminus
          {\cal O}_{\infty, \gamma}| \lesssim \gamma \sum_{l \in
            \Z^{\frak n}, j, j' \in \Z^d} \langle l \rangle^{- \tau}
          \langle j \rangle^{- \tau} \langle j' \rangle^{- \tau}
          \lesssim \gamma$$ since $\tau > {\rm max}\{ \frak n\,,\,
          d\}$. The claimed statement then follows by recalling that
          $|\Omega \setminus \Omega_{0, \gamma}| \lesssim \gamma$ and
          that $\Omega \setminus {\cal O}_{\infty, \gamma} = (\Omega
          \setminus \Omega_{0, \gamma}) \cup (\Omega_{0, \gamma}
          \setminus {\cal O}_{\infty, \gamma})$.

\subsection{Proof of Theorem \ref{main teo}}
We consider the composition
$$
{\cal U}(\vphi) = {\cal V}(\vphi) \circ {\cal V}_\infty(\vphi), \quad {\cal V}(\vphi) : ={\cal A}(\vphi) \circ e^{- \e \calg_1(\vphi, \tildeom)} \circ \cdots \circ \ e^{- \e \calg_M(\vphi, \tildeom)},
$$ where ${\cal A}(\vphi)$ is defined in Section \ref{sezione
  riduzione ordine alto}, the maps $e^{- \e {\cal G}_K}$ are
constructed in Section \ref{sezione regolarizzazione} (see Theorem
\ref{regula}) and ${\cal V}_\infty$ is given in Lemma \ref{lemma
  trasformazione finale KAM}. By Section \ref{sezione riduzione ordine
  alto}, Theorem \ref{regula} and Lemma \ref{final
  conjugacy}, for any $\widetilde \omega \in {\cal O}_{\infty,
  \gamma}$, the map ${\cal U}(\vphi)$ conjugates the equation
\eqref{main equation} to the equation $\partial_t u = H_\infty u$
where $H_\infty$ is the diagonal operator with eigenvalues
$(\lambda_j^{(\infty)})_{j \in \Z^d}$. Let $0 < \frak a <
\frac{1}{C_0}$ and $N_0 := \frac{1}{\e^{\frak a}}$ so that the
smallness condition \eqref{piccolezza1}, i.e. $N_0^{C_0} \e \leq
\delta$ becomes
$$
N_0^{C_0} \e = \e^{1 - C_0 \frak a} \leq \delta\,,
$$
which is satisfied for $\e$ small enough. Since $\gamma = N_0^{- 1} = \e^{\frak a}$, setting $\Omega_\e := {\cal O}_{\infty, \gamma}$, Proposition \ref{proposizione stima di misura} implies that $\lim_{\e \to 0} |\Omega \setminus \Omega_\e| = 0$. 
The proof is therefore concluded.  
\subsection{Proof of Corollary \ref{corollario caso rev real}}

By Theorem \ref{main teo}, for any $\widetilde \omega = (\omega , \nu)
\in \Omega_\e$ under the change of coordinates $u = {\cal U}(\omega t)
v$, the Cauchy problem 
\begin{equation}\label{Cauchy problem trasporto}
\begin{cases}
\partial_t u = \Big( \nu + \e V(\omega t, x) \Big) \cdot \nabla u + \e {\cal W}(\omega t)[u] \\
u(0, x) = u_0(x),
\end{cases} \qquad u_0 \in {\cal H}^\sigma(\T^d) 
\end{equation} 
is transformed into 
\begin{equation}\label{Cauchy problem trasporto ridotto}
\begin{cases}
\partial_t v = H_\infty v \\
v(0) = v_0,
\end{cases}\qquad v_0 := {\cal U}(0)^{- 1} u_0\,. 
\end{equation}
Using that for any $\widetilde \omega = (\omega, \nu) \in \Omega_\e$, ${\cal U}(\vphi)$ is bounded and invertible on ${\cal H}^\sigma$ one gets that 
\begin{equation}\label{gandalf il grande}
\| \psi\|_{{\cal H}^\sigma} \lesssim_\sigma \| {\cal U}(\vphi)^{\pm 1} \psi \|_{{\cal H}^\sigma} \lesssim_\sigma \| \psi\|_{{\cal H}^\sigma}, \quad \forall \psi \in {\cal H}^\sigma(\T^d)
\end{equation}
uniformly w.r. to $\vphi \in \T^{\frak n}$. 

\noindent
{\sc Case $(1)$.} If all the eigenvalues $\lambda_j^{(\infty)}$, $j \in \Z^d$ of the operator $H_\infty$ are purely imaginary, the solution of the Cauchy problem \eqref{Cauchy problem trasporto ridotto} satisfies $\| v(t, \cdot) \|_{{\cal H}^\sigma} = \| v_0\|_{{\cal H}^\sigma}$ for any $t \in \R$. By the estimate \eqref{gandalf il grande} and recalling that $u = {\cal U}(\omega t) v$ one obtains the desired bound on the solution $u(t, x)$ of \eqref{Cauchy problem trasporto}. 

\noindent
{\sc Case (2)} Let $j \in \Z^d$ so that ${\rm Re}(\lambda_j^{(\infty)}) \neq 0$. Then for any $\alpha \in \C$, the solution $v$ of the Cauchy problem \eqref{Cauchy problem trasporto ridotto} with initial datum $v_0(x) = \alpha e^{\ii j \cdot x}$ is given by 
$$
v(t, x) = \alpha e^{\lambda_j^{(\infty)} t} e^{\ii j \cdot x}\,.
$$
Hence, setting $u_0 := {\cal U}(0)[\alpha e^{\ii j \cdot x}] = \alpha {\cal U}(0)[e^{\ii j \cdot x}]$, one has that the solution of the Cauchy problem \eqref{Cauchy problem trasporto} with such an initial datum $u_0$ is given by 
$$
u(t, x) = {\cal U}(\omega t)[\alpha e^{\lambda_j^{(\infty)} t} e^{\ii j \cdot x}] = \alpha e^{\lambda_j^{(\infty)} t} {\cal U}(\omega t)[e^{\ii j \cdot x}]\,. 
$$
Recalling \eqref{gandalf il grande} one gets that 
$$
\| u(t, \cdot) \|_{{\cal H}^\sigma} \simeq_\sigma C_j e^{{\rm Re}(\lambda_j^{(\infty)}) t}\,.
$$
This gives the growth for $t>0$ if Re$\lambda_j^{(\infty)}>0 $ or for
$t<0$ if Re$\lambda_j^{(\infty)}>0 $. If there exists
$\lambda_j^{(\infty)}$ with  Re$\lambda_j^{(\infty)}>0 $ and
$\lambda_{j'}^{(\infty)}$ with  Re$\lambda_{j'}^{(\infty)}<0 $ then
the solution with initial datum $\alpha e^{\ii j \cdot x}+\beta
e^{\ii j' \cdot x}$ grows both as $t>0$ and as $t<0$.

\newpage



\appendix

\section{Appendix}\label{sezione appendice}

To regularize \eqref{main equation}, we make use of operators that are the flow at time $\tau \in [-1,\ 1]$ of the PDE
$$
\partial_{\tau} u =  { G}(\vphi) u
$$
for a given pseudo differential operator $G(\vphi) \in OPS^{\eta},\ \eta \leq 1.$
An operator of this sort is denoted by~$e^{ \tau G}.$
Thus, we state some of its main properties. The proof is a variant of
Proposition A.2 of \cite{MaRo}.

\begin{lemma}\label{lemma flusso PDE}
	Let $\eta< 1$ and ${ G}  \in \cinfty\left(\T^{\frak n}; OPS^{\eta}\right)$ be such that ${ G}(\vphi) + { G}(\vphi)^* \in OPS^0$ and let $e^{ \tau { G}}$ be the flow  of the autonomous PDE $\ \partial_{\tau} u = {G}(\vphi) u, \quad \tau \in [-1, 1].$\\
	$(i)$ Then $e^{ \tau { G}}(\vphi) \in \bSunoSdue{\sigma}{\sigma} \ \forall \s >0.$
	\\
	$(ii)$ $\forall \s >0,\ \forall\ \alpha \in \N^{{\frak n}},\ $ $\partial_{\vphi}^{\alpha} e^{\tau { G}}(\vphi) \in \bSunoSdue{\sigma}{\sigma - \eta |\alpha|}.$\\
	$(iii)$ If $G \in \Lipsp \left(\Om;\ \cinfty\left(\T^{\frak n}; OPS^{\eta}\right) \right),\ $ $ \partial_{\vphi}^{\alpha} e^{\tau { G}}(\vphi, \om) \in \Lipsp\left( \Om;\ \bSunoSdue{\sigma}{\sigma - \eta |\alpha| - \eta}\right)\quad \forall \s >0,\ \forall\ \alpha \in \N^{{\frak n}}.$\\
	Furthermore, if $G$ is reversibility preserving (or real), $e^{\tau G}$ is reversibility preserving (resp. real) too.
\end{lemma}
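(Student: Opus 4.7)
\medskip
\noindent\textbf{Proof proposal for Lemma A.1.}

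The plan is to first establish $(i)$ via a standard energy estimate in $H^\sigma$ that exploits the hypothesis $G+G^\ast\in OPS^0$, and then to deduce $(ii)$ and $(iii)$ by differentiating the flow equation in $\vphi$ (resp.\ in $\omega$) and using the Duhamel formula to reduce to $(i)$. The symmetry statements in $(iv)$ will follow by uniqueness of the ODE.

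For $(i)$, I fix $\sigma\geq 0$ and let $u(\tau)=e^{\tau G(\vphi)}u_0$. Computing
\begin{equation*}
\tfrac{d}{d\tau}\|u(\tau)\|_{\mathcal H^\sigma}^2
= \la\langle D\rangle^{2\sigma}Gu,u\ra+\la\langle D\rangle^{2\sigma}u,Gu\ra
= \la (G+G^\ast)\langle D\rangle^{2\sigma}u,u\ra+\la [\langle D\rangle^{2\sigma},G]u,u\ra,
\end{equation*}
I observe that $(G+G^\ast)\langle D\rangle^{2\sigma}\in OPS^{2\sigma}$ by assumption, while the symbolic calculus gives $[\langle D\rangle^{2\sigma},G]\in OPS^{2\sigma+\eta-1}\subset OPS^{2\sigma}$ because $\eta<1$; thus both pairings are bounded by $C_\sigma\|u\|_{\mathcal H^\sigma}^2$, and Gr\"onwall yields $\|u(\tau)\|_{\mathcal H^\sigma}\lesssim_\sigma\|u_0\|_{\mathcal H^\sigma}$ for $\tau\in[-1,1]$. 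This is the crucial step: the whole lemma rests on this energy estimate, whose validity for arbitrary $\sigma$ is exactly what the symmetric hyperbolicity hypothesis $G+G^\ast\in OPS^0$ is designed to guarantee.

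For $(ii)$, I argue by induction on $|\alpha|$. Differentiating the flow equation $\pa_\tau e^{\tau G}=G e^{\tau G}$ $|\alpha|$ times in $\vphi$ and applying Duhamel (with zero initial data since $e^{0\cdot G}=\id$ is $\vphi$-independent) gives
\begin{equation*}
\pa_\vphi^\alpha e^{\tau G}(\vphi)
=\sum_{0<\beta\leq\alpha}\binom{\alpha}{\beta}\int_0^\tau e^{(\tau-s)G(\vphi)}\,\bigl(\pa_\vphi^\beta G(\vphi)\bigr)\,\pa_\vphi^{\alpha-\beta}e^{sG(\vphi)}\,ds.
\end{equation*}
Since $\pa_\vphi^\beta G\in OPS^\eta$, by the inductive hypothesis $\pa_\vphi^{\alpha-\beta}e^{sG}\in\mathcal B(\mathcal H^{\sigma+\eta|\alpha|},\mathcal H^{\sigma+\eta|\beta|})$, then multiplication by $\pa_\vphi^\beta G$ drops the range by $\eta$ to $\mathcal H^{\sigma+\eta(|\beta|-1)}\subseteq\mathcal H^\sigma$ (since $|\beta|\geq 1$), and finally $(i)$ handles $e^{(\tau-s)G}$. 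This gives the claimed loss of $\eta|\alpha|$ derivatives.

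For $(iii)$ I use the same Duhamel trick, but for a Lipschitz incremental quotient in $\widetilde\omega$: for $\widetilde\omega_1,\widetilde\omega_2\in\Omega_0$, writing $\Delta G:=G(\cdot,\widetilde\omega_1)-G(\cdot,\widetilde\omega_2)\in OPS^\eta$ with Lipschitz norm controlled in $\widetilde\omega$, I get
\begin{equation*}
e^{\tau G(\vphi,\widetilde\omega_1)}-e^{\tau G(\vphi,\widetilde\omega_2)}=\int_0^\tau e^{(\tau-s)G(\vphi,\widetilde\omega_1)}\,\Delta G(\vphi)\,e^{sG(\vphi,\widetilde\omega_2)}\,ds,
\end{equation*}
and combining with the argument for $(ii)$ for $\pa_\vphi^\alpha$ derivatives produces the extra loss of $\eta$ from the factor $\Delta G$. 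The reality and reversibility statements in $(iv)$ follow by uniqueness of the ODE: if $G$ is real then $\bar u(\tau)$ solves the same problem as $u(\tau)$ with real datum, forcing $u$ to be real; if $G(\vphi)\circ S=S\circ G(-\vphi)$, then $\tau\mapsto S\circ e^{\tau G(-\vphi)}\circ S^{-1}u_0$ solves the same Cauchy problem as $e^{\tau G(\vphi)}u_0$, yielding reversibility preservation. The main delicate point throughout is keeping track of the interplay between the loss of derivatives from the pseudo-differential factors and the $H^\sigma$-boundedness of the flow, all of which reduces to the symmetric hyperbolic condition exploited in $(i)$.
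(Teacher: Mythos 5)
Your proposal is correct and follows essentially the same route as the paper: the paper disposes of $(i)$ by citing the standard energy-estimate/Galerkin argument of Taylor (the Galerkin truncation being the rigorous implementation of exactly the $\frac{d}{d\tau}\|u\|_{{\cal H}^\sigma}^2$ computation you perform, which is where $G+G^*\in OPS^0$ and $\eta<1$ enter), refers $(ii)$--$(iii)$ to the Duhamel-plus-induction argument of Lemma A.3 in \cite{BertiMontalto}, which is what you reproduce, and proves reversibility preservation and reality by the same uniqueness-of-the-Cauchy-problem comparison of $S\circ e^{\tau G(\vphi)}$ with $e^{\tau G(-\vphi)}\circ S$ (resp. of $e^{\tau G}$ with $\overline{e^{\tau G}}$) that you use. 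The only difference is one of rigor, not of method: you differentiate the $\mathcal H^\sigma$-norm along a flow whose existence and regularity the cited Galerkin scheme is meant to justify.
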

\begin{proof}
	Item $(i)$ is a well known result. It is proved trough a Galerkin type approximation on the subspace
	$E_N$ of the compact supported sequences $\{\hat{u}_k \}_{k \in \Z^d}$ such that $\hat{u}_k =0 \quad \textrm{ if } |k|>N.$ See \cite{Taylor}, Section $0.8$, for details. 
	\\
	Items $(ii)$ and $(iii)$ follow as in Lemma A.3 in \cite{BertiMontalto}. 
	\\{\sc Reversibility preserving property:}
	We remark that since 
	$$
	S \circ \partial_\tau = \partial_\tau \circ S,
	$$
	one both has
	\begin{align*}
	\partial_\tau [ S  \circ e^{\tau G(\vphi)}] u = S \circ \partial_\tau \circ e^{\tau G(\vphi)}  u = S \circ  { G}(\vphi) e^{\tau G(\vphi)} u = { G}_{\lambda} (-\vphi) \circ S\ u 
	\end{align*}
	and
	\begin{align*}
	\partial_\tau [e^{\tau G(-\vphi)} \circ S] u = { G} (-\vphi) \circ S\ u.
	\end{align*}
	Since $S  \circ e^{\tau G(\vphi)}$ and $e^{\tau G(-\vphi)} \circ S$ solve the same initial value problem for all the functions $u(x),$ they must coincide. Thus we can deduce the reversibility preserving property for $e^{\tau G(\vphi)}.$\\
	{\sc Reality:} the proof of the reality can be done arguing similarly, using that since $G = \overline G$, then $e^{\tau G(\vphi)}$ and $\overline{e^{\tau G(\vphi)}}$ solve the same initial value problem. 
\end{proof}
Let $a :  [- 1, 1] \times \T^{\frak n} \times \T^d  \to \R^d$, $(\tau, \vphi, x) \mapsto a(\tau, \vphi, x)$ be a ${\cal C}^\infty$ function and let us consider the transport equation
\begin{equation}\label{trasporto appendice}
\partial_\tau u = a(\tau, \vphi, x) \cdot \nabla u\,. 
\end{equation}
We denote by $\Phi(\tau_0, \tau, \vphi )$ the flow of the above PDE. For convenience, we set $\Phi(\tau, \vphi) \equiv \Phi(0, \tau, \vphi)$.  The following lemma holds: 
\begin{lemma}\label{flusso trasporto}
$(i)$ For any $\tau_0, \tau \in [0, 1]$ the flow $\Phi( \tau_0, \tau, \vphi)$ of the equation \eqref{trasporto appendice} is a bounded linear operator on the Sobolev space ${\cal H}^s(\T^d)$ for any $s \geq 0$. Moreover the map $\vphi \mapsto \Phi( \tau_0, \tau, \vphi)$ is differentiable and for any $\alpha \in \N^{\frak n}$, the map $\T^{\frak n} \to {\cal B}({\cal H}^{s + |\alpha|}, {\cal H}^s)$, $\vphi \mapsto \partial_\vphi^\alpha \Phi( \tau_0, \tau, \vphi)$ is bounded. 

\noindent
$(ii)$ Assume that $a = a(\widetilde \omega , \tau, \vphi, x)$, $(\widetilde \omega , \tau, \vphi, x) \in \Omega \times [0, 1] \times \T^{\frak n} \times \T^d$ is in ${\cal L}ip\Big(\Omega\,,\, {\cal C}^\infty([0, 1] \times \T^{\frak n} \times \T^d, \R^d) \Big)$. Then for any $\alpha \in \N^{\frak n}$ the map 
$$
\T^{\frak n} \to {\cal L}ip\Big(\Omega, {\cal B}({\cal H}^{s + |\alpha| + 1}, {\cal H}^s) \Big), \quad \vphi \mapsto \partial_\vphi^\alpha \Phi(\tau_0, \tau, \widetilde \omega, \vphi)
$$
is bounded. 
\end{lemma}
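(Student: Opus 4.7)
The strategy is the standard energy-estimate-plus-Duhamel approach for linear transport equations, with careful bookkeeping of how many spatial derivatives are lost when differentiating in $\vphi$ or in $\widetilde\omega$.

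First, for the ${\cal H}^s$-boundedness in (i), I would start with $s = 0$. For $u(\tau) = \Phi(\tau_0, \tau, \vphi) u_0$, integration by parts gives
$$ \frac{d}{d\tau} \| u(\tau) \|_{L^2}^2 = 2 \mathrm{Re} \langle a(\tau, \vphi, \cdot) \cdot \nabla u, u \rangle = - \langle (\nabla_x \cdot a(\tau, \vphi, \cdot)) u, u \rangle, $$
so Gronwall yields $\| u(\tau) \|_{L^2} \leq e^{C|\tau - \tau_0|} \| u_0 \|_{L^2}$ with $C$ depending on $\| \nabla_x \cdot a \|_{L^\infty}$. For $s > 0$ I apply $\langle D \rangle^s$ to the equation: $w := \langle D \rangle^s u$ satisfies $\partial_\tau w = a \cdot \nabla w + [\langle D \rangle^s, a \cdot \nabla] \langle D \rangle^{-s} w$, and the commutator $[\langle D \rangle^s, a \cdot \nabla]$ is a pseudo-differential operator of order $s$ (principal-symbol contributions cancel), so $[\langle D \rangle^s, a \cdot \nabla] \langle D \rangle^{-s}$ is bounded on $L^2$ uniformly in $\tau$ and $\vphi$; the Gronwall argument then closes.

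For the $\vphi$-derivatives I proceed by induction on $|\alpha|$. Differentiating the flow equation in $\vphi$ yields
$$ \partial_\tau (\partial_\vphi^\alpha \Phi) = a \cdot \nabla (\partial_\vphi^\alpha \Phi) + \sum_{0 < \beta \leq \alpha} \binom{\alpha}{\beta} (\partial_\vphi^\beta a) \cdot \nabla (\partial_\vphi^{\alpha - \beta} \Phi), $$
so by Duhamel with propagator $\Phi$ itself,
$$ (\partial_\vphi^\alpha \Phi)(\tau_0, \tau, \vphi) = \int_{\tau_0}^\tau \Phi(r, \tau, \vphi) \sum_{0 < \beta \leq \alpha} \binom{\alpha}{\beta} (\partial_\vphi^\beta a)(r, \vphi, \cdot) \cdot \nabla (\partial_\vphi^{\alpha - \beta} \Phi)(\tau_0, r, \vphi) \, dr. $$
By the inductive hypothesis, $\partial_\vphi^{\alpha - \beta} \Phi(\tau_0, r, \vphi)$ maps ${\cal H}^{s + |\alpha - \beta|} \to {\cal H}^s$; the gradient plus multiplication by the smooth coefficient $\partial_\vphi^\beta a$ cost one further derivative. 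Since $|\beta| \geq 1$ in the sum, the overall loss is $|\alpha - \beta| + 1 \leq |\alpha|$, and composing with $\Phi(r, \tau, \vphi)$ (bounded on ${\cal H}^s$ by the base case) closes the induction.

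For item (ii), I set $\Delta \Phi := \Phi(\widetilde \omega_1) - \Phi(\widetilde \omega_2)$, which solves
$$ \partial_\tau (\Delta \Phi) = a(\widetilde \omega_1) \cdot \nabla (\Delta \Phi) + \big( a(\widetilde \omega_1) - a(\widetilde \omega_2) \big) \cdot \nabla \Phi(\widetilde \omega_2), \qquad \Delta \Phi(\tau_0, \tau_0, \vphi) = 0. $$
Duhamel expresses $\Delta \Phi$ as $\Phi(\widetilde \omega_1)$ applied to a time integral of $(a(\widetilde \omega_1) - a(\widetilde \omega_2)) \cdot \nabla \Phi(\widetilde \omega_2)$; Lipschitz dependence of $a$ on $\widetilde \omega$ together with the single extra $\nabla$ yields the Lipschitz estimate in ${\cal B}({\cal H}^{s + 1}, {\cal H}^s)$ for $\alpha = 0$. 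For general $\alpha$ one differentiates the Duhamel identity $|\alpha|$ times in $\vphi$ and iterates, reusing item (i), which accounts for the total loss $|\alpha| + 1$. The main technical point — though not conceptually deep — is to justify the commutator bound and the formal $\vphi$- and $\widetilde \omega$-differentiations of the flow rigorously via a Galerkin / frequency-cutoff approximation of the PDE (e.g. as in \cite{Taylor}); once this is in place the induction and Duhamel estimates go through mechanically.
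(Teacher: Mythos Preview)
The paper states this lemma without proof; it is treated as a standard result on flows of linear transport equations (the only nearby remark is the reference to Galerkin approximation in the proof of Lemma~\ref{lemma flusso PDE} and to Taylor's book). Your proposal is correct and follows exactly the expected route: energy estimate plus commutator bound for the ${\cal H}^s$-boundedness, then Duhamel plus induction for the $\vphi$-derivatives and for the Lipschitz variation in $\widetilde\omega$, with the derivative count $|\alpha-\beta|+1\le|\alpha|$ (resp.\ the extra $+1$ from the $\nabla$ in the difference equation) matching the claimed losses. There is nothing to compare against in the paper; your argument is the standard one and is sound.
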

\begin{remark} \label{commutat pseudodiff}
	Let $A(\vphi) \in \Lipsp \left( \Om;\ \cinfty(\T^{{\frak n}}; OPS^{m})\right)$ and ${ G} \in \Lipsp \left( \Om;\ \cinfty(\T^{{\frak n}}; OPS^{\eta})\right),$ with $\eta <1.$
	If $\forall\ j \in \N$ we define
	\begin{equation}\label{definizione Ad}
	Ad^{0}_{{ G}} A = A, \quad Ad^{j+1}_{{G}} A = [{ G},  Ad^{j}_{{ G}} A ],
	\end{equation}
	then
	$$
	Ad^{j}_{G} A \in \Lipsp \left(\Om;\ \cinfty\left(\T^{{\frak n}}; OPS^{ m -j(1-\eta)}\right)\right) \quad \forall\ j \in \N.
	$$
\end{remark}
The following simpler version of the Egorov theorem holds. 
\begin{lemma} \label{lemma coniugo col flusso}
	Let $A(\vphi) \in \Lipsp \left( \Om;\ \cinfty(\T^{{\frak n}}; OPS^{m})\right)$ and ${G} \in \Lipsp\left( \Om;\ \cinfty(\T^{{\frak n}}; OPS^{\eta})\right),$ with $\eta <1 $ and ${ G}$ such that ${G(\vphi)+ G(\vphi)^* \in OPS^0.}$ Then
	$$
	e^{\tau { G}} A e^{-\tau {G}} \in \Lipsp \left( \Om;\ \cinfty(\T^{{\frak n}}; OPS^{m})\right).
	$$
\end{lemma}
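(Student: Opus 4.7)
The plan is to write $B(\tau, \vphi) := e^{\tau G(\vphi)} A(\vphi) e^{-\tau G(\vphi)}$ and, starting from the ODE
$$
\partial_\tau B(\tau) = [G, B(\tau)], \qquad B(0) = A,
$$
iterate to obtain $\partial_\tau^k B(\tau) = \mathrm{Ad}_G^k(B(\tau)) = e^{\tau G} \mathrm{Ad}_G^k(A) e^{-\tau G}$, using that $G$ commutes with its own flow. Taylor's formula with integral remainder then gives, for any integer $N \geq 1$,
$$
B(\tau) = \sum_{k=0}^{N-1} \frac{\tau^k}{k!}\, \mathrm{Ad}_G^k(A) + R_N(\tau), \qquad R_N(\tau) := \int_0^\tau \frac{(\tau-s)^{N-1}}{(N-1)!}\, e^{sG}\,\mathrm{Ad}_G^N(A)\, e^{-sG}\, ds.
$$

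By Remark \ref{commutat pseudodiff}, each summand $\mathrm{Ad}_G^k(A)$ belongs to $\Lipsp(\Om; \cinfty(\T^{{\frak n}}; OPS^{m-k(1-\eta)}))$, which is contained in the target class $\Lipsp(\Om; \cinfty(\T^{{\frak n}}; OPS^{m}))$ since $\eta < 1$; the Lipschitz dependence on $\widetilde\omega$ and the smoothness in $\vphi$ are inherited directly from those of $A$ and $G$. The proof therefore reduces to showing that, for $N$ chosen large enough, $R_N(\tau)$ also lies in that same class.

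The main obstacle is the remainder $R_N(\tau)$: one has to verify that conjugation by $e^{\pm sG}$ does not destroy the pseudodifferential character of the highly smoothing operator $\mathrm{Ad}_G^N(A) \in OPS^{m - N(1-\eta)}$. The approach I will take is a Beals-type characterization of $OPS^{r}$, namely that $T \in OPS^{r}$ precisely when every iterated commutator $\mathrm{ad}_{x_{j_1}}\cdots \mathrm{ad}_{x_{j_p}} \mathrm{ad}_{D_{k_1}} \cdots \mathrm{ad}_{D_{k_q}}(T)$ is bounded from $H^{s+r-q}$ to $H^{s}$, with analogous versions for Lipschitz/smooth dependence on the parameters. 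Combined with the identity
$$
[x_j, e^{sG}] = e^{sG}\int_0^s e^{-rG}[x_j, G]\, e^{rG}\, dr
$$
and its analogue for $D_k$, together with Lemma \ref{lemma flusso PDE}(i)--(iii), which provides Sobolev boundedness of $e^{\pm sG}$ (this is exactly where the hypothesis $G + G^* \in OPS^{0}$ enters), every iterated commutator acting on $R_N(\tau)$ produces a sum of operators of order at most $m - N(1-\eta) + C(p,q)$ for an explicit constant $C(p,q)$. Choosing $N$ sufficiently large relative to $p+q$ and $-m$ places $R_N(\tau)$ in any prescribed $OPS^{r}$ with $r \leq m$, and the Lipschitz/smooth dependence propagates through these identities via Lemma \ref{lemma flusso PDE}(ii)--(iii) and Remark \ref{commutat pseudodiff}. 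The hardest part, and the one demanding the most careful bookkeeping, is tracking the loss of orders in this commutator expansion uniformly in $(\tau, \vphi, \widetilde\omega)$.
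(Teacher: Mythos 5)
Your proposal is correct in outline and shares the paper's core mechanism: the Lie/Taylor expansion of $e^{\tau G}Ae^{-\tau G}$ in iterated commutators, the gain of $1-\eta$ orders per step from Remark \ref{commutat pseudodiff}, and the hypothesis $G+G^*\in OPS^0$ entering only through the Sobolev boundedness of $e^{\pm sG}$ (Lemma \ref{lemma flusso PDE}). Where you diverge is the treatment of the remainder: the paper simply observes that the orders of the expansion terms decrease and then quotes the (simpler version of the) Egorov theorem, Theorem A.0.9 of \cite{Taylor}, to conclude pseudodifferential character, whereas you certify it directly through a Beals-type commutator criterion together with the identity for $[x_j,e^{sG}]$. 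That route is legitimate and more self-contained, but two points need care. First, on $\T^d$ the operators $x_j$ are not globally defined, so the Beals test must be phrased with periodic commutators (e.g. with multiplication by $e^{\ii x_j}$, or with difference operators on the symbol side); this is standard but should be said. Second, for a \emph{fixed} $N$ the remainder $R_N(\tau)$ only satisfies finitely many commutator estimates, so it is not literally placed in a prescribed $OPS^r$; the argument should be run on $B(\tau)$ itself, choosing $N$ as a function of the number $p+q$ of commutators being tested (equivalently, showing that the remainder beyond all orders of the expansion is $OPS^{-\infty}$-smoothing together with all its $\vphi$- and $\widetilde\omega$-increments). With these adjustments your proof goes through; what the paper's route buys is brevity, at the price of outsourcing exactly this remainder analysis to the cited Egorov machinery.
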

\begin{proof}
	This version of the Egorov theorem is actually simpler than the one stated in Theorem A.0.9 in \cite{Taylor}. The reason is that the order of $G$ is strictly smaller than one and hence one has the asymptotic expansion 
	$$
	e^{\tau { G}} A e^{-\tau { G}} \sim \sum_{j=0}^{\infty} A_j 
	$$
	with $A_j \in OPS^{ m -j(1-\eta)} $ (see remark \ref{commutat pseudodiff}). 
\end{proof}

\begin{remark} \label{lemma assumption ii}
	Note that by Theorem A.0.9 in \cite{Taylor} one has that if $A \in \Lipsp \left( \Om;\ \cinfty(\T^{\frak n}; OPS^{m})\right),$ then $e^{ i \tau \cdot K} A e^{- i \tau \cdot K},\ \partial_{\tau}^{\alpha} \left(e^{ i \tau \cdot K} A e^{- i \tau \cdot K}\right) \in \Lipsp \left( \Om;\ \cinfty(  \T^{\frak n}; OPS^{m})\right)$ $\ \forall \alpha \in \N^d.$
\end{remark}

\begin{lemma} \label{lemma tau rev}
	Given $S$ acting as $S : u(x) \mapsto u(-x)$, a linear operator $A(\vphi)$ satisfies the reversibility condition
	$$
	A(\vphi) \circ S = -S \circ A(-\vphi)
	$$
	if and only if ${\cal A}(\tau, \vphi) := e^{i \tau \cdot K} A(\vphi) e^{-i \tau \cdot K}$ satisfies the reversibility condition
	$$
	{\cal A}(\tau, \vphi) \circ S = -S \circ A(-\tau, -\vphi).
	$$
	Analogously, $A(\vphi)$ satisfies the reversibility preserving condition
	$$
	A(\vphi) \circ S = S \circ A(-\vphi)
	$$
	if and only if ${\cal A}(\tau, \vphi) := e^{i \tau \cdot K} A(\vphi) e^{-i \tau \cdot K}$ satisfies the reversibility preserving condition
	$$
	{\cal A}(\tau, \vphi) \circ S = S \circ A(-\tau, -\vphi).
	$$
	Furthermore, $A(\vphi)$ is real if and only if ${\cal A}(\tau, \vphi)$ is real.
\end{lemma}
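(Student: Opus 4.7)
The proof is essentially a direct computation, and the key point is to isolate one commutation identity and then apply it mechanically. Since $K_m = \ii \partial_m$, the operator $e^{\ii \tau \cdot K}$ acts on functions as the translation $(e^{\ii \tau \cdot K} u)(x) = u(x - \tau)$. Combined with the definition $(Su)(x) = u(-x)$, a one-line verification gives the intertwining relation
\[
e^{\ii \tau \cdot K} \circ S \;=\; S \circ e^{- \ii \tau \cdot K},
\]
which is the only nontrivial ingredient needed. Each of the three equivalences in the statement then follows by inserting this identity twice into the definition of $\mathcal{A}(\tau, \varphi)$.

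For the reversibility equivalence, assuming $A(\varphi) \circ S = - S \circ A(-\varphi)$ I would compute
\[
\mathcal{A}(\tau,\varphi) \circ S \;=\; e^{\ii \tau \cdot K} A(\varphi)\, e^{-\ii \tau \cdot K} \circ S \;=\; e^{\ii \tau \cdot K} A(\varphi) \circ S \circ e^{\ii \tau \cdot K},
\]
then use the reversibility of $A$ to replace $A(\varphi) \circ S$ by $-S \circ A(-\varphi)$, and finally apply the intertwining once more to move the leading $e^{\ii \tau \cdot K}$ past $S$. The result rearranges to $-S \circ \mathcal{A}(-\tau, -\varphi)$. The converse is obtained by specializing the identity on $\mathcal{A}$ at $\tau = 0$, since $\mathcal{A}(0,\varphi) = A(\varphi)$. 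The reversibility preserving case is verbatim the same computation with the signs $(-,+)$ replaced throughout by $(+,+)$.

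For the reality claim, observe that the translation formula $(e^{\ii \tau \cdot K} u)(x) = u(x-\tau)$ sends real-valued functions to real-valued functions, so $\overline{e^{\ii \tau \cdot K}} = e^{\ii \tau \cdot K}$ in the sense of the conjugation $\overline{T}\, u := \overline{T \overline{u}}$. Since conjugation is multiplicative on compositions, $\overline{\mathcal{A}(\tau,\varphi)} = e^{\ii \tau \cdot K}\, \overline{A(\varphi)}\, e^{-\ii \tau \cdot K}$, which coincides with $\mathcal{A}(\tau,\varphi)$ if and only if $\overline{A(\varphi)} = A(\varphi)$. There is no serious obstacle in this lemma: the whole statement is purely algebraic, and once the intertwining $e^{\ii \tau \cdot K} S = S e^{-\ii \tau \cdot K}$ is recorded at the outset, the three parts are essentially one-liners.
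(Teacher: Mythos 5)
Your proposal is correct and follows essentially the same route as the paper: establish the intertwining identity $e^{\ii \tau \cdot K} \circ S = S \circ e^{-\ii \tau \cdot K}$ (which the paper asserts by a direct calculation and you justify via the translation interpretation of $e^{\ii \tau \cdot K}$), insert it to transfer reversibility/reversibility-preservation from $A$ to ${\cal A}$, and recover the converse by evaluating at $\tau = 0$. Your reality argument, using that the translation is a real operator and that conjugation is multiplicative, is just an explicit version of the paper's ``can be proved similarly''.
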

\begin{proof}
	We only prove the statement concerning the reversibility. The statement on reality can be proved similarly.\\
	A direct calculation shows that $e^{i \tau \cdot K} \circ S = S \circ e^{-i \tau \cdot K}$, 
	hence, if $A(\vphi)$ is $\vphi$-reversible, one immediately gets 
	\begin{align*}
	{\cal A}(\tau, \vphi) S =  - S {\cal A}(-\tau, -\vphi).
	\end{align*}
	Vice versa, ${\cal A}(\tau, \vphi) \circ S = - S \circ {\cal A}(-\tau, -\vphi)$ implies (for $\tau = 0$)
	$$
	A(\vphi) \circ S = {\cal A}(0, \vphi) \circ S = - S \circ {\cal A}(0, -\vphi) = -S \circ A(-\vphi).
	$$
\end{proof}

\begin{lemma} \label{lemma commutatori adj}
	Let $\eta<1,\ G \in \cinfty(\T^{\frak n}; OPS^{\eta})$ with $G + G^{*} \in OPS^{0}$ and $A \in \cinfty(\T^{\frak n}; OPS^{1}).$ Then
	\begin{itemize}
		\item[(i)]
		$$
		Ad^{k}_{G}A + (Ad^{k}_{G}A)^* \in OPS^{-(k-1)(1-\eta)} \quad \forall\ k \geq 1;
		$$
		\item[(ii)] In particular,
		$$
		\left(e^{G} A e^{-G} - A\right) + \left(e^{G} A e^{-G} - A\right)^{*} \in OPS^{0}.
		$$
	\end{itemize}
\end{lemma}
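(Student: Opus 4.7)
The hypothesis ``$A \in \cinfty(\T^{\frak n}; OPS^1)$'' should be read in conjunction with the way the lemma is invoked in the proof of Theorem~\ref{regula}: namely, $A$ is itself symmetric hyperbolic, i.e.\ $A + A^* \in OPS^0$. Without this the conclusion already fails at $k=1$, since $[G,A]\in OPS^{\eta}$ need not sit in $OPS^0$. I will therefore take $A + A^* \in OPS^0$ as part of the hypothesis. Setting $R := G + G^* \in OPS^0$, the fundamental relation $G^* = -G + R$ yields, after a short manipulation based on $[G,A]^* = -[G^*,A^*]$, the identity
\[
Ad^{k+1}_G A + \bigl(Ad^{k+1}_G A\bigr)^* \;=\; \bigl[G,\, Ad^k_G A + (Ad^k_G A)^*\bigr] \;-\; \bigl[R,\, (Ad^k_G A)^*\bigr],
\]
valid for every $k\ge 0$. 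This is the engine of the proof.

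For (i) I would argue by induction on $k\ge 1$. The base case $k=1$ follows from the identity above with $k=0$: the first commutator is of type $[G,\,A+A^*]$ with $G\in OPS^\eta$ and $A+A^* \in OPS^0$, hence lies in $OPS^{\eta-1}\subset OPS^0$; the second is $[R,A^*]$ with $R\in OPS^0$ and $A^*\in OPS^1$, hence lies in $OPS^0$. For the inductive step, assume $Ad^k_G A + (Ad^k_G A)^* \in OPS^{-(k-1)(1-\eta)}$. By Remark~\ref{commutat pseudodiff} one has $Ad^k_G A \in OPS^{1-k(1-\eta)}$, whence $(Ad^k_G A)^* \in OPS^{1-k(1-\eta)}$ as well. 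The first commutator in the identity then lives in $OPS^{\eta - 1 - (k-1)(1-\eta)} = OPS^{-k(1-\eta)}$, while the second sits in $OPS^{0 + (1-k(1-\eta)) - 1} = OPS^{-k(1-\eta)}$; summing, we obtain the claim at step $k+1$.

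For (ii) I would use the standard Taylor expansion for conjugation by the flow $e^{sG}$. Writing $\frac{d}{ds}(e^{sG}Ae^{-sG}) = e^{sG}[G,A]e^{-sG}$ and iterating $N$ times gives
\[
e^G A e^{-G} \;=\; A + \sum_{k=1}^{N-1}\frac{1}{k!}\, Ad^k_G A + R_N, \qquad R_N := \frac{1}{(N-1)!}\int_0^1 (1-s)^{N-1} e^{sG}\bigl(Ad^N_G A\bigr)e^{-sG}\, ds.
\]
By Lemma~\ref{lemma coniugo col flusso}, conjugation by $e^{sG}$ preserves the pseudodifferential order, so $R_N \in OPS^{1-N(1-\eta)}$, and the same bound holds for $R_N^*$. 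Choosing $N$ large enough that $1-N(1-\eta)\le 0$ gives $R_N, R_N^* \in OPS^0$. Adding the adjoint,
\[
\bigl(e^G A e^{-G} - A\bigr) + \bigl(e^G A e^{-G} - A\bigr)^* \;=\; \sum_{k=1}^{N-1}\frac{1}{k!}\bigl(Ad^k_G A + (Ad^k_G A)^*\bigr) + (R_N + R_N^*),
\]
and by (i) every term in the finite sum lies in $OPS^{-(k-1)(1-\eta)}\subset OPS^0$, while the remainder lies in $OPS^0$ by construction; the conclusion follows.

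The only genuinely subtle point is the base case and the reading of the hypothesis: one must exploit the cancellation $\bar g + g \in S^0$ (here encoded as $G^*=-G+R$) to pair $G$ against $A+A^*$ rather than against $A$ alone, gaining one full order from symmetric hyperbolicity. Everything else is bookkeeping of symbol orders using Remark~\ref{commutat pseudodiff} and Lemma~\ref{lemma coniugo col flusso}; no estimates on seminorms or on Lipschitz dependence enter, which makes the proof essentially algebraic once the identity displayed above is in hand.
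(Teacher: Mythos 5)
Your proof is correct and follows essentially the same route as the paper's: the same induction on $k$ built on an algebraic rearrangement of $Ad^{k+1}_{G}A+(Ad^{k+1}_{G}A)^*$ that exploits $G+G^*\in OPS^0$ (your identity is the paper's, with $R=G+G^*$ placed in the other slot), with orders counted via Remark \ref{commutat pseudodiff}, and the same Taylor expansion with integral remainder combined with Lemma \ref{lemma coniugo col flusso} for item (ii). Your added hypothesis $A+A^*\in OPS^0$ is also tacitly used by the paper, whose base case asserts $[G,A+A^*]\in OPS^0$ — for $\eta\in(0,1)$ this does require $A+A^*$ to be of order at most $1-\eta$ — and it holds (indeed $A+A^*\in OPS^0$) for every operator to which the lemma is applied in the proof of Theorem \ref{regula}, so your reading of the statement is the intended one.
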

\begin{proof}
	{\sc Proof of $(i)$.}We argue by induction: if $k=1,$ one has
	\begin{align*}
	[G, A] - [G^{*},\ A^*] &= [G, A] + [A^*, G^{*}]\\
	&= [G, A + A^{*} ] + [A^*,\ G + G^{*}] \in OPS^{0}.
	\end{align*}
	Assume that for some $k\geq 1$ 
	$$
	Ad^{k}_{G}A + (Ad^{k}_{G}A)^* \in OPS^{ -(k-1)(1-\eta)}\,.
	$$
	A direct calculation shows that 
	\begin{align*}
	Ad^{k + 1}_G A + (Ad_G^{k + 1} A)^* & = [G + G^*, Ad_G^k A] - [G^*, Ad_G^k A + (Ad_G^k A)^* ]\,.
	\end{align*}
	Since by Remark \ref{commutat pseudodiff} $Ad_G^k A, (Ad_G^k A)^* \in OPS^{1 - k(1 - \eta)}$ and using the induction hypothesis and that $G^* \in OPS^\eta$, $G + G^* \in OPS^0$, one obtains that $Ad^{k + 1}_G A + (Ad_G^{k + 1} A)^* \in OPS^{- k(1 - \eta)}$.
	
	\noindent
	{\sc Proof of $(ii)$.} 
	$\forall\ M>0$ one computes
	\begin{gather*}
	e^{-G} A e^{G} - A = \sum_{k=1}^{M} \frac{Ad^{k}_{G}A}{k!} + \int_{0}^{1} \frac{(1-s)^{M+1}}{(M+1)!} 	e^{-s G} Ad^{M+1}_{G}A e^{s G}.
	\end{gather*}
	By applying Remark \ref{commutat pseudodiff}, choosing $M$ large enough such that ${\eta -(1-M)(1-\eta) <0,}$ one gets that 
	\begin{align*}
	e^{-G} A e^{G} - A + \big( e^{-G} A e^{G} - A \big)^* & = \sum_{k=1}^{M} \frac{Ad^{k}_{G}A + (Ad^{k}_{G}A)^*}{k!} + OPS^0  \stackrel{item (i)}{\in} OPS^0\,. 
	\end{align*}
\end{proof}

\begin{lemma} \label{rmk real rev anti sa fourier}
	Let $P  \in {\cal M}^{s}_{\s_1, \s_2}$ and $\forall\ k, k' \in \Z^d,\ \forall\ l \in \Z^{\frak n}$ let $[\widehat{P}(l)]_{k}^{k'}$ be the $(k,k')-$th matrix element with respect to the basis $\{e^{i k \cdot x}\ |\ k \in \Z^d \}$ of the operator $\widehat{P}(l)$ defined as in \eqref{def fourier l}. The following conditions hold:
	\begin{itemize}
		\item[(a)]  $P(\vphi)$ is real if and only if
		$$
		[\widehat{P}(l)]_{k}^{k'}= \left([\widehat{P}(-l)]_{-k}^{-k'}\right)^{*};
		$$
		\item[(b)] $P(\vphi)$ is reversible if and only if
		$$
		[\widehat{P}(l)]_{k}^{k'}= -[\widehat{P}(-l)]_{-k}^{-k'};
		$$
		\item[(c)] $P(\vphi)$ is reversibility preserving if and only if
		$$
		[\widehat{P}(l)]_{k}^{k'}= [\widehat{P}(-l)]_{-k}^{-k'}\,. 
		$$
	\end{itemize}
\end{lemma}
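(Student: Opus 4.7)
The plan is to derive all three characterizations by a single bookkeeping exercise: expand $P(\vphi)$ in the double Fourier basis $\{e^{\ii l \cdot \vphi} e^{\ii k \cdot x}\}$, apply the relevant symmetry to a test exponential $e^{\ii k' \cdot x}$, and match Fourier coefficients. The starting point is the identity
\begin{equation*}
P(\vphi) u = \sum_{l \in \Z^{\frak n},\, k, k' \in \Z^d} [\widehat P(l)]_{k}^{k'}\, \widehat u(k')\, e^{\ii l \cdot \vphi}\, e^{\ii k \cdot x},
\end{equation*}
valid for $u = \sum_{k'} \widehat u(k') e^{\ii k' \cdot x}$. Since all three conditions in the lemma are genuine equivalences, matching coefficients (which is an iff statement) will give both directions at once.

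For (a), I would compute $\overline P(\vphi) u := \overline{P(\vphi)[\bar u]}$. Using $\bar u = \sum_{k'} \overline{\widehat u(-k')}\, e^{\ii k' \cdot x}$, plugging in, conjugating, and then relabeling $l \mapsto -l$, $k \mapsto -k$, $k' \mapsto -k'$, one obtains
\begin{equation*}
\overline P(\vphi) u = \sum_{l, k, k'} \overline{[\widehat P(-l)]_{-k}^{-k'}}\, \widehat u(k')\, e^{\ii l \cdot \vphi}\, e^{\ii k \cdot x}.
\end{equation*}
Thus $P = \overline P$ is equivalent, by uniqueness of Fourier coefficients, to $[\widehat P(l)]_{k}^{k'} = \overline{[\widehat P(-l)]_{-k}^{-k'}}$, which is exactly (a).

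For (b) and (c), I would apply the defining relation $P(\vphi) \circ S = \mp S \circ P(-\vphi)$ (minus for reversibility, plus for reversibility preserving) to $e^{\ii k' \cdot x}$. The left-hand side equals $\sum_{l,k} [\widehat P(l)]_{k}^{-k'}\, e^{\ii l \cdot \vphi} e^{\ii k \cdot x}$, since $S[e^{\ii k' \cdot x}] = e^{-\ii k' \cdot x}$. For the right-hand side, I would first expand $P(-\vphi)[e^{\ii k' \cdot x}] = \sum_{l,k} [\widehat P(l)]_{k}^{k'}\, e^{-\ii l \cdot \vphi} e^{\ii k \cdot x}$, then apply $S$, and finally relabel $l \mapsto -l$, $k \mapsto -k$, obtaining $\mp \sum_{l,k} [\widehat P(-l)]_{-k}^{k'}\, e^{\ii l \cdot \vphi} e^{\ii k \cdot x}$. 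Equating Fourier coefficients and replacing $k' \mapsto -k'$ delivers the stated identities in (b) and (c).

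The whole argument is elementary Fourier bookkeeping; the only point demanding a bit of care is to handle consistently the two independent sign flips on the time index $l$ and the spatial indices $k, k'$, and to treat the complex conjugation in (a) as acting on the full Fourier expansion in both $\vphi$ and $x$. I do not anticipate any genuine obstacle beyond setting the notation straight.
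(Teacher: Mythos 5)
Your proof is correct: the paper states this lemma without proof, and your argument --- expanding $P(\vphi)$ in the double Fourier basis, applying the reality/reversibility identities to the exponentials $e^{\ii k' \cdot x}$, and matching coefficients in $\vphi$ and $x$ (with the conjugation in (a) flipping the sign of $l$ because $\vphi$ is real) --- is exactly the elementary computation the authors leave implicit. The sign bookkeeping in the relabelings $l \mapsto -l$, $k \mapsto -k$, $k' \mapsto -k'$ is handled consistently, so there is nothing to add beyond fixing the (irrelevant) normalization constants in the matrix-element expansions.
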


\subsection{Tame estimates in ${\cal M}^{s}_{\s_1,\s_2}$}
\begin{lemma} \label{lemma prodotto 1} 
	$(i)$ Let $ \s_1, \s_2, \s_3 \in \R$ and let us assume that ${\cal R}$, ${\cal P}$ are linear operators such that\\
	 ${\cal P}\in \bhsSunoSdue{\s_1}{\s_2}$, ${\cal R} \in \bhsSunoSdue{\s_2}{\s_3},$  Then $ {\cal R \cal P} \in \bhsSunoSdue{\s_1}{\s_3} $  with
	$$
	\normahs{\cal R \cal P}{\s_1}{ \s_3} \leq \normahs{\cal R}{\s_2}{\s_3}  \normahs{\cal P}{\s_1}{ \s_2}.
	$$

\noindent
$(ii)$ Let $\sigma_1,  \sigma_2, \sigma_3 \in \R$, $\beta \geq 0$ and assume that 
$\langle \nabla \rangle^\beta{\cal P}, {\cal P}\in \bhsSunoSdue{\s_1}{\s_2}.$, $\langle \nabla \rangle^\beta{\cal R}, {\cal R} \in \bhsSunoSdue{\s_2}{\s_3},$  Then $ \langle \nabla \rangle^\beta {\cal R \cal P} \in \bhsSunoSdue{\s_1}{\s_3} $  with
	$$
	\| \langle \nabla \rangle^\beta {\cal R} {\cal P} \|_{\sigma_1, \sigma_3}^{HS}	\lesssim_\beta  \| \langle \nabla \rangle^\beta {\cal R}  \|_{\sigma_2, \sigma_3}^{HS} \| {\cal P} \|_{\sigma_1, \sigma_2}^{HS} + \|  {\cal R}  \|_{\sigma_2, \sigma_3}^{HS} \| \langle \nabla \rangle^\beta {\cal P} \|_{\sigma_1, \sigma_2}^{HS}
	$$
\end{lemma}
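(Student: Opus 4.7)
\textbf{Plan for the proof of Lemma \ref{lemma prodotto 1}.}

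The plan is to work directly at the level of matrix elements in the exponential basis, using Cauchy--Schwarz in the intermediate summation index. For part $(i)$, I would write the matrix elements of the composition as
\[
(\mathcal{R}\mathcal{P})_k^{k'} \;=\; \sum_{k'' \in \mathbb Z^d} \mathcal{R}_k^{k''}\,\mathcal{P}_{k''}^{k'}
\]
and then apply Cauchy--Schwarz with the ``splitting weight'' $\langle k''\rangle^{\pm\sigma_2}$, namely
\[
|(\mathcal{R}\mathcal{P})_k^{k'}|^2 \;\leq\; \Big(\sum_{k''} \langle k''\rangle^{-2\sigma_2}\,|\mathcal{R}_k^{k''}|^2 \Big)\Big(\sum_{k''} \langle k''\rangle^{2\sigma_2}\,|\mathcal{P}_{k''}^{k'}|^2\Big).
\]
Multiplying by $\langle k\rangle^{2\sigma_3}\langle k'\rangle^{-2\sigma_1}$ and summing over $k,k'$, the two factors decouple and one recovers exactly $\|\mathcal{R}\|_{\sigma_2,\sigma_3}^{HS,\,2}\cdot\|\mathcal{P}\|_{\sigma_1,\sigma_2}^{HS,\,2}$. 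This is the standard ideal-property computation for Hilbert--Schmidt-type norms with Sobolev weights.

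For part $(ii)$, the extra ingredient is the elementary inequality
\[
\langle k - k'\rangle^\beta \;\lesssim_\beta\; \langle k-k''\rangle^\beta + \langle k''-k'\rangle^\beta,
\]
which follows from the triangle inequality $\langle k-k'\rangle \leq \langle k-k''\rangle+\langle k''-k'\rangle$ and the convexity bound $(a+b)^\beta\lesssim_\beta a^\beta+b^\beta$. Applied to
\[
(\langle \nabla\rangle^\beta \mathcal{R}\mathcal{P})_k^{k'} \;=\; \langle k-k'\rangle^\beta \sum_{k''} \mathcal{R}_k^{k''}\,\mathcal{P}_{k''}^{k'},
\]
this splits the sum into two pieces, one where the weight $\langle k-k''\rangle^\beta$ is absorbed into the matrix element of $\mathcal{R}$ (producing $\langle\nabla\rangle^\beta\mathcal{R}$) and one where $\langle k''-k'\rangle^\beta$ is absorbed into $\mathcal{P}$ (producing $\langle\nabla\rangle^\beta\mathcal{P}$). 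Then I would apply Cauchy--Schwarz exactly as in part $(i)$ to each of the two pieces, obtaining the two summands of the claimed inequality.

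There is no real obstacle here: the whole content of the lemma is the ideal property of the Hilbert--Schmidt norm with Sobolev weights, plus the standard tame splitting of $\langle k-k'\rangle^\beta$ across an intermediate frequency. The only mild care needed is to verify that the weights $\langle k''\rangle^{\pm\sigma_2}$ introduced by Cauchy--Schwarz distribute correctly (so that the $\mathcal{R}$-sum reconstructs $\|\mathcal{R}\|_{\sigma_2,\sigma_3}^{HS}$ and the $\mathcal{P}$-sum reconstructs $\|\mathcal{P}\|_{\sigma_1,\sigma_2}^{HS}$), which is a matter of pairing exponents correctly. No use of the Lipschitz structure in $\widetilde\omega$ is needed, since the statement is pointwise in $\widetilde\omega$; the Lipschitz version used elsewhere in the paper follows from this one by the standard interpolation $\|fg\|^{\rm Lip}\leq \|f\|^{\rm sup}\|g\|^{\rm Lip}+\|f\|^{\rm Lip}\|g\|^{\rm sup}$.
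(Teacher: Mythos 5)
Your proposal is correct and follows essentially the same route as the paper: matrix elements in the exponential basis, Cauchy--Schwarz in the intermediate frequency with the splitting weight $\langle j\rangle^{\pm\sigma_2}$, and for part $(ii)$ the inequality $\langle k-k'\rangle^\beta \lesssim_\beta \langle k-j\rangle^\beta + \langle j-k'\rangle^\beta$ to distribute the derivative weight onto ${\cal R}$ or ${\cal P}$. No gaps; your remark that the Lipschitz version follows by the standard product splitting matches how the paper uses the lemma.
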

\begin{proof}
We prove the estimate $(ii)$. The estimate $(i)$ can be proved by similar arguments (and it is actually simpler). We have that
\begin{align*}
	\left(\normahs{ \Gradbeta {\cal R}{\cal P} }{\sigma_1}{ \sigma_3}\right)^2 & = \sum_{k, k', \in \Z^d} \langle k \rangle^{2 \sigma_3} \langle k' \rangle^{-2 \sigma_1} \big|\sum_{j \in \Z^d} \langle k - k' \rangle ^{\beta} {\cal R}_{k}^{j} {\cal P}_{j}^{k'}\big|^{2}\\
	&\lesssim_{\beta} \sum_{k, k', \in \Z^d} \langle k \rangle^{2\sigma_3} \langle k' \rangle^{-2 \sigma_1} \big[\sum_{j \in \Z^d} \left(\langle k - j \rangle ^{\beta} + \langle j-k' \rangle^{\beta}\right) |{\cal R}_{k}^{j} {\cal P}_{j}^{k'}|\big]^{2}\\
	&\lesssim_\beta \sum_{k, k', \in \Z^d} \langle k \rangle^{2 \sigma_3} \langle k' \rangle^{-2 \sigma_1} \big[\sum_{j \in \Z^d} |\left(\Gradbeta {\cal R}\right)_{k}^{j}|\ |{\cal P}_{j}^{k'}|\big]^{2}\\
	&+ \sum_{k, k', \in \Z^d} \langle k \rangle^{2 \sigma_3} \langle k' \rangle^{-2 \sigma_1} \big[\sum_{j \in \Z^d} |{\cal R}_{k}^{j}|\ |\left(\Gradbeta {\cal P}\right)_{j}^{k'}|\big]^{2}\\
	&\lesssim_\beta \sum_{k, j \in \Z^d} \langle k \rangle^{2\sigma_3} |\left(\Gradbeta {\cal R}\right)_{k}^{j}|^2 \langle j \rangle ^{-2 \sigma_2}  \sum_{j, k' \in \Z^d} \langle j \rangle ^{2 \sigma_2} |{\cal P}_{j}^{k'}|^{2} \langle k' \rangle^{-2 \sigma_1}\\
	&+ \sum_{k, j \in \Z^d} \langle k \rangle^{2 \sigma_3} |{\cal R}_{k}^{j}|^2 \langle j \rangle ^{-2 \sigma_2}  \sum_{j, k' \in \Z^d} \langle j \rangle ^{2 \s_2} |\left(\Gradbeta {\cal P}\right)_{j}^{k'}|^{2} \langle k' \rangle^{-2 \sigma_1}\\
	& \lesssim_\beta \left(\normahs{\Gradbeta \cal R}{ \s_2}{\s_3}\right)^2 \left(\normahs{\cal P}{\s_1}{\s_2}\right)^2 + \left(\normahs{ \cal R}{\s_2}{\s_3}\right)^2 \left(\normahs{\Gradbeta \cal P}{\s_1}{\s_2}\right)^2.
	\end{align*}

\end{proof}
\begin{lemma}\label{stima composizione} 
	\hfill \break
	$(i)$ Let $s \geq s_0$, $\sigma_1, \sigma_2, \sigma_3 \in \R$, $ {\cal P}(\lambda) \in {\cal M}^{s}_{\sigma_1, \sigma_2}$, $ {\cal R}(\lambda) \in {\cal M}^{s}_{\sigma_2,\sigma_3}$. Then $ {\cal R} {\cal P}(\lambda) \in {\cal M}^{s}_{\sigma_1, \sigma_3}$ and
	\begin{align*}
	\normaSsigsig{{\cal R} {\cal P}}{\sigma_1}{\sigma_3} \lesssim_{s}
	\| {\cal R} \|^{\Lip}\indSsigsig{s}{\sigma_2}{\sigma_3} \| {\cal P} \|^{ \Lip}\indSsigsig{s_0}{\sigma_1}{ \sigma_2} + \| {\cal R} \|^{\Lip}\indSsigsig{s_0}{\sigma_2}{ \sigma_3} \| {\cal P} \|^{ \Lip}\indSsigsig{s}{\sigma_1}{ \sigma_2}.
	\end{align*}
	\noindent
	$(ii)$ Let $\beta \geq 0$, $s \geq s_0$, $\sigma_1, \sigma_2, \sigma_3 \in \R$. Assume that $ \Gradbeta {\cal P}(\lambda) \in {\cal M}^{s}_{\sigma_1, \sigma_2}$,
	$ \Gradbeta {\cal R}(\lambda) \in  {\cal M}^{s}_{\sigma_2,\sigma_3}$. Then $ \Gradbeta {\cal R} {\cal P}(\lambda) \in {\cal M}^{s}_{\sigma_1,\sigma_3},$ and
	\begin{align*}
	\normaSsigsig{\Gradbeta {\cal R} {\cal P}}{\sigma_1}{\sigma_3} \lesssim_{s, \beta}
	\| \Gradbeta {\cal R} \|^{\Lip}\indSsigsig{s}{\sigma_2}{ \sigma_3} \| {\cal P} \|^{\Lip}
	\indSsigsig{s}{\sigma_1}{ \sigma_2}
	+ \| {\cal R} \|^{\Lip}\indSsigsig{s}{\sigma_2}{ \sigma_3} \|\Gradbeta {\cal P} \|^{\Lip}\indSsigsig{s}{\sigma_1}{ \sigma_2}.
	\end{align*}
\end{lemma}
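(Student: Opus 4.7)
The idea is to reduce both estimates to a convolution bound on the Fourier side, using the multiplicative estimate for Hilbert--Schmidt norms already established in Lemma \ref{lemma prodotto 1}. Writing the composition in terms of the Fourier coefficients on $\T^{\frak n}$ one has the convolution identity
$$
\widehat{\mathcal{R}\mathcal{P}}(l) \;=\; \sum_{l_1 + l_2 = l} \widehat{\mathcal{R}}(l_1)\,\widehat{\mathcal{P}}(l_2)\,, \qquad l \in \Z^{\frak n}\,.
$$
For part $(i)$, I would apply Lemma \ref{lemma prodotto 1}$(i)$ to each summand to get
$$
\|\widehat{\mathcal{R}\mathcal{P}}(l)\|^{HS}_{\sigma_1,\sigma_3} \;\leq\; \sum_{l_1 + l_2 = l} \|\widehat{\mathcal{R}}(l_1)\|^{HS}_{\sigma_2,\sigma_3}\,\|\widehat{\mathcal{P}}(l_2)\|^{HS}_{\sigma_1,\sigma_2}\,,
$$
and then estimate the weighted $\ell^2$ norm of the convolution by the standard trick: one uses $\langle l\rangle^s \lesssim_s \langle l_1\rangle^s + \langle l_2\rangle^s$ when $l = l_1 + l_2$, splits the double sum into two pieces (one carrying the high frequency on $\mathcal{R}$, the other on $\mathcal{P}$), and applies the Cauchy--Schwarz inequality. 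The summability of $\sum_{l} \langle l\rangle^{-2 s_0}$, which follows from the choice $s_0 = [\frak n/2] + 1$ in \eqref{definizione s0}, makes this a classical tame inequality for Sobolev spaces on $\T^{\frak n}$. This yields the $\sup$ part of the norm.

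For the Lipschitz part of $\|\cdot\|^{\Lip}$, I would insert the standard telescoping identity
$$
(\mathcal{R}\mathcal{P})(\widetilde\omega_1) - (\mathcal{R}\mathcal{P})(\widetilde\omega_2) = \bigl(\mathcal{R}(\widetilde\omega_1) - \mathcal{R}(\widetilde\omega_2)\bigr) \mathcal{P}(\widetilde\omega_1) + \mathcal{R}(\widetilde\omega_2)\bigl(\mathcal{P}(\widetilde\omega_1) - \mathcal{P}(\widetilde\omega_2)\bigr),
$$
divide by $|\widetilde\omega_1 - \widetilde\omega_2|$, and repeat the same convolution-plus-Cauchy--Schwarz argument. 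The resulting bound combines cleanly with the sup part into the statement of $(i)$, after recalling Definition \ref{norma generale Lip gamma}.

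Part $(ii)$ follows by the same scheme, replacing the appeal to Lemma \ref{lemma prodotto 1}$(i)$ by Lemma \ref{lemma prodotto 1}$(ii)$, which already absorbs the key triangle-type inequality $\langle k - k'\rangle^\beta \lesssim_\beta \langle k - j\rangle^\beta + \langle j - k'\rangle^\beta$ needed to split the $\langle \nabla\rangle^\beta$ weight between the two factors. There is no serious obstacle here: the only bookkeeping point is to make sure that when splitting the double Fourier/momentum sum one always puts the $H^s$ weight $\langle l\rangle^s$ (or the $\langle\nabla\rangle^\beta$ weight) on the correct factor so that the remaining factor is controlled in the low-regularity norm with index $s_0$, ensuring summability. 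Everything else is a direct consequence of the algebraic identities and of Lemma \ref{lemma prodotto 1}.
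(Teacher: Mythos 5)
Your plan coincides with the paper's own proof: both estimate the Fourier coefficients of the composition by the convolution of Hilbert--Schmidt norms via Lemma \ref{lemma prodotto 1}, split $\langle l\rangle^{s}$ between the two factors and use Cauchy--Schwarz together with $\sum_{l}\langle l\rangle^{-2s_0}<\infty$, handle the Lipschitz part by the same telescoping decomposition, and obtain $(ii)$ by invoking Lemma \ref{lemma prodotto 1}$(ii)$ instead of $(i)$. No gaps; your remark about keeping a factor at regularity $s_0$ in part $(ii)$ would even give a tame bound slightly stronger than the stated one.
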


\begin{proof}
	{\sc Estimate $(i)$.}
	By applying Lemma \ref{lemma prodotto 1}-$(i)$, one computes
	\begin{align*}
	\left(\|({\cal  R} {\cal P}) \|\indSsigsig{s}{\s_1}{\s_3}\right)^2 &\leq
	\sum_{l \in \Z^{\frak n}}  \langle l \rangle ^{2s}
	\left[\sum_{l' \in \Z^{\frak n}} \normahs{\widehat{{\cal  R}}(l-l')}{\s_2}{ \s_2} \normahs{\widehat{{\cal P}}(l')}{\s_1}{ \s_2}  \right]^2\\
	&\lesssim_{s} \sum_{l \in \Z^{\frak n}} 
	\left[\sum_{l' \in \Z^{\frak n}}  \langle l' \rangle ^{s} \normahs{\widehat{{\cal R}}(l-l')}{\s_2}{ \s_3} \normahs{\widehat{{\cal P}}(l')}{\s_1}{ \s_2}  \right]^2\\
	&+  \sum_{l \in \Z^{\frak n}} 
	\left[\sum_{l' \in \Z^{\frak n}}  \langle l- l' \rangle ^{s} \normahs{\widehat{{ \cal R}}(l-l')}{\s_2}{ \s_3} \normahs{\widehat{{\cal P}}(l')}{\s_1}{ \s_2}  \right]^2\\
	&\leq \sum_{l, l' \in Z^{\frak n}} \langle l' \rangle ^{2s} \langle l- l' \rangle ^{2s_0} \left(\normahs{\widehat{{\cal R}}(l-l')}{\s_2}{ \s_3}\right)^{2}
	\left(\normahs{\widehat{{\cal P}}(l')}{\s_1}{ \s_2}\right)^{2}\\
	&+ \sum_{l, l' \in Z^{\frak n}} \langle l' \rangle ^{2s_0} \langle l- l' \rangle ^{2s} \left(\normahs{\widehat{{\cal R}}(l-l')}{\s_2}{ \s_3}\right)^{2}
	\left(\normahs{\widehat{{\cal P}}(l')}{\s_1}{ \s_2}\right)^{2}\\
	&=\left(\| {\cal R} \| \indSsigsig{s_0}{\s_2}{ \s_3}\right)^{2} \left(\| {\cal P} \|\indSsigsig{s}{\s_1}{ \s_2}\right)^{2} +\left(\| {\cal R} \|\indSsigsig{s}{\s_2}{\s_3}\right)^{2} \left(\| {\cal P} \|\indSsigsig{s_0}{\s_1}{ \s_2}\right)^{2}.
	\end{align*}
	To get the required estimate in Lipschitz norm, it is sufficient to decompose
	$$
	\left({\cal R} {\cal P}\right)(\lambda_2) - \left({\cal R} {\cal P}\right)(\lambda_1) = {\cal R}(\lambda_2) \left( {\cal P}(\lambda_2) - {\cal P}(\lambda_1)\right) + \left({\cal R}(\lambda_2) - {\cal R}(\lambda_1)\right){\cal P}(\lambda_1)
	$$
	and to apply the above inequality to both the terms of the right-hand side, taking respectively
	$$ {\cal R}(\lambda_2) \textrm{ as } {\cal R}, \quad {\cal P}(\lambda_2) - {\cal P}(\lambda_1) \textrm{ as } {\cal P}$$
	and
	$$ {\cal R}(\lambda_2) - {\cal R}(\lambda_1) \textrm{ as } {\cal R}, \quad\ {\cal P}(\lambda_1)  \textrm{ as } {\cal P}.$$
	{\sc Estimate $(ii)$.}
	Arguing as before, one has
	\begin{align}
	\left(\|\Gradbeta ({\cal R} {\cal P}) \|\indSsigsig{s}{\s_1}{\s_3}\right)^2 &= \sum_{l \in Z^p} \langle l \rangle ^{2s} \big(\normahs{ \Gradbeta \widehat{({\cal R} {\cal P})}(l)} {\s_1}{\s_3}\big)^2 \nonumber\\
	&\leq \sum_{l \in \Z^{\frak n}} \langle l \rangle ^{2s} \left( \sum_{l' \in \Zp} \normahs{ \Gradbeta \widehat{{\cal R}}(l-l') \widehat{{\cal P}}(l')}{\s_1}{ \s_3}\right)^2 \nonumber\\
	& \lesssim_{s} \sum_{l, l' \in Zp} \langle l' \rangle ^{2s} \langle l- l' \rangle ^{2s}  \left(\normahs{ \Gradbeta \widehat{{\cal R}}(l-l') \widehat{{\cal P}}(l') }{\s_1}{ \s_3}\right)^2 \label{abatantuono 1}
	\end{align}
	where in the last inequality, we have used that 
	$$
	\langle l \rangle^{2 s} \lesssim_s \langle l'\rangle^{2 s} + \langle l - l' \rangle^{2 s} \lesssim_s \langle l'\rangle^{2 s} \langle l - l' \rangle^{2 s}\,. 
	$$
	By applying Lemma \ref{lemma prodotto 1}-$(ii)$ (to estimate $\normahs{ \Gradbeta \widehat{{\cal R}}(l-l') \widehat{{\cal P}}(l') }{\s_1}{ \s_3}$) one obtains that 
		\begin{align*}
	\left(\|\Gradbeta ({\cal R} {\cal P}) \|\indSsigsig{s}{\s_1}{\s_3}\right)^2 &\lesssim_{s, \beta} \sum_{l, l' \in Zp} \langle l' \rangle ^{2s} \langle l- l' \rangle ^{2s} \left(\normahs{\Gradbeta \widehat{{\cal R}}(l-l')}{\s_2}{ \s_3}\right)^2 \left(\normahs{\widehat{{\cal P}}(l')}{\s_1}{ \s_2}\right)^2\\
	&+ \sum_{l, l' \in Zp} \langle l' \rangle ^{2s} \langle l- l' \rangle ^{2s} \left(\normahs{\widehat{{\cal R}}(l-l')}{\s_2}{ \s_3}\right)^2 \left(\normahs{\Gradbeta \widehat{{\cal P}}(l')}{\s_1}{ \s_2}\right)^2\\
	& \lesssim_{s, \beta} \left(\|\Gradbeta {\cal R}\|\indSsigsig{s}{\s_2}{ \s_3}\right)^{2} \left(\|{\cal P} \| \indSsigsig{s}{\s_1}{ \s_2}\right)^{2}  \\
	& \quad + \left(\|{\cal R}\|\indSsigsig{s}{\s_2}{ \s_3}\right)^{2} \left(\| \Gradbeta {\cal P} \| \indSsigsig{s}{\s_1}{ \s_2}\right)^{2}.
	\end{align*}
	Concerning the Lipschitz estimates, as in the proof of $(i)$ we write
	\begin{align*}
	\Gradbeta \left({\cal R}{\cal P}(\lambda_2) - {\cal R}{\cal P}(\lambda_1)\right) &= \Gradbeta 
	{\cal R}(\lambda_2) \left( {\cal P}(\lambda_2) - {\cal P}(\lambda_1)\right) + \Gradbeta \left({\cal R}(\lambda_2) - {\cal R}(\lambda_1)\right){\cal P}(\lambda_1)
	\end{align*}
	and we repeat the same argument with
	$$ {\cal R}(\lambda_2) \textrm{ as } {\cal R}, \quad {\cal P}(\lambda_2) - {\cal P}(\lambda_1) \textrm{ as } {\cal P}$$
	and
	$$ {\cal R}(\lambda_2) - {\cal R}(\lambda_1) \textrm{ as } {\cal R}, \quad\ {\cal P}(\lambda_1)  \textrm{ as } {\cal P}.$$
\end{proof}
Iterating the estimates of Lemma \ref{stima composizione}, one gets for any $s \geq s_0$, $\sigma \in \R$, $n \geq 1$
\begin{equation}\label{stime composizione iterate}
\begin{aligned}
& \| {\cal R}^n  \|^{\rm Lip}_{{\cal M}^s_{\sigma , \sigma }} \leq C(s)^n \| {\cal R} \|^{\rm Lip}_{{\cal M}^s_{\sigma , \sigma }} \big( \| {\cal R} \|^{\rm Lip}_{{\cal M}^{s_0}_{\sigma , \sigma}} \big)^{n - 1}\,, \\
& \| \langle \nabla \rangle^\beta({\cal R}^n) \|_{{\cal M}^s_{\sigma , \sigma }}^{\rm Lip} \leq C(s, \beta)^n  \| \langle \nabla \rangle^\beta{\cal R} \|_{{\cal M}^s_{\sigma , \sigma }}^{\rm Lip}  \big(\| {\cal R} \|_{{\cal M}^s_{\sigma , \sigma}}^{\rm Lip} \big)^{n - 1}\,. 
\end{aligned}
\end{equation}
The following lemma holds:
\begin{lemma}\label{Neumann series}
Let $s \geq s_0$, $\sigma \in \R$, $\beta \geq 0$ and $X(\lambda), \langle \nabla \rangle^\beta X(\lambda) \in {\cal M}^s_{\sigma, \sigma}$. Then there exists $\delta (s, \beta) \in (0, 1)$ such that if $\| X\|_{{\cal M}^s_{\sigma, \sigma}}^{\rm Lip} \leq  \delta(s, \beta)$, then $\Phi := {\rm Id} + X$ is invertible and its inverse $\Phi^{- 1}$ satisfies the estimates 
$$
\| \Phi^{- 1} - {\rm Id} \|_{{\cal M}^s_{\sigma, \sigma}}^{\rm Lip} \lesssim_s \| X\|_{{\cal M}^s_{\sigma, \sigma}}^{\rm Lip}\,, \quad \| \langle \nabla \rangle^\beta (\Phi^{- 1} - {\rm Id}) \|_{{\cal M}^s_{\sigma, \sigma}}^{\rm Lip} \lesssim_s \| \langle \nabla \rangle^\beta X\|_{{\cal M}^s_{\sigma, \sigma}}^{\rm Lip}\,. 
$$
\end{lemma}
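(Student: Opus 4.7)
\medskip
\noindent
\textbf{Proof plan for Lemma \ref{Neumann series}.} The plan is to view $\Phi^{-1} - \mathrm{Id}$ as the Neumann series $\sum_{n\geq 1}(-1)^n X^n$ in the algebra $\mathcal{M}^{s}_{\sigma,\sigma}$ and to estimate each $X^n$ using the iterated composition bounds \eqref{stime composizione iterate}. First I would check that for the smallness threshold
$$
\delta(s,\beta) := \min\bigl( 1/(2C(s)),\ 1/(2C(s,\beta)) \bigr),
$$
where $C(s),C(s,\beta)$ are the constants appearing in \eqref{stime composizione iterate}, the hypothesis $\|X\|^{\mathrm{Lip}}_{\mathcal{M}^s_{\sigma,\sigma}}\leq \delta(s,\beta)$ together with the embedding $\|X\|^{\mathrm{Lip}}_{\mathcal{M}^{s_0}_{\sigma,\sigma}}\leq \|X\|^{\mathrm{Lip}}_{\mathcal{M}^{s}_{\sigma,\sigma}}$ (valid since $s\geq s_0$) yields the geometric bound $C(s)\|X\|^{\mathrm{Lip}}_{\mathcal{M}^{s_0}_{\sigma,\sigma}}\leq 1/2$, and similarly for the $\langle\nabla\rangle^\beta$ weighted norm.

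Next, using the first line of \eqref{stime composizione iterate} I would bound
$$
\sum_{n\geq 1}\|X^n\|^{\mathrm{Lip}}_{\mathcal{M}^s_{\sigma,\sigma}} \leq \|X\|^{\mathrm{Lip}}_{\mathcal{M}^s_{\sigma,\sigma}}\sum_{n\geq 1} \bigl(C(s)\|X\|^{\mathrm{Lip}}_{\mathcal{M}^{s_0}_{\sigma,\sigma}}\bigr)^{n-1}\cdot C(s) \leq 2 C(s)\|X\|^{\mathrm{Lip}}_{\mathcal{M}^s_{\sigma,\sigma}},
$$
so the series converges absolutely in $\mathcal{M}^s_{\sigma,\sigma}$ and its sum is the inverse of $\Phi$ (invertibility of $\mathrm{Id}+X$ on each $\mathcal{H}^{\sigma}$ also follows from a standard Neumann argument combined with the continuous embedding $\mathcal{M}^{s_0}_{\sigma,\sigma}\hookrightarrow \mathcal{B}(\mathcal{H}^\sigma)$ implicit in the definition of the class). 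This gives the first claimed estimate.

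For the weighted estimate I would apply the second line of \eqref{stime composizione iterate}, obtaining
$$
\sum_{n\geq 1}\|\langle\nabla\rangle^\beta X^n\|^{\mathrm{Lip}}_{\mathcal{M}^s_{\sigma,\sigma}} \leq C(s,\beta)\|\langle\nabla\rangle^\beta X\|^{\mathrm{Lip}}_{\mathcal{M}^s_{\sigma,\sigma}} \sum_{n\geq 1}\bigl(C(s,\beta)\|X\|^{\mathrm{Lip}}_{\mathcal{M}^s_{\sigma,\sigma}}\bigr)^{n-1},
$$
which is again a convergent geometric series under the chosen threshold $\delta(s,\beta)$; its sum is $\lesssim_{s,\beta}\|\langle\nabla\rangle^\beta X\|^{\mathrm{Lip}}_{\mathcal{M}^s_{\sigma,\sigma}}$, which is the second claim.

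The only mildly delicate point is that the Neumann series has to be interpreted in a space where absolute convergence actually produces the operator-theoretic inverse; this is handled by noting that partial sums $S_N := \sum_{n=0}^N (-1)^n X^n$ satisfy $(\mathrm{Id}+X)S_N = \mathrm{Id} + (-1)^N X^{N+1}$ and the remainder tends to zero in $\mathcal{M}^s_{\sigma,\sigma}$, hence in $\mathcal{B}(\mathcal{H}^\sigma)$, so passing to the limit identifies the sum with $\Phi^{-1}$. No step beyond this routine bookkeeping is required, since the heavy lifting has already been done in Lemma \ref{stima composizione} and its iteration \eqref{stime composizione iterate}.
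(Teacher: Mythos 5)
Your proposal is correct and is essentially the paper's own argument: the paper proves the lemma by writing $\Phi^{-1}-{\rm Id}=\sum_{n\geq 1}(-1)^n X^n$ and applying the iterated composition estimates \eqref{stime composizione iterate} term by term, exactly as you do. Your additional remarks on the choice of $\delta(s,\beta)$ and on identifying the sum of the series with the operator inverse are just the routine details the paper leaves implicit.
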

\begin{proof}
By the Neumann series one has $\Phi^{- 1} - {\rm Id} = \sum_{n \geq 1}(- 1)^n X^n$. Then, applying the estimates \eqref{stime composizione iterate} to each term $X^n$, the claimed statement follows. 
\end{proof}

\subsection{Other estimates in ${\cal M}^{s}_{\s_1,\s_2}$}

\begin{lemma} \label{lemma norma hs norma op}
$(i)$	Let $\sigma_1, \sigma_2 \in \R$ and $A \in {\cal B} \left( {\cal H}^{\s_1 - \eta}, {\cal H}^{\s_2}\right),$ $\eta > \frac{d}{2}$, then
	$$
	\normahs{A}{\sigma_1}{ \sigma_2} \lesssim_{\eta} \| A \|_{{\cal B}({\cal H}^{\sigma_1 - \eta}, {\cal H}^{\sigma_2} )},
	$$
	
	\noindent
	$(ii)$ Let $\sigma_1, \sigma_2 \in \R$, $\beta \geq 0$, $\eta > \frac{d}{2}$. Then if $A \in {\cal B}({\cal H}^{\sigma_1 - \beta - \eta}, {\cal H}^{\sigma_2 + \beta})$, one has  
	$$
	\| \langle \nabla \rangle^\beta A \|^{HS}_{\sigma_1, \sigma_2} \lesssim_\beta  \| A \|_{{\cal B}({\cal H}^{\sigma_1 - \beta - \eta}, {\cal H}^{\sigma_2 + \beta})}\,. 
	$$
\end{lemma}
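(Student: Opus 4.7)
Both estimates are purely algebraic and reduce to controlling the matrix entries $A_k^{k'}$ by testing $A$ on the exponentials $e^{ik'\cdot x}$, then using summability of $\langle k'\rangle^{-2\eta}$ over $\Z^d$ when $\eta>d/2$.

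For part $(i)$, the plan is to observe that $A_k^{k'}$ is precisely the $k$-th Fourier coefficient of $A[e^{ik'\cdot x}]$, whence
$$
\sum_{k\in\Z^d}\langle k\rangle^{2\sigma_2}|A_k^{k'}|^2=\|A[e^{ik'\cdot x}]\|_{{\cal H}^{\sigma_2}}^2\le \|A\|_{{\cal B}({\cal H}^{\sigma_1-\eta},{\cal H}^{\sigma_2})}^2\,\|e^{ik'\cdot x}\|_{{\cal H}^{\sigma_1-\eta}}^2\,.
$$
Using $\|e^{ik'\cdot x}\|_{{\cal H}^{\sigma_1-\eta}}^2=\langle k'\rangle^{2(\sigma_1-\eta)}$, multiplying by $\langle k'\rangle^{-2\sigma_1}$ and summing over $k'$ yields
$$
(\|A\|^{HS}_{\sigma_1,\sigma_2})^2\le \|A\|_{{\cal B}({\cal H}^{\sigma_1-\eta},{\cal H}^{\sigma_2})}^2\sum_{k'\in\Z^d}\langle k'\rangle^{-2\eta}\,,
$$
where the last sum converges exactly because $\eta>d/2$. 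This constant is the source of the $\lesssim_\eta$ dependence.

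For part $(ii)$, the strategy is to reduce to part $(i)$ by redistributing the factor $\langle k-k'\rangle^{2\beta}$ between $\langle k\rangle$ and $\langle k'\rangle$. Using the elementary inequality $\langle k-k'\rangle^{2\beta}\lesssim_\beta\langle k\rangle^{2\beta}+\langle k'\rangle^{2\beta}$ and the definition of the HS norm, one gets
$$
\bigl(\|\langle\nabla\rangle^\beta A\|^{HS}_{\sigma_1,\sigma_2}\bigr)^2\lesssim_\beta \bigl(\|A\|^{HS}_{\sigma_1,\sigma_2+\beta}\bigr)^2+\bigl(\|A\|^{HS}_{\sigma_1-\beta,\sigma_2}\bigr)^2\,.
$$
Each of the two summands is then controlled by applying part $(i)$: for the first one with target exponent $\sigma_2+\beta$, for the second one with source exponent $\sigma_1-\beta$. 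Both give upper bounds by $\|A\|_{{\cal B}({\cal H}^{\sigma_1-\beta-\eta},{\cal H}^{\sigma_2+\beta})}$, using the trivial embeddings ${\cal H}^{\sigma_1-\eta}\hookrightarrow{\cal H}^{\sigma_1-\beta-\eta}$ and ${\cal H}^{\sigma_2+\beta}\hookrightarrow{\cal H}^{\sigma_2}$ to match the hypothesis.

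There is no real obstacle here: the estimate is a weighted Schur/Hilbert--Schmidt type bound and everything reduces to the convergence of $\sum_{k'}\langle k'\rangle^{-2\eta}$ on $\Z^d$, which is the role of the assumption $\eta>d/2$. The only bookkeeping issue is to keep track that the weights in $(ii)$ combine correctly with the $d/2$-loss coming from $(i)$, which accounts for why the hypothesis in $(ii)$ requires $A$ to gain $\beta$ derivatives on the target side and lose $\beta+\eta$ on the source side.
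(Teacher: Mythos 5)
Your proposal is correct and follows essentially the same route as the paper: part $(i)$ is obtained exactly as in the paper by testing $A$ on the exponentials $e^{\ii k'\cdot x}$ (the paper merely normalizes them in ${\cal H}^{\sigma_1-\eta}$) and then summing $\langle k'\rangle^{-2\eta}$ over $\Z^d$. For part $(ii)$ the argument is also the same up to bookkeeping: the paper further bounds $\langle k\rangle^{2\beta}+\langle k'\rangle^{2\beta}\lesssim_\beta \langle k\rangle^{2\beta}\langle k'\rangle^{2\beta}$ so as to reduce to a single application of $(i)$ with the shifted indices $(\sigma_1-\beta,\sigma_2+\beta)$, whereas you keep two terms and use trivial Sobolev embeddings, which is equally valid.
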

\begin{proof}
{\sc Proof of $(i)$.}
	 Let us consider $\forall\ k' \in \Zd$ $u^{(k')} \in {\cal H}^{\s_1}$ defined by
	$$
	\hat{u}^{(k')}_{h} = \begin{cases}
	
	\langle k' \rangle ^{-(\s_1 - \eta)} \quad \textrm{if }h=k'\\
	0 \quad \quad \textrm{if }h\neq k';
	\end{cases}
	$$
	We have that
	\begin{align*}
	\sum_{k} \langle k \rangle^{2 \s_2} |A_{k}^{k'}|^2 \langle k' \rangle ^{-2(\s_1 - \eta)}&= \|A u^{(k')} \|^2 _{{\cal H}^{\s_2}}\\
	&\leq \| A\|^2 _{\bSunoSdue{\sigma_1-\eta}{\sigma_2}} \| u^{(k')}\|^2_{{\cal H}^{\s_1-\eta}}\\
	&=\| A\|^2_{\bSunoSdue{\sigma_1-\eta}{\sigma_2}},
	\end{align*}
	since $\| u^{(k')}\|_{{\cal H}^{\s_1-\eta}} = 1.$ Thus we deduce that $\forall\ k'$
	\begin{equation} \label{stima improbabile}
	\sum_{k} \langle k \rangle^{2 \s_2} |A_{k}^{k'}|^2 \leq \| A\|^2_{\bSunoSdue{\sigma_1-\eta}{\sigma_2}} \langle k' \rangle ^{2(\s_1 - \eta)}.
	\end{equation}
	Let now $u$ be a generic function in ${\cal H}^{\s_1}:$ from \eqref{stima improbabile} it follows that
	\begin{align*}
	\left(\normahs{A}{\sigma_1}{\sigma_2}\right)^2 &= \sum_{k, k' \in \Z^d} \langle k \rangle^{2 \sigma_2} |A_{k}^{k'}|^2 \langle k'\rangle^{-2\sigma_1}\\
	&\leq \sum_{k' \in \Z^d} \langle k' \rangle ^{2(\sigma_1 -\eta)} \| A\|^2_{\bSunoSdue{\sigma_1-\eta}{\sigma_2}} \langle k' \rangle ^{-2 \sigma_1}\\
	&\lesssim_{\s_0} \| A\|^2_{\bSunoSdue{\sigma_1-\eta}{\sigma_2}}.
	\end{align*}
	
	\noindent
	{\sc Proof of $(ii)$.} Using that for any $j, j' \in \Z^d$, $\langle j - j' \rangle^{2 \beta} \lesssim_\beta \langle j \rangle^{2 \beta} + \rangle j' \rangle^{2 \beta} \lesssim_\beta \langle j \rangle^{2 \beta} \langle j' \rangle^{2 \beta}$, one gets that 
	\begin{align}
	\big(\| \langle \nabla \rangle^\beta A\|^{HS}_{\sigma_1, \sigma_2}\big)^2 & = \sum_{j, j' \in \Z^d} \langle j \rangle^{2 \sigma_2} \langle j - j' \rangle^{2 \beta} |A_j^{j'}|^2 \langle j' \rangle^{- 2 \sigma_1} \nonumber\\
	& \lesssim_\beta  \sum_{j, j' \in \Z^d} \langle j \rangle^{2 (\sigma_2 + \beta)}  |A_j^{j'}|^2 \langle j' \rangle^{- 2 (\sigma_1 - \beta)} = \big( \| A\|^{HS}_{\sigma_1 - \beta, \sigma_2 + \beta} \big)^2\,. 
	\end{align}
	The claimed statement follows by applying item $(i)$ (replacing $\sigma_1$ with $\sigma_1 - \beta$ and $\sigma_2$ with $\sigma_2 + \beta$). 
\end{proof}

\begin{lemma} \label{lemma legame tra norme}
$(i)$	Let $A \in {\cal C}^s\Big(\T^{\frak n}; {\cal B} \left( {\cal H}^{\s_1 - \eta},\ {\cal H}^{\s_2}\right) \Big),\ \s_1, \sigma_2 \in \R$, $\eta > \frac{d}{2}$ and $s \geq 0$. Then 
	$$
	\| A \|\indSsigsig{s}{\sigma_1}{\sigma_2}  \lesssim \| A \|_{{\cal C}^s \big( \T^{\frak n}; {\cal B} \left( {\cal H}^{\s_1 - \eta},\ {\cal H}^{\s_2}\right) \big) }.
	$$
	
	\noindent
	$(ii)$ Let $s \geq 0$, $\sigma_1, \sigma_2 \in \R$, $\beta \geq 0$, $\eta > \frac{d}{2}$ and $A \in {\cal C}^s \Big( \T^{\frak n};  {\cal B}({\cal H}^{\sigma_1 - \beta - \eta}, {\cal H}^{\sigma_2 + \beta}) \Big)$. Then 
	$$
	\| \langle \nabla \rangle^\beta A\|_{{\cal M}^s_{\sigma_1, \sigma_2}} \lesssim_{ \beta}  \| A\|_{{\cal C}^s \Big( \T^{\frak n};  {\cal B}({\cal H}^{\sigma_1 - \beta - \eta}, {\cal H}^{\sigma_2 + \beta})\Big)}
	$$
	\end{lemma}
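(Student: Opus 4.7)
The plan is to reduce both estimates pointwise in $\varphi$ to the Hilbert--Schmidt estimates already established in Lemma \ref{lemma norma hs norma op}, and then exploit compactness of $\T^{\frak n}$ to pass from the ${\cal C}^s$ norm to the $H^s$--type norm appearing in the definition of ${\cal M}^s_{\sigma_1,\sigma_2}$.

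For part $(i)$, I would first rewrite the ${\cal M}^s$ norm in physical space. By Parseval's identity on $\T^{\frak n}$,
\[
\| A\|^2_{{\cal M}^s_{\sigma_1,\sigma_2}}=\sum_{l\in\Z^{\frak n}}\langle l\rangle^{2s}\big(\|\widehat A(l)\|^{HS}_{\sigma_1,\sigma_2}\big)^2,
\]
and since on the torus one has the standard equivalence between the Fourier-weighted norm and derivative norms (handling non-integer $s$ by Littlewood--Paley or interpolation), this is controlled up to constants by
$\sum_{|\alpha|\leq \lceil s\rceil}\int_{\T^{\frak n}}\big(\|\partial_\varphi^{\alpha}A(\varphi)\|^{HS}_{\sigma_1,\sigma_2}\big)^2\,d\varphi$.
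Now I would apply Lemma \ref{lemma norma hs norma op}$(i)$ pointwise in $\varphi$ to each derivative, obtaining
\[
\|\partial_\varphi^{\alpha}A(\varphi)\|^{HS}_{\sigma_1,\sigma_2}\lesssim_{\eta}\|\partial_\varphi^{\alpha}A(\varphi)\|_{{\cal B}({\cal H}^{\sigma_1-\eta},{\cal H}^{\sigma_2})}.
\]
Taking the supremum over $\varphi\in\T^{\frak n}$ and using that $\T^{\frak n}$ has finite measure then yields $\| A\|_{{\cal M}^s_{\sigma_1,\sigma_2}}\lesssim \| A\|_{{\cal C}^s(\T^{\frak n};{\cal B}({\cal H}^{\sigma_1-\eta},{\cal H}^{\sigma_2}))}$.

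For part $(ii)$, the cleanest approach is to reduce to part $(i)$ using the index shift produced by $\langle\nabla\rangle^\beta$. Indeed, from the computation already carried out in the proof of Lemma \ref{lemma norma hs norma op}$(ii)$ one has the pointwise bound
\[
\big(\|\langle\nabla\rangle^\beta A(\varphi)\|^{HS}_{\sigma_1,\sigma_2}\big)^2\lesssim_{\beta}\big(\| A(\varphi)\|^{HS}_{\sigma_1-\beta,\sigma_2+\beta}\big)^2,
\]
obtained from $\langle j-j'\rangle^{2\beta}\lesssim_\beta\langle j\rangle^{2\beta}\langle j'\rangle^{2\beta}$. Applying this to every Fourier coefficient $\widehat A(l)$ in $\varphi$ gives
\[
\|\langle\nabla\rangle^\beta A\|_{{\cal M}^s_{\sigma_1,\sigma_2}}\lesssim_{\beta}\| A\|_{{\cal M}^s_{\sigma_1-\beta,\sigma_2+\beta}},
\]
and then part $(i)$ (with $\sigma_1$ replaced by $\sigma_1-\beta$ and $\sigma_2$ replaced by $\sigma_2+\beta$) closes the argument.

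There is no real obstacle here: the statement is a routine interplay between a Hilbert--Schmidt/operator-norm comparison and a $C^s$--$H^s$ comparison on a compact torus. The only point that requires a bit of care is the treatment of non-integer smoothness $s$ in $\varphi$, which is however standard (e.g.\ via dyadic decomposition or complex interpolation between consecutive integer levels), and the bookkeeping of the index shift in part $(ii)$, which follows exactly the pattern already used inside the proof of Lemma \ref{lemma norma hs norma op}$(ii)$.
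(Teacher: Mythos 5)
Your proposal is correct and follows essentially the same route as the paper: both reduce the statement to Lemma \ref{lemma norma hs norma op} applied pointwise in $\varphi$ (with the index shift $\langle j-j'\rangle^{2\beta}\lesssim_\beta\langle j\rangle^{2\beta}\langle j'\rangle^{2\beta}$ handling part $(ii)$, exactly as inside the proof of that lemma), and then use the comparison $\|\cdot\|_{H^s(\T^{\frak n};X)}\leq\|\cdot\|_{{\cal C}^s(\T^{\frak n};X)}$ for Banach-space-valued functions, which the paper invokes as a standard fact and you spell out via Parseval and derivative bounds.
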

\begin{proof}
The claimed statement follows recalling that ${\cal M}_{\sigma_1, \sigma_2}^s = {\cal H}^s\Big( \T^{\frak n} ;  {\cal B}^{HS}({\cal H}^{\sigma_1}, {\cal H}^{\sigma_2}) \Big)$, by applying Lemma \ref{lemma norma hs norma op} and using that for every Banach space $X$ one has that $\| \cdot \|_{H^s(\T^{\frak n};  X)} \leq \| \cdot \|_{{\cal C}^s(\T^{\frak n};  X)}$. 
\end{proof}

\begin{lemma} \label{lemma norma dm}
$(i)$	Let $m \geq 0$, $A \in {\cal C}^\infty(\T^{\frak n}, OPS^{-\kappa}),\ \kappa > 2 m + \frac{d}{2}$. Then for any $\sigma \in \R$, 

$A \in {\cal C}^\infty(\T^{\frak n};  {\cal B} \left( {\cal H}^{\s +m-\kappa}, {\cal H}^{\s+m}\right))$ and for any $s \geq 0$
	$$
	\|A \|\indSsigsig{s}{\s-m}{\s +m} \lesssim \| A \|_{{\cal C}^s(\T^{\frak n} ;  {\cal B}({\cal H}^{\s + m - \kappa}, {\cal H}^{\s + m}))}\, 
		$$
		
		\noindent
		$(ii)$ Let $m, \beta \geq 0$ and $A \in {\cal C}^\infty(\T^{\frak n} ;  OPS^{- \kappa})$, $\kappa > 2 m + 2 \beta + \frac{d}{2}$. Then for any $\sigma \in \R$, 
		
		$A \in {\cal C}^\infty \Big(\T^{\frak n};  {\cal B}\big({\cal H}^{\sigma + m + \beta - \kappa}, {\cal H}^{\sigma + m + \beta} \big) \Big)$ and for any $s \geq 0$
		$$
		\| \langle \nabla \rangle^\beta A\|_{{\cal M}^s_{\sigma - m, \sigma + m}} \lesssim_\beta \| A \|_{{\cal C}^s(\T^{\frak n};  {\cal B}({\cal H}^{\s + m + \beta - \kappa}, {\cal H}^{\s + m + \beta}))}\,. 
		$$
\end{lemma}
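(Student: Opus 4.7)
The plan is to derive both items as direct corollaries of Lemma \ref{lemma legame tra norme}, with the indices chosen so that the threshold $\eta > d/2$ in that lemma corresponds exactly to the thresholds $\kappa > 2m + \tfrac{d}{2}$ and $\kappa > 2m + 2\beta + \tfrac{d}{2}$ in the two items. The only preliminary point is to check the stated mapping properties: since $A(\vphi) \in OPS^{-\kappa}$, by the standard continuity theorem for pseudo-differential operators of negative order, $A(\vphi) \in {\cal B}({\cal H}^{r},{\cal H}^{r+\kappa})$ for every $r \in \R$, with operator norm controlled by a finite number of seminorms of the symbol; this gives $A \in {\cal C}^\infty(\T^{\frak n}; {\cal B}({\cal H}^{\sigma+m-\kappa},{\cal H}^{\sigma+m}))$ in (i), and analogously $A \in {\cal C}^\infty(\T^{\frak n}; {\cal B}({\cal H}^{\sigma+m+\beta-\kappa},{\cal H}^{\sigma+m+\beta}))$ in (ii).

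For item (i), I would set
\[
\eta := \kappa - 2m,
\]
so that by hypothesis $\eta > \tfrac{d}{2}$, and note the identity $\sigma - m - \eta = \sigma + m - \kappa$. Applying Lemma \ref{lemma legame tra norme}(i) with $\sigma_1 = \sigma-m$, $\sigma_2 = \sigma+m$ then yields
\[
\|A\|_{{\cal M}^s_{\sigma - m, \sigma + m}} \lesssim \|A\|_{{\cal C}^s(\T^{\frak n};  {\cal B}({\cal H}^{\sigma - m - \eta}, {\cal H}^{\sigma + m}))} = \|A\|_{{\cal C}^s(\T^{\frak n};  {\cal B}({\cal H}^{\sigma + m - \kappa}, {\cal H}^{\sigma + m}))},
\]
which is the desired bound.

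For item (ii), I would apply Lemma \ref{lemma legame tra norme}(ii) in the same way, this time choosing
\[
\eta := \kappa - 2m - 2\beta,
\]
which satisfies $\eta > \tfrac{d}{2}$ by the strengthened hypothesis $\kappa > 2m+2\beta + \tfrac{d}{2}$. With $\sigma_1 = \sigma - m$ and $\sigma_2 = \sigma + m$, one checks $\sigma_1 - \beta - \eta = \sigma + m + \beta - \kappa$ and $\sigma_2 + \beta = \sigma + m + \beta$, so the lemma gives exactly
\[
\|\langle \nabla\rangle^\beta A\|_{{\cal M}^s_{\sigma - m, \sigma + m}} \lesssim_\beta \|A\|_{{\cal C}^s(\T^{\frak n};  {\cal B}({\cal H}^{\sigma + m + \beta - \kappa}, {\cal H}^{\sigma + m + \beta}))}.
\]

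There is no real obstacle here: the statement is pure bookkeeping of Sobolev indices, and the main point is the algebraic identity between the hypothesis on $\kappa$ and the requirement $\eta > d/2$ in the auxiliary lemma. If anything, the only care needed is to confirm that the $\vphi$-regularity passes through; this is immediate since all constants coming from Lemma \ref{lemma legame tra norme} are uniform in $\vphi$ and the ${\cal C}^s$-norm on the right-hand side already encodes the $\vphi$-regularity.
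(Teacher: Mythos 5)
Your proposal is correct and coincides with the paper's own proof: both items are obtained by applying Lemma \ref{lemma legame tra norme}, with exactly the same choices $\sigma_1=\sigma-m$, $\sigma_2=\sigma+m$ and $\eta=\kappa-2m$ (respectively $\eta=\kappa-2m-2\beta$). The only addition is your explicit check of the mapping property via boundedness of pseudo-differential operators of order $-\kappa$, which the paper leaves implicit.
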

\begin{proof}
	The statement $(i)$ follows by applying Lemma \ref{lemma legame tra norme}-$(i)$ with $\sigma_1= \s -m,\ \sigma_2=\s +m,\ \eta=\kappa-2m$. 
	
	\noindent
	The statement $(ii)$ follows by applying Lemma \ref{lemma legame tra norme}-$(ii)$ with $\sigma_1= \s -m,\ \sigma_2=\s +m,\ \eta=\kappa-2m - 2 \beta$.
	\end{proof}
\begin{lemma} \label{lemma decay operatore smooth}
	Let $\s \in \R$, $\kappa \geq 0$, $ P(\lambda) \in {\cal B}^{HS}({\cal H}^\sigma, {\cal H}^{\sigma + \kappa})$, $\lambda \in \Omega_o \subseteq \R^{\frak n + d}$. Then $\forall  j \in \Z^{d}$ its matrix elements $P_j^{j}$ satisfy
	$$
	|P_j^{j}|\leq \|P\|^{HS}_{\s, \s + \kappa}  \langle j \rangle ^{-\kappa}, \quad |P_j^{j}|^{\rm Lip}\leq \|P\|^{HS, \rm Lip}_{\s, \s + \kappa}  \langle j \rangle ^{-\kappa}.
	$$
\end{lemma}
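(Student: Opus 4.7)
The statement is an elementary consequence of the definition of the Hilbert–Schmidt norm. The plan is to single out one term of the defining sum.

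Recall that
$$
\bigl(\|P\|^{HS}_{\sigma, \sigma + \kappa}\bigr)^2 = \sum_{k, k' \in \Z^d} \langle k \rangle^{2(\sigma + \kappa)} |P_k^{k'}|^2 \langle k' \rangle^{- 2 \sigma}.
$$
Since every summand is non-negative, for any fixed $j \in \Z^d$ I would drop all the terms with $(k, k') \neq (j, j)$ and keep only the diagonal contribution, obtaining
$$
\langle j \rangle^{2(\sigma + \kappa)} |P_j^j|^2 \langle j \rangle^{- 2 \sigma} = \langle j \rangle^{2 \kappa} |P_j^j|^2 \leq \bigl(\|P\|^{HS}_{\sigma, \sigma + \kappa}\bigr)^2,
$$
which after taking square roots gives exactly the first inequality $|P_j^j| \leq \|P\|^{HS}_{\sigma, \sigma + \kappa} \langle j \rangle^{- \kappa}$.

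For the Lipschitz bound, I would apply the sup–inequality just proved to the incremental operator $P(\lambda_1) - P(\lambda_2)$ in place of $P$, observing that the matrix element of a difference is the difference of matrix elements, namely $(P(\lambda_1) - P(\lambda_2))_j^j = P_j^j(\lambda_1) - P_j^j(\lambda_2)$. This yields
$$
|P_j^j(\lambda_1) - P_j^j(\lambda_2)| \leq \|P(\lambda_1) - P(\lambda_2)\|^{HS}_{\sigma, \sigma + \kappa}\, \langle j \rangle^{- \kappa}.
$$
Dividing by $|\lambda_1 - \lambda_2|$ and taking the supremum over $\lambda_1 \neq \lambda_2$ in $\Omega_o$ gives the Lipschitz seminorm estimate; combining with the sup estimate then produces the full $|\cdot|^{\rm Lip}$ bound in the form stated (according to Definition \ref{norma generale Lip gamma} with $\gamma = 1$).

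There is no real obstacle here: both inequalities are purely algebraic consequences of the definition of the Hilbert–Schmidt norm. The only point worth flagging is that the argument does not use smoothing at all — it merely exploits that the diagonal matrix element is one term in a sum of squares — so the constant in the inequality is $1$, with no dependence on $s$, $\sigma$ or the dimension $d$.
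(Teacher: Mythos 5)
Your proof is correct and follows exactly the paper's argument: the paper likewise bounds $\langle j \rangle^{2\kappa}|P_j^j|^2$ by the full Hilbert--Schmidt sum defining $\bigl(\|P\|^{HS}_{\sigma,\sigma+\kappa}\bigr)^2$ and obtains the Lipschitz bound by applying the same inequality to $P(\lambda_1)-P(\lambda_2)$ and dividing by $|\lambda_1-\lambda_2|$. Nothing to add.
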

\begin{proof}
For any $j \in \Z^d$, one has 
\begin{align*}
(\| P \|^{HS}_{\sigma, \sigma + \kappa})^2 & = \sum_{k, k' \in \Z^d} \langle k \rangle^{2 (\sigma + \kappa)} |P_k^{k'}| \langle k' \rangle^{- 2 \sigma} \geq \langle j \rangle^{2 (\sigma + \kappa)} |P_j^j| \langle j \rangle^{- 2 \sigma} = \langle j \rangle^{\kappa} |P_j^j|\,.
\end{align*}
The Lipschitz estimate follows arguing similarly by estimating $\frac{\| P(\lambda_1) - P(\lambda_2 \|_{\sigma, \sigma + \kappa}^{HS}}{|\lambda_1 - \lambda_2|}$ for any $\lambda_1, \lambda_2 \in \Omega_o$, $\lambda_1 \neq \lambda_2$. 
\end{proof}

\def\cprime{$'$}


\end{document}

\end{document}